\newtheorem{theorem}{Theorem}[section]
\newtheorem{proposition}[theorem]{Proposition}
\newtheorem{corollary}[theorem]{Corollary}
\newtheorem{lemma}[theorem]{Lemma}
\theoremstyle{definition}
\newtheorem{example}[theorem]{Example}
\theoremstyle{remark}
\newtheorem{remark}[theorem]{Remark}
\numberwithin{equation}{section}
\newcommand{\cod}{{\mathcal C}}
\newcommand{\calp}{{\mathcal P}}
\newcommand{\calx}{{\mathcal X}}
\newcommand{\cald}{{\mathcal D}}
\newcommand{\call}{{\mathcal L}}
\newcommand{\calh}{{\mathcal H}}
\newcommand{\cals}{{\mathcal S}}
\newcommand{\fq}{\mathbb{F}_q}
\newcommand{\fqd}{\mathbb{F}_{q^2}}
\newcommand{\nn}{\mathbb{N}}
\newcommand{\pp}{\mathbb{P}}
\newcommand{\bx}{{\mathbf x}}
\newcommand{\bb}{{\mathbf b}}
 \newcommand{\bc}{{\mathbf c}}
 \newcommand{\be}{{\mathbf e}}
\newcommand{\bh}{{\mathbf h}}
 \newcommand{\bs}{{\mathbf s}}
 \newcommand{\bu}{{\mathbf u}}
 \newcommand{\bv}{{\mathbf v}}
 \newcommand{\bz}{{\mathbf z}}
\newcommand{\bcero}{{\mathbf 0}}
 \newcommand{\bH}{{\mathbf H}}
 \newcommand{\bB}{{\mathbf B}}
 \newcommand{\bD}{{\mathbf D}}
\newcommand{\bS}{{\mathbf S}}
 \newcommand{\sop}{{\rm supp}}
\newcommand{\divi}{{\rm div}}
\newcommand{\gaps}{{\rm Gaps}}
\newcommand{\rank}{{\rm rank}}
\begin{document}

\title[An introduction to Algebraic Geometry codes]{An introduction to Algebraic Geometry codes}
\author{Carlos Munuera}
\address{University of Valladolid}
\curraddr{Avda Salamanca SN, 47014 Valladolid, Castilla, Spain}
\email{cmunuera@arq.uva.es}
\author{Wilson Olaya-Le\'on}
\address{Universidad Industrial de Santander}
\curraddr{Cra 27, Cll 9, AA 678 Bucaramanga, Santander, Colombia.}
\email{wolaya@uis.edu.co}
\subjclass[2000]{Primary: 94B27; Secondary: 14G50}

\begin{abstract}
We present an introduction to the theory of algebraic geometry codes.
Starting from evaluation codes and codes from order and weight functions, special attention is given to one-point codes and, in particular, to the family of Castle codes. 
\end{abstract}

\maketitle

\section{Introduction}

Let $\fq$  be a finite field with $q$ elements. A {\em linear code of length $n$ and dimension $k$} over $\fq$, a $[n,k]$ code for short, is a $k$-dimensional linear space $\cod\subseteq\fq^n$. The {\em minimum distance} of $\cod$ is by definition
\begin{equation*}
d=\min\{ d(\bu,\bv): \bu,\bv\in\cod, \bu\neq\bv\}=\min\{wt(\bu):\bu\in\cod,\bu\neq \bcero\}
\end{equation*}
where $d$ stands for the Hamming distance, $d(\bu,\bv)=\#\{i : u_i\neq v_i\}$, and $wt$ for the Hamming weight, $wt(\bu)=d(\bu,\bcero)$. A \lq\lq good code" is one that optimizes simultaneously the ratios $d/n$ and $k/n$. 

The problem of finding good codes is central to the theory of error correcting codes. For many years coding theorists have addressed this problem by adding more and more algebraic and combinatorial structure to $\cod$. In particular, codes with excellent properties have been obtained by using techniques and resources from algebra and algebraic geometry, the so-called {\em algebraic geometry} codes. Most of these techniques are highly specialized and the study of the obtained codes is very elegant  but in general difficult. Indeed, given such a code, often it is not possible to calculate its exact minimum distance, and sometimes even its dimension. 

In this chapter we present a short introduction to  algebraic geometry codes. We use the order bounds on the minimum distance as a motivation to introduce evaluation and algebraic geometry codes. Then we center our attention on one-point codes, and later on the family of Castle codes. As a result of this orientation we can overview quickly much of the basic theory. However we warn the reader that many important parts and facts have been omitted. For a complete treatment we refer to the excellent texts \cite{HLP} and \cite{StichtenothLibro}. The canonical reference for general error correcting codes is the very complete book \cite{MacSlo} (although it does not contain the theory of AG codes).

\section{The order bounds on the minimum distance}\label{Sect1}

\subsection{Bounds}

As noted above,  computing the true minimum distance $d$ of a linear code $\cod$ is in general a difficult problem (it is an NP-complete problem, see \cite{npcompleto}).  Often we have to settle for an estimate of $d$ based on some available lower bound. And then evaluate the quality of our parameters by comparing them with several upper bounds.    Usually upper bounds are general, valid for all linear codes.  Let us show an important example.

\begin{theorem}[Singleton bound]\label{Singleton}
The parameters $n,k,d$ of a linear code $\cod$ verify $k+d\le n+1$.
\end{theorem}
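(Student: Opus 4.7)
The plan is to use a puncturing (projection) argument that trades weight for dimension. I would choose any subset $T\subseteq\{1,\ldots,n\}$ of $d-1$ coordinates and consider the linear map
\begin{equation*}
\pi:\fq^n\longrightarrow\fq^{n-d+1}
\end{equation*}
that deletes the positions indexed by $T$. The key object to study is the restriction $\pi|_\cod:\cod\to\fq^{n-d+1}$.

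Next, I would show that $\pi|_\cod$ is injective. Suppose $\pi(\bu)=\pi(\bv)$ for $\bu,\bv\in\cod$. Then $\bu-\bv\in\cod$ has support contained in $T$, so $wt(\bu-\bv)\le d-1$. By the definition of the minimum distance (recalled right before the statement, using $d=\min\{wt(\bw):\bw\in\cod,\bw\neq\bcero\}$), this forces $\bu-\bv=\bcero$, i.e.\ $\bu=\bv$. Thus $\pi|_\cod$ is an injective $\fq$-linear map.

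From injectivity I conclude $k=\dim\cod\le\dim\fq^{n-d+1}=n-d+1$, which rearranges to the claim $k+d\le n+1$.

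There is no serious obstacle here; the proof is essentially a one-line observation once the puncturing map is set up. The only point requiring some care is the justification of injectivity via the minimum distance, and the fact that $d-1<n$ so that the target space $\fq^{n-d+1}$ is nontrivial (if $d=n+1-k$ is already at the bound, the argument still goes through; and the degenerate case $d>n$ cannot occur for a nonzero code).
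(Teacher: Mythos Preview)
Your proof is correct and is essentially identical to the paper's own argument: both delete $d-1$ fixed coordinates via a projection $\pi:\cod\to\fq^{n-d+1}$, observe that this map is injective because any nonzero codeword has weight at least $d$, and conclude $k\le n-d+1$. You spell out the injectivity step in a bit more detail, but the approach is the same.
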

\begin{proof}
Let $\pi:\cod\rightarrow\fq^{n-d+1}$ be the projection obtained by deleting $d-1$ fixed coordinates. Since each codeword of $\cod$ has at least $d$ nonzero coordinates, $\pi$ is an injective linear map, hence $\dim(\pi(\cod))=k$ and thus $k\le n-d+1$.  
\end{proof}

Codes reaching equality in the Singleton bound are called {\em maximum distance separable} (or MDS) codes.

Lower bounds on the minimum distance are designed to be applied to some particular families or constructions of codes. Significant examples could be BCH and Goppa bounds (BCH and algebraic geometry codes  respectively).
Besides uniform ones, other interesting lower bounds are of
order type. They are based on obtaining different estimates for different
subsets of codewords. Such a bound is successful if for each subset we can find estimates
better than a uniform bound for all codewords. In this chapter we shall explain two bounds of this type.

\subsection{$\fq$-algebras}

Throughout this chapter, an $\fq$-{\em algebra} will be a commutative ring $R$ with a unit, containing $\fq$ as a subring. Then $R$ is a vector space over $\fq$. The most interesting examples of $\fq$-algebras are the polynomial ring in $m$ variables $\fq[X_1,\dots,X_m]$ and its quotients $\fq[X_1,\dots,X_m]/I$, where $I$ is an ideal. Other important example is $\fq^n$. Since $\fq$ is naturally isomorphic to  $\{(\lambda,\dots,\lambda)  | \lambda \in \fq\}$, it turns out that $\fq^n$ is also an algebra with the coordinate wise product $*$,  
\begin{equation*}
(u_1,\dots,u_n)*(v_1,\dots,v_n)=(u_1v_1,\dots,u_nv_n).
\end{equation*}
Note that $(\lambda,\dots,\lambda)*(u_1,\dots,u_n)=\lambda (u_1,\dots,u_n)$ hence the ring and vector space structures on $\fq^n$ are fully compatible.

\subsection{The Andersen-Geil bound}\label{cotaAndersen-Geil}

Let ${\mathcal B} =\{ {\mathbf b}_1, \dots, {\mathbf b}_n\}$ be a basis of ${\mathbb F}_q^n$. 
We consider the linear codes $\cod_0=(\bcero )$, and for $k=1,\dots,n$,  
\begin{equation*}
\cod_k=\langle {\mathbf b}_1,\dots,{\mathbf b}_k \rangle .
\end{equation*} 
$\cod_k$ is a $[n,k]$ code.
Associated to the chain $\cod_0=(\bcero)\subset\cod_1\subset\cdots\subset\cod_n=\fq^n$, we define the sorting map $\rho_{\mathcal B}:{\mathbb F}_q^n \rightarrow \{0,\dots,n\}$
by $\rho_{\mathcal B}({\mathbf v})=\min \{ r : {\mathbf v} \in \cod_r\}$. 

\begin{lemma}\label{properties} 
Let ${\mathbf v}_1,\dots,{\mathbf v}_m \in {\mathbb F}_q^n$. Then 
\begin{enumerate}
\item $\rho_{\mathcal B}({\mathbf v}_1+\dots+{\mathbf v}_m) \leq \max \{\rho_{\mathcal B}({\mathbf v}_1), \dots, \rho_{\mathcal B}({\mathbf v}_m)\}$. If there exists $j$ such that $\rho_{\mathcal B}({\mathbf v}_i) <\rho_{\mathcal B}({\mathbf v}_j)$ for all $i \neq j$, then equality holds. 
\item If ${\mathbf v} \neq {\mathbf 0}$ then there exist $\lambda_1,\dots,\lambda_{\rho_{\mathcal B}({\mathbf v})}\in\fq$ with $\lambda_{\rho_{\mathcal B}({\mathbf v})}\neq 0$ such that ${\mathbf v}=\lambda_1{\mathbf b}_1+\dots+\lambda_{\rho_{\mathcal B}({\mathbf v})}{\mathbf b}_{\rho_{\mathcal B}({\mathbf v})}$.
\item $\dim (\langle {\mathbf v}_1,\dots,{\mathbf v}_m \rangle ) \geq \# \{\rho_{\mathcal B}({\mathbf v}_1),\dots,\rho_{\mathcal B}({\mathbf v}_m)\}$. 
Conversely, if $D\subseteq {\mathbb F}_q^n$ is a linear subspace of dimension $m$, then  there exists a basis $\{{\mathbf u}_1,\dots,{\mathbf u}_m\}$ of $D$ such that $\# \{\rho_{\mathcal B}({\mathbf u}_1),\dots,\rho_{\mathcal B}({\mathbf u}_m)\}=m$.
\end{enumerate}
\end{lemma}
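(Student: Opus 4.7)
The plan is to prove the three items in the order (2), (1), (3), because (2) supplies a ``leading-coefficient'' expansion that the other parts exploit.

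Part (2) is essentially unpacking the definition: $\rho_{\mathcal B}(\bv)=r$ means $\bv\in\cod_r=\langle\bb_1,\dots,\bb_r\rangle$ but $\bv\notin\cod_{r-1}$, so writing $\bv=\sum_{i=1}^r\lambda_i\bb_i$ is forced, and if $\lambda_r$ were $0$ then $\bv$ would lie in $\cod_{r-1}$, contradicting the minimality in the definition of $\rho_{\mathcal B}(\bv)$.

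For part (1), the inequality follows because each $\cod_k$ is a linear subspace: setting $r=\max_i\rho_{\mathcal B}(\bv_i)$, every $\bv_i$ lies in $\cod_r$ and hence so does the sum. For the equality statement, when one index $j$ strictly dominates, I would write $\bv_1+\dots+\bv_m=\bv_j+\bw$ with $\bw\in\cod_{r-1}$. Expanding $\bv_j=\sum_{k=1}^r\lambda_k\bb_k$ via (2) gives $\lambda_r\neq 0$, while $\bw$ contributes nothing to $\bb_r$ in the basis $\mathcal B$; since $\{\bb_1,\dots,\bb_n\}$ is a basis the coefficient on $\bb_r$ in the sum remains $\lambda_r\neq 0$, so $\rho_{\mathcal B}$ of the sum equals $r$.

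Part (3) has two directions. For $\dim\langle\bv_1,\dots,\bv_m\rangle\ge\#\{\rho_{\mathcal B}(\bv_i)\}$, pick one representative $\bv_{i_s}$ for each distinct (nonzero) $\rho$-value $s$ and use the equality case of (1): any nontrivial $\fq$-combination of these representatives still has a strict maximum among its summands' $\rho$-values, so it is nonzero, giving linear independence. For the converse, I would use the filtration $D_r:=D\cap\cod_r$. Since $\cod_r/\cod_{r-1}$ is one-dimensional, $\dim D_r-\dim D_{r-1}\in\{0,1\}$; because $\dim D_0=0$ and $\dim D_n=m$, exactly $m$ jumps occur. Choosing one vector $\bu\in D_r\setminus D_{r-1}$ at each jump produces $m$ vectors with pairwise distinct $\rho$-values which, by the first direction, are linearly independent and therefore form a basis of $D$.

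The main subtlety I anticipate is the handling of the zero vector in (3): if some $\bv_i=\bcero$ then the $\rho$-value $0$ gets counted but contributes nothing to the span, so strictly speaking the stated inequality needs the convention that zeros are excluded (equivalently, only nonzero $\rho$-values are counted). With that proviso the arguments above go through cleanly; otherwise, one restricts to the nonzero $\bv_i$ before applying the argument.
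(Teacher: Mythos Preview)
Your proof is correct and follows essentially the same route as the paper: linearity of the $\cod_k$'s for (1), the definition of $\rho_{\mathcal B}$ for (2), linear independence of representatives with distinct $\rho$-values for the first half of (3), and the filtration $D_r=D\cap\cod_r$ with its dimension jumps for the converse. The only cosmetic difference is that the paper proves (1) first and deduces (2) from it, whereas you reverse that order; your remark about the zero-vector edge case in (3) is a genuine subtlety that the paper's terse proof simply glosses over.
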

 
\begin{proof}
(1) Both statements follow from the linear structure of our codes. (2) follows from (1).(3) Assume $\#\{\rho_{\mathcal B}({\mathbf v}_1),\dots,\rho_{\mathcal B}({\mathbf v}_m)\} = t$ and $\rho_{\mathcal B}({\mathbf v}_1)<\dots <\rho_{\mathcal B}({\mathbf v}_t)$. If $\lambda_1{\mathbf v}_1+\dots+\lambda_t{\mathbf v}_t=0$ then $0=\rho_{\mathcal B}({\mathbf 0})=\rho_{\mathcal B}(\lambda_1{\mathbf v}_1+\dots+\lambda_t{\mathbf v}_t)=\max \{ \rho_{\mathcal B}({\mathbf v}_i) : \lambda_i \neq 0\}$. 
By (1) this implies $\lambda_1= \dots =\lambda_t=0$. Conversely write $D_i=D\cap C_i$. For all $i=1,\dots,n$, it holds that  $D_i=D_{i-1} \oplus (D\cap \langle {\mathbf b}_i \rangle )$, hence $\dim (D_{i-1})\le \dim (D_i)\le \dim (D_{i-1})+1$ and the last inequality is an equality precisely $m$ times. If $D_i\neq D_{i-1}$, take a vector ${\mathbf u}_i\in D_i\setminus D_{i-1}$. Then $\# \{\rho_{\mathcal B}({\mathbf u}_1),\dots,\rho_{\mathcal B}({\mathbf u}_m)\}=m$ and  $\{{\mathbf u}_1,\dots, {\mathbf u}_m\}$ is a basis of $D$. 
\end{proof}

We consider in $\nn^2$ the partial order $(r,s) \prec (i,j)$ if and only if $r \leq i$, $s \leq j$ and $(r,s) \neq (i,j)$.
A pair of nonzero vectors $(\bu,\bv)$  is called {\em well-behaving} (with respect to the basis ${\mathcal B}$) if for any pair $({\mathbf b}_r,{\mathbf b}_s)$ such that $(r,s) \prec(\rho_{\mathcal B}(\bu),\rho_{\mathcal B}(\bv))$ it holds that $\rho_{\mathcal B}({\mathbf b}_r*{\mathbf b}_s)<\rho_{\mathcal B}(\bu*\bv)$. For $i=1,\dots, n$, define the set
\begin{equation*}
\Lambda_i= \{ {\mathbf b}_j \in {\mathcal B} : ({\mathbf b}_i,{\mathbf b}_j) \mbox{ is well-behaving}\}  .
\end{equation*} 
Let  ${\mathbf v} \in {\mathbb F}_q^n$,  ${\mathbf v} \neq {\mathbf 0}$. According to Lemma \ref{properties} (2), we 
can write $\bv$ as a linear combination ${\mathbf v}=\lambda_1{\mathbf b}_1+\dots+\lambda_{\rho_{\mathcal B}({\mathbf v})}{\mathbf b}_{\rho_{\mathcal B}({\mathbf v})}$ with $\lambda_{\rho_{\mathcal B}({\mathbf v})} \neq 0$. Then, if ${\mathbf b}_j \in \Lambda_{\rho_{\mathcal B}({\mathbf v})}$ we have 
\begin{equation*}
\rho_{\mathcal B}({\mathbf v}*{\mathbf b}_j)=\rho_{\mathcal B} (\sum_{i=1}^{\rho_{\mathcal B}({\mathbf v})} \lambda_i{\mathbf b}_i*{\mathbf b}_j)=\rho_{\mathcal B}({\mathbf b}_{\rho_{\mathcal B}({\mathbf v})}*{\mathbf b}_j).
\end{equation*}

\begin{proposition}\label{agbound}
Let  ${\mathbf v} \in {\mathbb F}_q^n$. If ${\mathbf v} \neq {\mathbf 0}$ then $wt({\mathbf v}) \geq \# \Lambda_{\rho_{\mathcal B}({\mathbf v})}$.
\end{proposition}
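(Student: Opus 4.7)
The plan is to set $r=\rho_{\mathcal B}(\bv)$ and to leverage the identity already derived in the excerpt, namely $\rho_{\mathcal B}(\bv*{\mathbf b}_j)=\rho_{\mathcal B}({\mathbf b}_r*{\mathbf b}_j)$ for every ${\mathbf b}_j \in \Lambda_r$. The strategy is to view multiplication by $\bv$ as a linear map $\mu_{\bv}:\fq^n\to\fq^n$, $\mu_{\bv}(\bw)=\bv*\bw$, whose image has dimension exactly $wt(\bv)$ (its kernel consists of those $\bw$ vanishing on the support of $\bv$, which has codimension $wt(\bv)$). So it suffices to exhibit $\#\Lambda_r$ linearly independent vectors in $\mu_{\bv}(\fq^n)$, and the natural candidates are the vectors $\{\bv*{\mathbf b}_j : {\mathbf b}_j \in \Lambda_r\}$.

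Next I would verify that these candidates have pairwise distinct $\rho_{\mathcal B}$-values. Given ${\mathbf b}_j,{\mathbf b}_{j'}\in \Lambda_r$ with $j<j'$, the pair $(r,j)$ satisfies $(r,j)\prec(r,j')$ in the partial order on $\nn^2$. Since $({\mathbf b}_r,{\mathbf b}_{j'})$ is well-behaving by the definition of $\Lambda_r$, this forces
\begin{equation*}
\rho_{\mathcal B}({\mathbf b}_r*{\mathbf b}_j)<\rho_{\mathcal B}({\mathbf b}_r*{\mathbf b}_{j'}).
\end{equation*}
Combined with the identity displayed just before the proposition, the values $\rho_{\mathcal B}(\bv*{\mathbf b}_j)$ are strictly increasing as $j$ ranges over the indices of vectors in $\Lambda_r$, so in particular they are pairwise distinct.

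To conclude, I would apply Lemma \ref{properties}(3) to the family $\{\bv*{\mathbf b}_j : {\mathbf b}_j \in \Lambda_r\}$: since their $\rho_{\mathcal B}$-images form a set of cardinality $\#\Lambda_r$, their linear span has dimension at least $\#\Lambda_r$. This span lies inside $\mu_{\bv}(\fq^n)$, whose dimension is $wt(\bv)$, so $wt(\bv)\geq \#\Lambda_r$, as required.

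The only delicate point is the pairwise distinctness argument, where one must chase through the definitions of the partial order and of well-behaving pairs to pick the correct comparison; everything else is a packaging of Lemma \ref{properties}(3) and the elementary computation of $\dim\mu_{\bv}(\fq^n)$.
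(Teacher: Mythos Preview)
Your proof is correct and follows essentially the same approach as the paper: both identify $wt(\bv)$ with the dimension of the image of the multiplication map $\bw\mapsto\bv*\bw$, use the identity $\rho_{\mathcal B}(\bv*{\mathbf b}_j)=\rho_{\mathcal B}({\mathbf b}_r*{\mathbf b}_j)$ for ${\mathbf b}_j\in\Lambda_r$, and then invoke Lemma~\ref{properties}(3) to count dimensions via distinct $\rho_{\mathcal B}$-values. The paper's proof compresses all this into a single chain of inequalities and asserts the final equality $\#\{\rho_{\mathcal B}({\mathbf b}_r*{\mathbf b}_j):{\mathbf b}_j\in\Lambda_r\}=\#\Lambda_r$ without justification; your version makes that distinctness argument explicit via the well-behaving property of $({\mathbf b}_r,{\mathbf b}_{j'})$, which is a welcome clarification rather than a different route.
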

\begin{proof}  
Consider the space $V({\mathbf v})= \{ {\mathbf u} \in {\mathbb F}_q^n : \mbox{supp} ({\mathbf u}) \subseteq\mbox{supp}({\mathbf v})\} = \{ {\mathbf u}*{\mathbf v} : {\mathbf u} \in {\mathbb F}_q^n\}$.  Then
$wt({\mathbf v})=\dim (V({\mathbf v})) \geq \dim (\langle {\mathbf v}*{\mathbf b}_1,\dots,{\mathbf v}*{\mathbf b}_n \rangle) \geq \#\{\rho_{\mathcal B}({\mathbf v}*{\mathbf b}_1),\dots, \rho_{\mathcal B}({\mathbf v}*{\mathbf b}_n)\} \geq \#\{\rho_{\mathcal B}({\mathbf v}*{\mathbf b}_j) : {\mathbf b}_ j \in \Lambda_{\rho_{\mathcal B}({\mathbf v})}\}=\#\{\rho_{\mathcal B}({\mathbf b}_{\rho_{\mathcal B}({\mathbf v})}*{\mathbf b}_j) : {\mathbf b}_j \in \Lambda_{\rho_{\mathcal B}({\mathbf v})}\}= \#\Lambda _{\rho_{\mathcal B}({\mathbf v})}$.
\end{proof}

This result directly leads to the following bound.

\begin{theorem}\label{AndGeil-primary} 
For $k=1,\dots,n$, the minimum distance of $\cod_k$ satisfies 
\begin{equation*}
d(\cod_k) \geq  \min \{ \#\Lambda_r : r=1,\dots, k\}.  
\end{equation*}
\end{theorem}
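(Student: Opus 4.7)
The plan is to derive this essentially as an immediate consequence of Proposition \ref{agbound}. Since the minimum distance of a linear code equals the minimum Hamming weight over its nonzero codewords, it suffices to bound $wt(\bv)$ from below for an arbitrary nonzero $\bv \in \cod_k$ by the right-hand side.

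First I would fix a nonzero $\bv \in \cod_k$. By definition of the sorting map $\rho_{\mathcal B}$ and the chain $\cod_0 \subset \cod_1 \subset \cdots \subset \cod_n$, the fact that $\bv \in \cod_k$ means $\rho_{\mathcal B}(\bv) \leq k$, and since $\bv \neq \bcero$ one also has $\rho_{\mathcal B}(\bv) \geq 1$. Hence $\rho_{\mathcal B}(\bv) \in \{1,\dots,k\}$.

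Next I would apply Proposition \ref{agbound} directly to conclude
\begin{equation*}
wt(\bv) \;\geq\; \#\Lambda_{\rho_{\mathcal B}(\bv)} \;\geq\; \min\{\#\Lambda_r : r = 1,\dots,k\},
\end{equation*}
where the second inequality is just because $\rho_{\mathcal B}(\bv)$ is one of the indices appearing in the minimum. Taking the infimum over nonzero $\bv \in \cod_k$ then yields the claimed bound on $d(\cod_k)$.

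There is no real obstacle here: all the work has already been done in Proposition \ref{agbound}. The only thing one needs to verify is the compatibility between the chain defining $\cod_k$ and the sorting map, namely the equivalence $\bv \in \cod_k \iff \rho_{\mathcal B}(\bv) \leq k$, which is immediate from the definition $\rho_{\mathcal B}(\bv) = \min\{r : \bv \in \cod_r\}$.
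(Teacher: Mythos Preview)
Your proof is correct and matches the paper's approach exactly: the paper states that Theorem \ref{AndGeil-primary} follows directly from Proposition \ref{agbound} without giving any further argument, and your write-up simply makes explicit the one-line deduction via $\rho_{\mathcal B}(\bv)\in\{1,\dots,k\}$ for nonzero $\bv\in\cod_k$.
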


The inequality stated in the above theorem is called the {\em Andersen-Geil bound on the minimum distance of the primary code} $\cod_k$, or  {\em order bound with respect to the basis $\mathcal B$ on the minimum distance of the primary code} $\cod_k$.
Note that the sets $\Lambda_r$ depend on the basis $\mathcal B$. So the bound depends on $\mathcal B$ as well.
This bound
can be applied to an arbitrary linear code $\cod$, just by including it into any increasing chain of codes  $\cod_1 \subset\dots\subset \cod_{k-1} \subset \cod \subset \cod_{k+1}\subset\dots \subset \cod_n={\mathbb F}_q^n$. However  the best results are obtained when all the codes in the chain  have been obtained by the same construction.
This is the case of some types of codes arising from algebraic geometry.

A similar bound can be stated for codes $\cod_I=\langle\{{\mathbf b}_i : i\in I\}\rangle$ where $I$ is an arbitrary subset of $\{1,\dots,n\}$ (without changing the order on the basis elements nor the map $\rho$). We leave this generalization as an exercise to the reader (or see \cite{f-rdecoding}).

\subsection{The Feng-Rao bound on the minimum distance of dual codes}\label{ctdul}

Given a linear $[n,k]$ code $\cod$, its {\em dual code} is defined as
\begin{equation*}
\cod^{\perp}=\{\bv\in\fq^n : \bc\cdot\bv =0 \mbox{ for all $\bc\in\cod$}\}
\end{equation*}
where $\cdot$ denotes the usual inner product in $\fq^n$
\begin{equation*}
\bu\cdot\bv=\sum_{i=1}^n u_iv_i.
\end{equation*} 
Then $\cod^{\perp}$ is a linear $[n,n-k]$ code. 
By using similar ideas to those explained in the previous subsection, we can give a bound on the minimum distance of dual codes. 
Let ${\mathcal B} =\{ {\mathbf b}_1, \dots, {\mathbf b}_n\}$ be a basis of ${\mathbb F}_q^n$ and consider the chain of dual codes
\begin{equation*} 
\cod_n^{\perp}=(\bcero)\subset\cod_{n-1}^{\perp}\subset\cdots\subset\cod_0^{\perp}=\fq^n.
\end{equation*}
Given a vector $\bu\in\fq^n$, define the {\em syndromes} of $\bu$
\begin{equation*}
s_1=s_1(\bu)=\bb_1\cdot \bu, \dots, s_n=s_n(\bu)=\bb_n\cdot \bu
\end{equation*} 
or equivalently $\bB\bu^T=\bs^T$, where $\bs=(s_1,\dots,s_n)$ and  $\bB$ is the matrix whose rows are the vectors $\bb_1,\dots,\bb_n$. Then $\bu\in\cod_r^{\perp}\setminus \cod_{r+1}^{\perp}$ if and only if $s_1=\dots=s_r=0$ and $s_{r+1}\neq 0$. Consider also the {\em two dimensional syndromes}
\begin{equation*}
s_{ij}=(\bb_i*\bb_j)\cdot\bu, \; 1\le i,j\le n.
\end{equation*}
Let $\bS$ be the matrix $\bS=(s_{ij})$, $1\le i,j\le n$. Note that this matrix can be written also as $\bS=\bB \bD(\bu) \bB^T$, where $\bD(\bu)$ is the diagonal matrix with $\bu$ in its diagonal. Since $\bB$ has full rank, we have $\rank (\bS)=\rank (\bD(\bu))=wt(\bu)$. 

\begin{lemma}\label{dual1}
Let $\bu\in\cod_r^{\perp}$. 
\begin{enumerate}
\item $s_{ij}=0$ for all $(i,j)$ such that $\rho_{\mathcal B}(\bb_i*\bb_j)\le r$.
\item If $\bu\not\in \cod_{r+1}^{\perp}$ then $s_{ij}\neq 0$ for all $(i,j)$ such that $\rho_{\mathcal B}(\bb_i*\bb_j)= r+1$.
\end{enumerate}
\end{lemma}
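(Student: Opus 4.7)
The plan is to reduce both assertions to the one-dimensional syndrome characterization that the excerpt already records: namely that $\bu\in\cod_r^\perp\setminus\cod_{r+1}^\perp$ is equivalent to $s_1=\dots=s_r=0$ and $s_{r+1}\neq 0$. Once this is combined with the basis expansion given by Lemma \ref{properties}(2), the two-dimensional syndromes collapse to linear combinations of one-dimensional ones and the claims become immediate.

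First I would note that for any vectors $\bv,\bw\in\fq^n$ with $\bv=\mu_1\bb_1+\dots+\mu_t\bb_t$, bilinearity of $\cdot$ gives $\bv\cdot\bw=\sum_{i=1}^t\mu_i(\bb_i\cdot\bw)$. In particular, taking $\bv=\bb_i*\bb_j$ and $\bw=\bu$, whenever $\bb_i*\bb_j$ can be written as a combination of $\bb_1,\dots,\bb_t$, the syndrome $s_{ij}=(\bb_i*\bb_j)\cdot\bu$ is a linear combination of $s_1,\dots,s_t$. This is the only computational observation needed.

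For (1), if $\rho_{\mathcal B}(\bb_i*\bb_j)\le r$, then either $\bb_i*\bb_j=\bcero$ (and $s_{ij}=0$ trivially) or, by Lemma \ref{properties}(2), $\bb_i*\bb_j=\sum_{t=1}^{\rho_{\mathcal B}(\bb_i*\bb_j)}\lambda_t\bb_t$ with $\rho_{\mathcal B}(\bb_i*\bb_j)\le r$. Since $\bu\in\cod_r^\perp$ forces $s_1=\dots=s_r=0$, all terms in $s_{ij}=\sum_t\lambda_t s_t$ vanish.

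For (2), if $\rho_{\mathcal B}(\bb_i*\bb_j)=r+1$, Lemma \ref{properties}(2) gives $\bb_i*\bb_j=\lambda_1\bb_1+\dots+\lambda_{r+1}\bb_{r+1}$ with $\lambda_{r+1}\neq 0$. Using $s_1=\dots=s_r=0$ (because $\bu\in\cod_r^\perp$) and $s_{r+1}\neq 0$ (because $\bu\notin\cod_{r+1}^\perp$), the sum collapses to $s_{ij}=\lambda_{r+1}s_{r+1}\neq 0$. I do not anticipate any real obstacle: everything follows from bilinearity plus the unique expansion provided by Lemma \ref{properties}(2). The only thing to double-check is the convention $\rho_{\mathcal B}(\bcero)=0$ and, in (1), the case $\bb_i*\bb_j=\bcero$, which is handled separately.
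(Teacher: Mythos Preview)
Your proof is correct and follows essentially the same approach as the paper: both expand $\bb_i*\bb_j$ in the basis $\mathcal B$ via Lemma~\ref{properties}(2) and use bilinearity to reduce $s_{ij}$ to a linear combination of the one-dimensional syndromes $s_1,\dots,s_{r}$ (or $s_1,\dots,s_{r+1}$), which are determined by the hypothesis $\bu\in\cod_r^\perp\setminus\cod_{r+1}^\perp$. Your version is slightly more explicit in treating the degenerate case $\bb_i*\bb_j=\bcero$, but otherwise the arguments coincide.
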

\begin{proof}
As $\bu\in\cod_r^{\perp}$ we have $s_1=\dots=s_r=0$. (1) If $\rho_{\mathcal B}(\bb_i*\bb_j)\le r$ then, according to Lemma \ref{properties}(2),  $\bb_i*\bb_j=\lambda_1\bb_1+\dots+\lambda_r\bb_r$ and $s_{ij}=\lambda_1 s_1+\dots+\lambda_r s_r=0$. (2) If $\bu\not\in \cod_{r+1}^{\perp}$ then $s_{r+1}\neq 0$. When $\rho_{\mathcal B}(\bb_i*\bb_j)= r+1$, we have $\bb_i*\bb_j=\lambda_1\bb_1+\dots+\lambda_r\bb_r+\lambda_{r+1}\bb_{r+1}$ with  $\lambda_{r+1}\neq 0$. Then 
$s_{ij}=\lambda_1 s_1+\dots+\lambda_r s_r+\lambda_{r+1}s_{r+1}=\lambda_{r+1}s_{r+1}\neq 0$.
\end{proof}

For $r=0,\dots,n-1$, define the sets
\begin{equation*}
N_r=\{ (i,j) : \mbox{$(\bb_i,\bb_j)$ is well-behaving and $\rho_{\mathcal B}(\bb_i*\bb_j)= r+1$} \}.
\end{equation*}

Let $N_r=\{(i_1,j_1),\dots,(i_t,j_t)\}$. The well-behaving property implies that all $i$'s in this set are distinct. Write $i_1<i_2<\dots<i_t$. By symmetry, $j_t=i_1,\dots,j_{1}=i_t$, hence $j_t<\dots<j_1$. Let $\bS_r$ be the submatrix of $\bS$
\begin{equation*}
\bS_r=
\left[ 
\begin{array}{ccc}
s_{i_1,j_t} & \cdots & s_{i_1,j_1} \\
\vdots & & \vdots \\
s_{i_t,j_t} & \cdots & s_{i_t,j_1} \\	
\end{array}
\right].
\end{equation*}

\begin{lemma}\label{dual2}
If $\bu\in \cod_{r}^{\perp}\setminus\cod_{r+1}^{\perp}$ then $\bS_r$ has full rank.
\end{lemma}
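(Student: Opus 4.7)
The plan is to show that $\bS_r$ is, up to reversing the order of its columns, a lower triangular matrix with nonzero diagonal; equivalently, that $\bS_r$ has zeros strictly above its anti-diagonal and nonzero entries on the anti-diagonal. This clearly suffices for full rank.

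First I will check that the anti-diagonal of $\bS_r$, whose entries are $s_{i_a,j_a}$ for $a=1,\dots,t$, consists of nonzero elements. Since $(i_a,j_a)\in N_r$, we have $\rho_{\mathcal B}(\bb_{i_a}\ast\bb_{j_a})=r+1$, and as $\bu\notin\cod_{r+1}^{\perp}$, Lemma \ref{dual1}(2) gives $s_{i_a,j_a}\neq 0$. This is the easy step.

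The main step is to prove that the entries strictly above the anti-diagonal vanish. The entry of $\bS_r$ in row $a$ and column $t-c+1$ is $s_{i_a,j_c}$, and being above the anti-diagonal corresponds precisely to $a<c$. When $a<c$, the ordering $j_t<\cdots<j_1$ yields $j_c<j_a$, so $(i_a,j_c)\prec (i_a,j_a)$ in the partial order on $\nn^2$. Since $(i_a,j_a)\in N_r$, the pair $(\bb_{i_a},\bb_{j_a})$ is well-behaving, and so the definition of well-behaving applied to $(\bb_{i_a},\bb_{j_c})$ forces
\begin{equation*}
\rho_{\mathcal B}(\bb_{i_a}\ast\bb_{j_c})<\rho_{\mathcal B}(\bb_{i_a}\ast\bb_{j_a})=r+1,
\end{equation*}
i.e., $\rho_{\mathcal B}(\bb_{i_a}\ast\bb_{j_c})\le r$. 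Lemma \ref{dual1}(1) then gives $s_{i_a,j_c}=0$, as required.

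With these two facts, reversing the column order turns $\bS_r$ into a lower triangular matrix whose diagonal entries are the $s_{i_a,j_a}$, all nonzero; hence $\det(\bS_r)\neq 0$ (up to sign) and $\bS_r$ has full rank. The only delicate point is keeping the bookkeeping between the indices $(a,b)$ of matrix positions and the indices $(i_a,j_c)$ of basis elements straight; nothing is required about entries below the anti-diagonal, and in particular the symmetry of $N_r$ under swapping coordinates need not enter the argument.
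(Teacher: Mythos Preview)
Your proof is correct and follows essentially the same approach as the paper: both arguments show that the anti-diagonal entries $s_{i_a,j_a}$ are nonzero via Lemma~\ref{dual1}(2), that the entries strictly above the anti-diagonal vanish via the well-behaving property together with Lemma~\ref{dual1}(1), and conclude that $\det(\bS_r)\neq 0$. Your bookkeeping is a bit more explicit, but the idea is identical.
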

\begin{proof}
Let $(l,m)$ be an entry in the anti-diagonal of $\bS_r$. Then $l=i_h, m=j_h$ for some $h$ and $s_{lm}\neq 0$ by Lemma \ref{dual1}(2). If $(l,m)$ is above the anti-diagonal, then  $l=i_h, m<j_h$, hence $\rho_{\mathcal B}(\bb_l*\bb_m)< \rho_{\mathcal B}(\bb_{i_h}*\bb_{j_h})= r+1$. Thus $s_{lm}=0$ by Lemma \ref{dual1}(1) and $\det(\bS_r)\neq 0$.
\end{proof}

As a consequence of this lemma, if $\bu\in \cod_{r}^{\perp}\setminus\cod_{r+1}^{\perp}$ we have
$wt(\bu)=\rank(\bS)\ge \rank(\bS_r)=\# N_r$. The {\em Feng-Rao} or {\em  dual order bound} on the minimum distance of  $\cod_k^{\perp}$ {\em  with respect to the basis $\mathcal B$} states the following

\begin{theorem}\label{AndGeil-dual} 
For $k=0,1,\dots,n-1$, the minimum distance of $\cod_k^{\perp}$ satisfies 
\begin{equation*}
d(\cod_k^{\perp}) \geq  \min \{ \# N_r : r=k,\dots, n-1\}.
\end{equation*}  
\end{theorem}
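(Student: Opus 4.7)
The plan is to reduce the theorem to a codeword-by-codeword weight estimate, using the rank identity $\rank(\bS)=wt(\bu)$ together with Lemma \ref{dual2}. Concretely, given a nonzero $\bu \in \cod_k^\perp$, I want to locate $\bu$ precisely in the descending chain $\cod_k^\perp \supseteq \cod_{k+1}^\perp \supseteq \dots \supseteq \cod_n^\perp = (\bcero)$ and then invoke Lemma \ref{dual2} at that level.

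First I would fix an arbitrary nonzero $\bu \in \cod_k^\perp$ and observe that, since the chain terminates at $(\bcero)$, there is a unique $r$ with $k \leq r \leq n-1$ such that $\bu \in \cod_r^\perp \setminus \cod_{r+1}^\perp$. Next I would invoke the matrix identity $\bS = \bB \bD(\bu) \bB^T$ established just before Lemma \ref{dual1}: since $\bB$ has full rank, $\rank(\bS) = \rank(\bD(\bu)) = wt(\bu)$. By Lemma \ref{dual2}, the submatrix $\bS_r$ is of full rank $\#N_r$, so $wt(\bu) = \rank(\bS) \geq \rank(\bS_r) = \#N_r \geq \min\{\#N_s : s=k,\dots,n-1\}$.

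Taking the minimum over all nonzero $\bu \in \cod_k^\perp$ then yields the desired bound on $d(\cod_k^\perp)$. The argument is really just a packaging of the two preceding lemmas; there is no genuine obstacle beyond ensuring that the "layer index'' $r$ at which $\bu$ falls out of the chain is guaranteed to lie in the range $\{k,\dots,n-1\}$ over which the minimum is taken, which follows immediately from $\bu \in \cod_k^\perp$ and $\bu \neq \bcero$. The only subtle point worth double-checking is that $\#N_r$ is well-defined independently of $\bu$: this is clear from the definition of $N_r$, which involves only the basis $\mathcal{B}$ and the well-behaving relation, not the codeword.
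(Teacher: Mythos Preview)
Your proposal is correct and follows exactly the paper's approach: the paper derives, immediately after Lemma~\ref{dual2}, that $\bu\in\cod_r^{\perp}\setminus\cod_{r+1}^{\perp}$ implies $wt(\bu)=\rank(\bS)\ge\rank(\bS_r)=\#N_r$, and the theorem is then stated as a direct consequence. Your write-up simply makes explicit the step of locating the layer index $r\in\{k,\dots,n-1\}$ and taking the minimum over nonzero codewords.
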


As in case of primary codes, this bound depends on the choice of the basis $\mathcal B$.

\section{Evaluation codes and order domains}

The theory introduced in the previous section directly leads to the problem of finding basis $\mathcal B$ producing good codes. This subject will  be addressed in this section.

Let $R$ be a $\fq$-vector space and let $\Phi$ be a linear map $\Phi : R\rightarrow\fq^n$. For every linear subspace $L\subseteq R$ we have a linear code
\begin{equation*}
\cod(L)=\Phi(L)
\end{equation*}
and its dual
\begin{equation*}
\cod(L)^{\perp}=\{\bv\in\fq^n : \bc\cdot\bv=0 \; \mbox{ for all $\bc\in\cod(L)$}\}.
\end{equation*}
If we consider a basis $\{f_1,f_2,\dots \}$ of $R$, we get a chain of linear codes, $\cod_r=\langle \Phi(f_1),\dots,\Phi(f_r)\rangle$, $r=1,2,\dots$. When $\Phi$ is surjective, then there exists $r$ such that $\cod_r = \fq^n$, and the order bounds can be applied to obtain  estimates on the minimum distance of these codes.

\subsection{Evaluation codes}

The most interesting case of the above construction arises when $R$ is a set of functions that can be evaluated at points $P_1,\dots,P_n$ belonging to  a  geometrical object $\calx$. Set $\mathcal{P}=\{P_1,\dots,P_n\}$ and let $\Phi=ev_{\calp}:R\rightarrow\fq^n$ defined by $ev_\calp(f)=(f(P_1),\dots,f(P_n))$. The obtained codes are called {\em evaluation codes}. 

\begin{example} [Reed-Muller codes]
To give a concrete example
take the  $\fq$-algebra $R = \fq[X_1, \dots ,X_m]$ and let $\calp$ be the set of all $n=q^m$ points $P_1,\dots ,P_n$ in $\fq^m$. The evaluation map
\begin{equation*}
ev_\calp: \fq[X_1, \dots ,X_m]\rightarrow \fq^n
\end{equation*}
$ev_\calp(f) = (f(P_1),\dots , f(P_n))$, is  linear and verifies  $(f g)(P) = f(P)*g(P)$, so it is a morphisms of $\fq$-algebras. To see surjectivity, given a point $P = (a_1, \dots, a_m)\in \fq^m$,  the polynomial
\begin{equation*}
f_P =\prod_{i=1}^{m}\prod_{\alpha\in\fq, \alpha\neq a_i} (X_i - \alpha)
\end{equation*}
verifies $f_P(P)\neq 0$ and $f_P(Q)=0$ for all $Q\neq P$. Thus the vectors $\{ ev_\calp(f_P): P\in\fq^m\}$ span $\fq^n$.
Consider the basis $\{ f_1,f_2,\dots\}$ of $\fq[X_1,\dots,X_m]$ consisting of all monomials ordered according to a graded order (for example the graded lexicographic order: first compare degrees;  then  apply lexicographic order to break ties). Then we obtain an increasing chain of codes $\cod_1\subset \cod_2\subset\dots$, where
\begin{equation*}
\cod_i=ev_\calp (\langle f_1,\dots,f_i\rangle).
\end{equation*}
Among these codes, particular interest have the ones of the form
$\mathcal{R}\mathcal{M}(r,m)=ev_\calp(\fq[X_1,\dots,X_m]_{(r)})$,
where $\fq[X_1,\dots,X_m]_{(r)}$ stands for the linear space of all polynomials of degree at most $r$. They are called {\em Reed-Muller} codes.  The same construction can be done by considering  homogeneous polynomials and evaluating them at points in the projective space. In this case we obtain the so-called Projective Reed-Muller codes.  
\end{example}

Reed Muller codes are important from both theoretical and practical reasons and much is known about them. For example, in 1972 a Reed-Muller code was used by Mariner 9 to transmit black and white photographs from Mars. 
The case $m=1$  is particularly simple and interesting, so it deserves a special attention.

\begin{example} [Reed-Solomon codes] \label{RSevaluation}
Let $R=\fq[X]$ and consider the basis $\{1,X,X^2,\dots\}$. Let $\calp$ be the set of points in the affine line $\fq$. The obtained evaluation codes, called {\em Reed Solomon} codes, are widely used  (CD players, bar codes, etc.). Their parameters are easy to obtain: as a polynomial of degree $r$ has at most $r$ roots, for $r<n$ the code $ev_\calp(\langle 1,X,\dots,X^r\rangle)$ has length $n=q$, dimension $k=r+1$ and minimum distance $d=n-r$ (it is a MDS code). 
\end{example}

In the above two examples, note that for all $f\in\fq[X_1, \dots ,X_m]$ it holds that $ev_\calp(f^q)=ev_\calp(f)$, hence we can obtain the same codes from the quotient algebra $\fq[X_1, \dots ,X_m]/\langle X_1^q-X_1,\dots,X_m^q-X_m\rangle$.
In general, we can take an ideal $I\subset\fq[X_1, \dots ,X_m]$ and consider $I_q=I+\langle X_1^q-X_1,\dots,X_m^q-X_m\rangle$. Let  $\calp=\{P_1, . . . , P_n\}$ be the set of all rational points in the zero set $V=V(I_q)$. The evaluation map $ev_\calp: R_q=\fq[X_1, \dots,X_m]/I_q\rightarrow \fq^n$ is a vector space isomorphism. For any linear subspace $L\subseteq R_q$ we define the {\em affine variety} code $C(I,L)=ev_\calp(L)$. It is known that every linear code can be obtained in this way. Also algebraic geometry codes from curves, which are the main subject of this chapter, are particular cases of this construction. Affine variety codes were introduced by Fitzgerald and Lax in \cite{FL}, where the reader can find more details.

\subsection{Weight functions and order domains}

In previous examples we have seen how to construct a chain of evaluation codes from an algebra $R$ and an ordered basis of $R$. The better this order, the better will be the results obtained when using the order bounds. We formalize this idea.

Let $\mathbb{N}_0={\mathbb N}\cup\{0\}$. A function $v: R\rightarrow \mathbb{N}_0\cup \{-\infty\}$ is a {\em weight} on $R$ if it verifies the following properties
 
\begin{enumerate}
\item[(W.1)] $v(f) = -\infty$ if and only if $f = 0$;
\item[(W.2)] $v(1) = 0$;
\item[(W.3)] $v(f + g) \le \max\{v(f), v(g)\}$;
\item[(W.4)] $v(fg) = v(f) + v(g)$;
\item[(W.5)] if $v(f) = v(g)$ then there exists an element $\lambda\in\fq^*$ such that $v(f  -\lambda g) < v(f)$.
\end{enumerate}

\begin{remark}\label{remark}
Let $v$ be a weight function on $R$. The following are simple consequences of properties (W.1) to (W.5).\newline
(a) For all $\lambda\in\fq^*$ we have $v(\lambda\lambda^{-1})=v(\lambda)+v(\lambda^{-1})=v(1)=0$. Then $v(\lambda)=0$. Conversely, if $v(f)=0$ then there exists $\lambda\in\fq^*$ such that $v(f-\lambda)=-\infty$ and $f=\lambda\in\fq$. \newline
(b) If $v(f)>v(g)$ then $v(f)=v(-g+(f+g))\le\max\{ v(g), v(f+g)\}=v(f+g)\le v(f)$, hence $v(f + g)=v(f)$. \newline 
(c) $R$ is an integral domain. If $fg=0$ with $g\neq 0$ then $v(1)\le v(g)$. Thus $v(f)\le v(fg)=-\infty$ which implies $v(f)=-\infty$ and so $f=0$.
\end{remark}

A $\fq$-algebra $R$ with a weight function $v$ will be called an {\em order domain}.
Let $H(v)=\{ v(f) : f\in R^*\}=\{v_1,v_2,\dots\}$ be the increasing sequence of all integers appearing as the order of a nonzero element. For each $v_i\in H(v)$ let $f_i\in R$ be such that $v(f_i)=v_i$ and consider the ordered set  $\mathcal{F}=\{ f_1,f_2,\dots\}$.

\begin{proposition}\label{baseorden}
Let $R$ be an order domain with order function $v$ and let  $\mathcal{F}=\{ f_1,f_2,\dots\}$  as above. Then 
\begin{enumerate}
\item $\mathcal{F}$ is a basis of $R$ over $\fq$.
\item If $f=\sum_j \lambda_jf_j$, then $v(f)=\max\{v(f_j): \lambda_j\neq 0\}$.
\end{enumerate}
\end{proposition}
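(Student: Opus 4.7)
The plan is to prove (2) first (as a statement about any finite linear combination in $\mathcal{F}$, whether or not it is a basis), then deduce linear independence as a corollary, and finally prove spanning by a descending induction on the weight.

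First, observe that by (W.2), $0 \in H(v)$, so $v_1 = 0$ and we may take $f_1 = 1$. Next, I will establish a strengthened form of (W.3): if $g_1,\dots,g_m \in R^*$ have pairwise distinct weights, then $v(g_1+\cdots+g_m) = \max\{v(g_1),\dots,v(g_m)\}$. This follows by induction from Remark \ref{remark}(b): if $v(g_m)$ is strictly larger than all the others, the inductive hypothesis gives $v(g_1+\cdots+g_{m-1}) < v(g_m)$, and then Remark \ref{remark}(b) yields $v(g_1+\cdots+g_m) = v(g_m)$.

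Using this, assertion (2) follows immediately. For any $\lambda_j \in \fq^*$ we have $v(\lambda_j f_j) = v(\lambda_j) + v(f_j) = v(f_j) = v_j$ by Remark \ref{remark}(a) and (W.4); since the $v_j$ are pairwise distinct, the strengthened (W.3) gives $v(\sum_j \lambda_j f_j) = \max\{v(f_j) : \lambda_j \neq 0\}$. From this, linear independence of $\mathcal{F}$ is forced: a nontrivial finite linear combination would have weight in $\mathbb{N}_0 \subseteq \{0,1,\dots\}$, hence could not equal $0$ (whose weight is $-\infty$ by (W.1)).

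For spanning, I would proceed by strong induction on $v(f)$, where $f \in R^*$. The base case $v(f) = 0 = v_1$ holds because Remark \ref{remark}(a) gives $f \in \fq^*$, so $f = f \cdot 1 = f \cdot f_1$. For the inductive step, suppose $v(f) = v_i$ with $i \geq 2$. Since $v(f) = v(f_i)$, property (W.5) produces some $\lambda \in \fq^*$ with $v(f - \lambda f_i) < v(f) = v_i$. Either $f - \lambda f_i = 0$, in which case $f = \lambda f_i$ and we are done, or $v(f - \lambda f_i) = v_j$ for some $j < i$; the inductive hypothesis then expresses $f - \lambda f_i$ as a finite $\fq$-linear combination of $\mathcal{F}$, and adding $\lambda f_i$ finishes the step. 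Since $H(v)$ is well-ordered (a subset of $\mathbb{N}_0$), this induction is valid.

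The main obstacle is really just the bookkeeping around the strengthened (W.3) — making sure one invokes Remark \ref{remark}(b) for a unique strict maximum, rather than (W.3) itself, which only gives an inequality. Once that tool is in hand, both (2) and linear independence drop out instantly, and the spanning argument is a clean descending induction driven by property (W.5).
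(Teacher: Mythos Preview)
Your proof is correct and uses essentially the same ingredients as the paper: iterated application of (W.5) for spanning, and Remark~\ref{remark}(b) for the weight formula in (2). The only difference is organizational --- you prove (2) first and derive linear independence from it, whereas the paper states the basis property first and then notes (2) follows from Remark~\ref{remark}(b) --- but the underlying arguments are the same.
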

\begin{proof}
An iterated application of property (W.5) shows that $\mathcal{F}$ is a basis of $R$. (2) follows from Remark \ref{remark} (b).
\end{proof}

\subsection{Semigroups}

A {\em numerical semigroup} is a set $S\subseteq \nn_0$ such that 
 (i) $0\in S$ and 
 (ii) if $a,b\in S$  then $a+b\in S$. 
Our interest on semigroups comes from the following fact, which is a consequence of properties (W.2) and (W.4).

\begin{proposition}
If $R$ is an order domain and $v$ is a weight function on $R$, then $H(v)$ is a numerical semigroup.
\end{proposition}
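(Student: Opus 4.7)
The plan is to verify the two defining axioms of a numerical semigroup directly from the weight function properties, using the fact that the only ingredients really needed are (W.2), (W.4), and the integral domain property already recorded in Remark \ref{remark}(c).

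First I would note that $H(v) \subseteq \nn_0$ by construction: for $f \in R^*$ the value $v(f)$ lies in $\nn_0$ because (W.1) forces $-\infty$ to occur only when $f = 0$. So the statement reduces to checking $0 \in H(v)$ and closure under addition. The first is immediate from (W.2): since $1 \in R^*$ and $v(1) = 0$, we have $0 \in H(v)$.

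For closure under addition, I take $a, b \in H(v)$ and choose $f, g \in R^*$ with $v(f) = a$ and $v(g) = b$. Property (W.4) then yields $v(fg) = v(f) + v(g) = a + b$. The only point worth pausing on — and what I would flag as the one non-cosmetic step — is that $a + b$ lies in $H(v)$ only if $fg \in R^*$, i.e.\ $fg \neq 0$; otherwise (W.4) would give $-\infty = a + b$, forcing an absurdity but not landing us in $H(v)$. This is precisely where Remark \ref{remark}(c) is invoked: $R$ is an integral domain, so $fg \neq 0$. Hence $a + b = v(fg) \in H(v)$, and $H(v)$ is a numerical semigroup.

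There is no real obstacle — the proposition is essentially a bookkeeping consequence of the axioms, and the only subtlety is ensuring that the product $fg$ stays nonzero, which is already handled by the earlier remark.
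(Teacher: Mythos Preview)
Your proof is correct and matches the paper's approach, which simply notes that the result is a consequence of (W.2) and (W.4). One small remark: the detour through Remark \ref{remark}(c) is not needed, since the absurdity you already identified ($-\infty = a+b$ with $a,b \in \nn_0$) combined with (W.1) directly forces $fg \neq 0$.
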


The elements of $S$ will be called {\em pole numbers} or just {\em poles}, while the elements in $\nn_0\setminus S$ will be called {\em gaps}. We shall denote by $\gaps(S)$ the set of gaps of $S$. The number $g=\#\gaps(S)$ is the {\em genus} of $S$. If $S$ has finite genus then the smallest integer $c$ such that $a\in S$ for all $a\ge c$ is the {\em conductor} of $S$. From now on, all the semigroups we consider will be of finite genus.

\begin{lemma}\label{symmetricsem}
The conductor $c$ of a semigroup of genus $g$ verifies $c\le 2g$.
\end{lemma}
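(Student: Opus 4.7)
The plan is to exploit the simple but crucial observation that $c-1$ must itself be a gap, and then use the involution $i\mapsto c-1-i$ on $\{0,1,\dots,c-1\}$ to pair up potential poles with forced gaps.

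First I would dispose of the trivial case $c=0$ (which forces $S=\nn_0$ and $g=0$). Assuming $c\ge 1$, the key preliminary step is to note that $c-1\notin S$: for if $c-1\in S$ then every integer $a\ge c-1$ would lie in $S$ (either $a=c-1$, or $a\ge c$), contradicting the minimality of $c$ in the definition of conductor.

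Next I would show that for every $i\in\{0,1,\dots,c-1\}$ at least one of the two elements $i$ and $c-1-i$ belongs to $\gaps(S)$. Indeed, if both were in $S$, then by closure of $S$ under addition their sum would be $c-1\in S$, contradicting the previous paragraph. Now consider the involution $\sigma:\{0,\dots,c-1\}\to\{0,\dots,c-1\}$ defined by $\sigma(i)=c-1-i$. If $c$ is even, the orbits of $\sigma$ are $c/2$ disjoint two-element sets $\{i,c-1-i\}$, each of which contains at least one gap, so $g\ge c/2$ and thus $c\le 2g$. If $c$ is odd, there are $(c-1)/2$ two-element orbits plus the fixed point $(c-1)/2$; applying the pairing argument to the fixed point (where $i=c-1-i$) forces $(c-1)/2$ to be a gap, so $g\ge (c-1)/2+1=(c+1)/2$, which again yields $c\le 2g-1<2g$.

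There is no real obstacle here; the only subtle point is correctly identifying that $c-1$ is a gap whenever $c\ge 1$, which is what makes the pairing argument work. Once that is in hand, the inequality follows from a straightforward counting of the orbits of $\sigma$.
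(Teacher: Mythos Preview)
Your proof is correct and follows essentially the same idea as the paper's: $c-1$ is a gap, so in each pair $\{i,c-1-i\}$ at least one element is a gap, and counting these pairs against the $g$ gaps yields $c\le 2g$. The paper compresses the orbit analysis into a single line (``there are $c$ such pairs and $g$ gaps''), whereas you spell out the even/odd cases via the involution $\sigma$, but the underlying argument is identical.
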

\begin{proof}
Since $c-1$ is  a gap, given a pair $(a,b)\in\nn_0^2$ with $a+b=c-1$, at least one of these two numbers is also a gap. There are $c$ such pairs and $g$ gaps so we obtain the inequality.
\end{proof}

When  $c=2g$ the semigroup is called {\em symmetric}. Note that for symmetric semigroups, given a pair $(a,b)\in\nn_0^2$ with $a+b=c-1$, exactly one of these two numbers is a gap and the other is a pole. Conversely, this condition ensures that $c=2g$.

From Lemma \ref{symmetricsem},  the interval $[0,2g-1]$ contains $g$ poles and $g$ gaps. If we write $S$ as an increasing enumeration of its elements $S=\{v_1=0<v_2<\dots \}$, then $2g=v_{g+1}$, hence $v_{g+i}=2g+i-1$ for all $i=1,2,\dots$. The first nonzero element of $S$, $v_2$, is the {\em multiplicity} of $S$. It will play an important role in forthcoming sections of this chapter.

A set of generators of $S$ is a set $A=\{a_1,\dots,a_r\}\subset S$ such that any $a\in S$ can be written as a linear combination $a=\lambda_1a_1+\dots+\lambda_ra_r$ with nonnegative integer coefficients. In this case we write $S=\langle a_1,\dots,a_r\rangle$. All semigroups admit a finite set of generators. For example, the Ap\'ery set
\begin{equation*}
A(S)=\{a\in S^* : a-v_2\not\in S^*\} .
\end{equation*}

\begin{example}[Semigroups generated by two elements] \label{semigrupos2} 
Let $a,b\in\nn$, $a<b$. Let $\delta=\gcd(a,b)$. If $\delta\neq 1$ then $S=\langle a,b\rangle\subset \delta \nn_0$ is not of finite genus. Assume $\delta=1$.
From B\'ezout theorem, every integer $m$ can be written as $m=\lambda a+\mu b$. Adding and subtracting $ab$ to both summands if necessary, we can obtain an unique representation of this type with $0\le\mu<a$. Then $m$ is a pole when $\lambda\ge 0$ and a gap when $\lambda<0$. In particular, the largest gap is $c-1=-a+(a-1)b$.  Let us show that the semigroup is symmetric. Suppose the largest gap is the sum of two gaps $-a+(a-1)b=(\lambda_1 a+\mu_1 b)+(\lambda_2 a+\mu_2 b)$ with $\lambda_1,\lambda_2<0$, $0\le\mu_1,\mu_2<a$. Then $(-\lambda_1-\lambda_2-1)a=(\mu_1+\mu_2-a+1)b$. Since $-\lambda_1-\lambda_2-1>0$ we have $a|\mu_1+\mu_2-a+1<a$, a contradiction. Then the semigroup is symmetric and hence $c=2g$. $S$ has genus $g=(a-1)(b-1)/2$.
\end{example}

As a consequence of this example, a semigroup $S$ has finite genus if and only if the greatest common divisor of its nonzero elements is 1. In this case there exist $a,b\in S$ such that $\gcd (a,b)=1$ and  $\langle a, b \rangle \subseteq S$.

The following fact will be used several times in what follows.

\begin{lemma} \label{Ssetminusa}
Let $S$ be a semigroup of finite genus. If $a\in S$ then 
\begin{equation*}
\#(S\setminus (a+S))=a.
\end{equation*}
\end{lemma}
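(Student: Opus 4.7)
My plan is to prove the lemma by partitioning $S$ according to residue classes modulo $a$, essentially computing the Apéry-type set with respect to $a$.

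First, I would fix $a \in S$ and, for each $j \in \{0, 1, \dots, a-1\}$, consider the arithmetic progression $T_j = \{j, j+a, j+2a, \dots\}$. These $a$ sets partition $\mathbb{N}_0$. Because $S$ has finite genus, every sufficiently large integer lies in $S$, so $S \cap T_j$ is nonempty for each $j$. Let $m_j$ denote the smallest element of $S \cap T_j$.

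Next I would show that $S \cap T_j = \{m_j, m_j + a, m_j + 2a, \dots\}$. Inclusion from right to left is immediate from the semigroup property applied to $m_j$ and $a$ (both in $S$). The reverse inclusion follows from minimality of $m_j$, since consecutive elements of $T_j$ differ by exactly $a$, so there is no element of $T_j$ strictly between $m_j + ka$ and $m_j + (k+1)a$.

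From this description, $(a + S) \cap T_j$ equals the image of $S \cap T_j$ under $x \mapsto x + a$ intersected with $T_j$; but since shifting a residue class mod $a$ by $a$ returns the same class, we get $(a + S) \cap T_j = \{m_j + a, m_j + 2a, \dots\}$. Therefore $(S \setminus (a+S)) \cap T_j = \{m_j\}$, a singleton. Summing over the $a$ residue classes gives $\#(S \setminus (a+S)) = a$.

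I don't anticipate a real obstacle here: the only slightly delicate point is verifying that $S \cap T_j \neq \emptyset$ for every $j$, which is where the finite-genus hypothesis enters. Everything else is bookkeeping with the semigroup property.
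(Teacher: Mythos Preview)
Your proof is correct and takes a genuinely different route from the paper's. The paper argues by restricting to the finite interval $[0,a+c)$, where $c$ is the conductor: it sets $U=\{m\in S : m<a+c\}$ and $V=\{a+m : m\in S,\ m<c\}$, observes $V\subseteq U$ and $S\setminus(a+S)=U\setminus V$, and then counts $\#U=(a+c)-g$ and $\#V=c-g$ to obtain the difference $a$. Your approach instead partitions $\mathbb{N}_0$ by residues modulo $a$ and identifies $S\setminus(a+S)$ explicitly as the Ap\'ery-type set $\{m_0,m_1,\dots,m_{a-1}\}$. The paper's counting argument is slightly quicker but introduces the conductor and genus as auxiliary quantities that cancel at the end; your argument never needs them and, as a bonus, actually exhibits which elements lie in $S\setminus(a+S)$ rather than merely counting them.
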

\begin{proof}
Let $c$ be the conductor of $S$ and $m$ an integer. If $m\ge a+c$ then $m\in S$ and $m\in a+S$. Thus $S\setminus (a+S) =U\setminus V$, where $U=\{ m\in S : m<a+c\}$ and $V=\{ a+m :  m \in S, a+m<a+c \}\subseteq U$. Clearly $\# U= a+c-g$ and $\# V=\# \{ m\in S : m<c\}=c-g$, where $g$ is the genus of $S$. Then $\#(S\setminus (a+S))=\# U-\# V=a$.
\end{proof}

\subsection{Codes from weights}\label{weights}

Let $R$ be an order domain over $\fq$ and $v$  a weight function on $R$. Let $H=H(v)=\{v_1,v_2,\dots\}$ be the semigroup of $v$. 
If $\delta=\gcd \{ a : a\in H(v)^*\}=1$ then the weight $v$ is called {\em normal}. Otherwise we define the normalization of $v$ as the weight $v'=v/\delta$.  From now on, all weight functions will be normal. 

For each $v_i\in H$ let $f_i\in R$ be such that $v(f_i)=v_i$. The ordered set  $\mathcal{F}=\{ f_1,f_2,\dots\}$ is a basis of $R$ as a vector space over $\fq$. For $m=-1,0,1,\dots$, we consider the linear subspaces
$$
L(m)=\{  f \in  R : v(f) \le m \}.
$$
Clearly  $L(-1)=(\bcero), L(0)=\fq$ and $\{ f_i : v_i\le m\}$ is a basis of $L(m)$. Then $L(m-1)\subseteq L(m)$ with equality if $m$ is a gap of $H$. Since $v$ is normal,  $H$ has a finite number of gaps, $g$. So equality occurs precisely $g$ times. If $m$ is a pole, then $\dim(L(m))=\dim(L(m-1))+1$.

Let $\Phi:R\rightarrow\fq^n$ be a surjective morphism of $\fq$-algebras (for example, an evaluation map). Then we obtain a chain of linear codes
\begin{equation}\label{chain}
(\bcero)\subseteq C(\Phi,0) \subseteq C(\Phi,1) \subseteq\dots
\end{equation}
where $C(\Phi,m)=\Phi(L(m))$. Since $\Phi$ is surjective, the chain contains exactly $n+1$ distinct codes. We define the {\em dimension set} of this chain as 
$$
M=M(\Phi,v)=\{ m\in \nn_0 : C(\Phi,m-1)\neq C(\Phi,m) \}.
$$
It is clear that  $M$ consists of $n$ integers. Write $M=\{ m_1=0,m_2,\dots,m_n \}$.  The name \lq\lq dimension" set of $M$ is justified by the following  fact.

\begin{proposition}uer
$\dim(C(\Phi,m_k))=k$. If $m$ is a nonnegative integer then 
$\dim(C(\Phi,m))=\max \{ r : m_r\le m\}$.
\end{proposition}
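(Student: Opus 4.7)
The plan is to peel off the chain one step at a time, showing that each increment $C(\Phi,m-1) \subseteq C(\Phi,m)$ raises the dimension by $0$ or $1$, with the value $1$ occurring exactly when $m \in M$. Both claims then follow by a telescoping count.

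First I would establish the key one-step inequality $\dim C(\Phi,m)-\dim C(\Phi,m-1) \in \{0,1\}$. Since $L(m-1) \subseteq L(m)$ and the basis elements of $L(m)$ outside $L(m-1)$ are exactly the $f_i$ with $v(f_i)=m$, of which there is at most one (one if $m$ is a pole of $H$, none if $m$ is a gap, by the discussion preceding the chain), we get $\dim L(m) - \dim L(m-1) \in \{0,1\}$. Applying the linear map $\Phi$ preserves inclusions and can only decrease dimension, so the same bound holds for $C(\Phi,m)$. By the definition of $M$, the inclusion $C(\Phi,m-1) \subseteq C(\Phi,m)$ is strict precisely when $m \in M$, so the dimension jumps by exactly $1$ for $m \in M$ and by $0$ otherwise.

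Second, I would set the base case $C(\Phi,-1) = \Phi(L(-1)) = (\mathbf{0})$, which has dimension $0$, and iterate the one-step analysis to obtain, for every $m \ge -1$,
\begin{equation*}
\dim C(\Phi,m) = \#\{\, r : m_r \le m \,\}.
\end{equation*}
Specialising to $m = m_k$ yields the first assertion $\dim C(\Phi,m_k) = k$, since $m_1 < m_2 < \dots$ are exactly the elements of $M$ and $m_1,\dots,m_k$ are those not exceeding $m_k$. For arbitrary nonnegative $m$, the cardinality on the right equals $\max\{r : m_r \le m\}$ because the $m_r$ form a strictly increasing enumeration of $M$, giving the second assertion.

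There is no real obstacle here beyond keeping the bookkeeping straight; the only subtle point is remembering to start the induction at $C(\Phi,-1)=(\mathbf{0})$ rather than at $C(\Phi,0)$, and observing that the initial jump at $m_1 = 0$ (from $(\mathbf{0})$ to $C(\Phi,0) = \langle \Phi(1)\rangle$) fits the pattern, so the formula holds uniformly for all $k \ge 1$.
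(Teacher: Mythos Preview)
Your proof is correct and follows the same idea as the paper: the chain $C(\Phi,m)$ increases in dimension by $0$ or $1$ at each step, with the jumps occurring precisely at the elements $m_r\in M$, so a telescoping count gives both formulas. The paper's own proof is a two-sentence compression of this, leaning on the observation (made just before the proposition) that the chain contains exactly $n+1$ distinct codes, which already forces each jump to have size one; your version simply makes that step explicit.
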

\begin{proof}
The first statement is clear. For the second one, if $m_k=\max \{ r : m_r\le m\}$ then $C(\Phi,m)=C(\Phi,m_k)$.
\end{proof}

Let $m$ be an integer. If $m\not\in H$ then $L(m)=L(m-1)$ hence $m\not\in M$. If $m\in H$, take $f\in R$ such that $v(f)=m$. Then $L(m)=L(m-1)+\langle f\rangle$ so $C(\Phi,m)=C(\Phi,m-1)+\langle \Phi(f)\rangle$. Then $m\in M$ if and only if $\Phi(f)\not\in C(\Phi,m-1)$. 

The conditions of being $\Phi$ a morphism and $v$ a weight, allow us to give estimates on the parameters of $C(\Phi,m)$. The ideal $(f)$ generated by $f$ is a linear subspace of $R$, hence we can consider the quotient ring $R/(f)$ as a vector space over $\fq$.

\begin{lemma}\label{gop1}
Let $f\in R$ be a nonzero element. If $v$ is a weight function on $R$ then 
$\dim (R/(f))=v(f)$.
\end{lemma}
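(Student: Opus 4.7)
The plan is to exhibit an explicit basis of $R/(f)$ of size $v(f)$, using the basis $\mathcal{F}=\{f_1,f_2,\dots\}$ from Proposition \ref{baseorden} and the combinatorial count in Lemma \ref{Ssetminusa}. Write $a=v(f)$ and $H=H(v)$.

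First I would identify the \emph{pole numbers occupied by the ideal} $(f)$. Since $v(fh)=a+v(h)$ for every nonzero $h\in R$, the set of values $\{v(g):g\in(f), g\neq 0\}$ is exactly $a+H$. The candidate basis of $R/(f)$ is then
\begin{equation*}
B=\{\,f_i\in\mathcal{F} : v_i\in H\setminus(a+H)\,\}.
\end{equation*}
By Lemma \ref{Ssetminusa} we have $\#B=a$, so it suffices to show that $B$ projects to a basis of $R/(f)$.

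Next I would prove that $B$ spans $R/(f)$ modulo $(f)$ by descent on the weight. Given $g\in R$ nonzero, either $v(g)\in H\setminus(a+H)$, in which case property (W.5) lets me subtract a suitable scalar multiple of $f_{v(g)}$ to produce $g'$ with $v(g')<v(g)$; or $v(g)=a+v_i\in a+H$, in which case the element $ff_i\in(f)$ has the same weight as $g$, so by (W.5) there is $\lambda\in\fq^*$ with $v(g-\lambda ff_i)<v(g)$. In either case I strictly decrease the weight while changing $g$ modulo $(f)$ only by elements of $\langle B\rangle+(f)$. Since weights lie in $\nn_0\cup\{-\infty\}$, this process terminates in finitely many steps with the zero element, proving $g\in\langle B\rangle+(f)$.

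For linear independence of $B$ modulo $(f)$, suppose $\sum\lambda_i f_i\in(f)$ with the sum over $f_i\in B$, not all $\lambda_i$ zero. By Proposition \ref{baseorden}(2), the left-hand side has weight $\max\{v_i:\lambda_i\neq 0\}$, which belongs to $H\setminus(a+H)$. But every nonzero element of $(f)$ has weight in $a+H$, a contradiction. Combining spanning and independence, $\dim(R/(f))=\#B=a=v(f)$.

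The only delicate point is the descent argument: I need to confirm that after reducing by (W.5) the new weight is \emph{strictly} smaller and still lives in $H\cup\{-\infty\}$, so that well-ordering of $\nn_0$ forces termination. Everything else is a direct translation between the ring-theoretic content of $(f)$ and the semigroup translation $H\mapsto a+H$, with Lemma \ref{Ssetminusa} supplying the final count.
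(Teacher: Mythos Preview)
Your proof is correct and follows essentially the same route as the paper: identify $\{v(g):0\neq g\in(f)\}=a+H$, take the basis elements indexed by $H\setminus(a+H)$ as representatives for $R/(f)$, argue independence via Proposition~\ref{baseorden}(2), and count with Lemma~\ref{Ssetminusa}. The only cosmetic difference is in the spanning step: the paper avoids your descent by choosing the basis $\{f_i\}$ so that $f_i\in(f)$ whenever $v_i\in a+H$ (this is possible since $ff_j$ has weight $a+v_j$), which makes the projection to $R/(f)$ of the remaining $f_i$'s a spanning set immediately; your iterated use of (W.5) achieves the same thing explicitly.
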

\begin{proof}
The weight $v$ maps the ideal $(f)$ into the set $v(f)+H$. Let $f_1,f_2,\ldots \in R$ be such that $v(f_i)=v_i$ and $f_i\in (f)$ when $v_i\in v(f)+H$. Then $\{ f_1,f_2,\dots\}$ is a basis of $R$  and $\{ f_i+(f) : v_i\not\in v(f)+H\}$ is a basis of $R/(f)$.
Thus $\dim (R/(f))=\#(H\setminus(v(f)+H))=v(f)$ by Lemma \ref{Ssetminusa}.
\end{proof}

\begin{lemma}\label{gop2}
If $m<n$ then $L(m)\cap \ker(\Phi)=(0)$.
\end{lemma}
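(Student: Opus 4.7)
My plan is to argue by contradiction, exploiting the fact that $\Phi$ is an algebra morphism (not just a linear map) so that the kernel is an ideal, which lets me turn the assumption $f\in L(m)\cap\ker(\Phi)$ into a statement about the principal ideal $(f)$.

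Suppose for contradiction that there exists a nonzero $f\in L(m)\cap\ker(\Phi)$, so $v(f)\le m$ and $\Phi(f)=\bcero$. The first step is to observe that the entire principal ideal $(f)$ lies inside $\ker(\Phi)$: for any $g\in R$ we have $\Phi(fg)=\Phi(f)*\Phi(g)=\bcero*\Phi(g)=\bcero$ since $\Phi$ is a morphism of $\fq$-algebras and $*$ is the coordinatewise product on $\fq^n$. Consequently there is a surjective linear map $R/(f)\twoheadrightarrow R/\ker(\Phi)$, which yields the dimension inequality $\dim(R/\ker(\Phi))\le\dim(R/(f))$.

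Next I compute both dimensions. On one hand, because $\Phi$ is surjective onto $\fq^n$, the first isomorphism theorem gives $R/\ker(\Phi)\cong\fq^n$, so $\dim(R/\ker(\Phi))=n$. On the other hand, Lemma \ref{gop1} gives $\dim(R/(f))=v(f)$. Combining the two, $n\le v(f)\le m$, contradicting the hypothesis $m<n$. Hence no such nonzero $f$ exists, proving $L(m)\cap\ker(\Phi)=(0)$.

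There is essentially no obstacle here once one notices that $\ker(\Phi)$ is an ideal, not merely a subspace; all the real content has been packaged into the preceding lemma on $\dim(R/(f))=v(f)$, which in turn rested on Lemma \ref{Ssetminusa}. The only small care needed is to justify that $\Phi$ being an algebra morphism forces $(f)\subseteq\ker(\Phi)$, which is immediate from the compatibility with the coordinatewise product on $\fq^n$.
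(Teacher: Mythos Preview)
Your proof is correct and is essentially identical to the paper's own argument: both observe that $(f)\subseteq\ker(\Phi)$ since $\Phi$ is an algebra morphism, deduce that $\Phi$ factors through $R/(f)$ onto $\fq^n$ so that $\dim(R/(f))\ge n$, and then invoke Lemma~\ref{gop1} to obtain $v(f)\ge n>m$, contradicting $f\in L(m)$. The only cosmetic difference is that you interpose the isomorphism $R/\ker(\Phi)\cong\fq^n$ explicitly, whereas the paper phrases the same step as a surjection $R/(f)\to\fq^n$.
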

\begin{proof}
Let $f\in\ker (\Phi), f\neq 0$. Then $(f)\subseteq\ker(\Phi)$ and we have a well defined, linear, surjective map $\Phi: R/(f)\rightarrow \fq^n$. Thus $\dim (R/(f))\ge n$ and Lemma \ref{gop1} implies $v(f)\ge n$, hence $f\not\in L(m)$.
\end{proof}

\begin{proposition}
Let $m<n$ be a nonnegative integer. 
\begin{enumerate}
\item $m\in M$ if and only if $m\in H$.
\item The code $C(\Phi,m)$ has dimension $k=\dim(L(m))=\max\{ i : v_i\le m\}$ and minimum distance $d\ge n-m$. If the semigroup $H$ has genus $g$ and $2g\le m<n$, then $k=m+1-g$.
\end{enumerate}
\end{proposition}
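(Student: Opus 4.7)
The plan is to handle each of the three assertions in the order they appear, leveraging Lemmas \ref{gop1} and \ref{gop2} as the key tools.

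First I would prove (1). If $m\not\in H$, then no element of $R$ has weight exactly $m$, so $L(m)=L(m-1)$, hence $C(\Phi,m)=C(\Phi,m-1)$ and $m\not\in M$. Conversely, if $m\in H$, pick $f\in R$ with $v(f)=m$; I want to show $\Phi(f)\not\in C(\Phi,m-1)$. If instead $\Phi(f)=\Phi(g)$ for some $g\in L(m-1)$, then $f-g\in L(m)\cap\ker(\Phi)=(\bcero)$ by Lemma \ref{gop2} (which applies since $m<n$), forcing $f=g$ and the contradictory $v(f)\le m-1$.

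For the dimension claim in (2), Lemma \ref{gop2} already tells us that $\Phi$ is injective on $L(m)$, so $\dim C(\Phi,m)=\dim L(m)=\#\{i:v_i\le m\}$; since the $v_i$ are strictly increasing this coincides with $\max\{i:v_i\le m\}$. For the minimum distance, let $\bc=\Phi(f)$ be a nonzero codeword with $v(f)\le m$, and set $w=wt(\bc)$. Because $\Phi$ is a surjective algebra morphism, $\Phi(fR)=\Phi(f)*\Phi(R)=\Phi(f)*\fq^n$, which is precisely the subspace of $\fq^n$ consisting of vectors supported inside $\sop(\bc)$, so has dimension $w$. Hence $\Phi$ induces a well-defined surjective linear map $\bar\Phi:R/(f)\to\fq^n/(\Phi(f)*\fq^n)$, giving $\dim R/(f)\ge n-w$; combined with Lemma \ref{gop1}, which identifies $\dim R/(f)$ with $v(f)$, this yields $w\ge n-v(f)\ge n-m$.

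Finally, when $2g\le m<n$, Lemma \ref{symmetricsem} places all $g$ gaps of $H$ inside $[0,2g-1]\subseteq[0,m]$, so $\#(H\cap[0,m])=(m+1)-g$, and hence $k=m+1-g$. The conceptual step I expect to be the trickiest is the minimum distance argument: one has to recognize $\Phi(f)*\fq^n$ as an ideal of $\fq^n$ of dimension exactly $w$, and then exploit surjectivity of $\Phi$ to transport this into a surjection out of $R/(f)$, at which point Lemma \ref{gop1} closes the loop. The remaining parts are fairly mechanical consequences of the framework already set up.
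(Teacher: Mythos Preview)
Your proof is correct. Parts (1), the dimension formula, and the $2g\le m$ case match the paper's argument essentially verbatim.

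For the minimum distance bound you take a slightly different route. The paper projects $\fq^n$ onto the $n-d$ zero coordinates of $\bc$, observes that $\pi\circ\Phi:R\to\fq^{n-d}$ is again a surjective algebra morphism with $f\in L(m)\cap\ker(\pi\circ\Phi)$, and then invokes Lemma~\ref{gop2} for this shorter target to force $m\ge n-d$. You instead quotient both sides: from the surjection $R/(f)\twoheadrightarrow\fq^n/(\Phi(f)*\fq^n)$ and Lemma~\ref{gop1} you read off $v(f)\ge n-w$ directly. The two arguments are dual to one another (indeed, the proof of Lemma~\ref{gop2} is exactly your quotient argument specialized to $\Phi(f)=\bcero$), so neither is really more general; the paper's version has the advantage of reusing Lemma~\ref{gop2} as a black box, while yours makes the role of the ideal $(f)$ and Lemma~\ref{gop1} more transparent and avoids introducing the auxiliary projection map.
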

\begin{proof}
If $m<n$ then the map $\Phi:L(m)\rightarrow\fq^n$ is injective by Lemma \ref{gop2}.
Then $m\in M$ if and only if $L(m-1)\neq L(m)$ that is if and only if $m\in H$. So $k=\dim (L(m))=\max\{ i : v_i\le m\}$. Since $H$ has $g$ gaps, its conductor verifies $c\le 2g$, so when $m\ge 2g$ we have $m=v_{m+1-g}$ implying $k=m+1-g$. 
Let us prove the statement about the minimum distance $d$. Let $\bc=\Phi(f)$, $f\in L(m)$, be a codeword of $C(\Phi, m)$ with weight $d$. Let $I=\{1,\dots,n\}\setminus \sop (\bc)$ be the set of zero coordinates of $\bc$ and $\pi:\fq^n \rightarrow \fq^{n-d}$ be the projection on the coordinates of $I$. The map $\pi\circ\Phi:R\rightarrow\fq^{n-d}$ is a surjective morphism of algebras. Since $f\in L(m)\cap\ker(\pi\circ\Phi)$, Lemma \ref{gop2} implies $m\ge n-d$ or equivalently $d\ge n-m$. 
\end{proof}

The inequality $d(C(\Phi,m))\ge n-m$ is  the {\em Goppa bound} on the minimum distance of $C(\Phi,m)$.

\subsection{The order and dual order bounds}

Besides the Goppa bound, we can apply to $C(\Phi,m)$ and its dual $C(\Phi,m)^{\perp}$ the bounds of Theorems \ref{AndGeil-primary} and \ref{AndGeil-dual} respect to the sequence $\cod_0=(\bcero) \subset \cod_1 \subset \dots \subset \cod_n$, obtained from the chain of equation \ref{chain}  after deleting repeated codes. Since $\dim(\cod_k)=k$, the map $\rho_{\mathcal B}$ defined in Section \ref{cotaAndersen-Geil} can be written as  
\begin{equation*}
\rho({\mathbf v})=\min \{\dim (C(\Phi,m)) : {\mathbf v} \in C(\Phi,m)\}.
\end{equation*}

\begin{lemma} \label{nle:3.1}
Let $f \in R^*$. 
\begin{enumerate}
\item $\rho(\Phi(f))\leq \dim C(\Phi,v(f))$ with equality if $v(f)\in M$.
\item If $v(f)\not\in M$ then $v(fh)\not\in M$ for all $h\in R^*$.
\end{enumerate} 
\end{lemma}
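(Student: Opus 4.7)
\medskip

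The plan is to use two complementary observations: (i) $m \in M$ exactly when some (equivalently every) $f$ with $v(f)=m$ has $\Phi(f)\notin C(\Phi,m-1)$, and (ii) the kernel of $\Phi$ is an ideal, so we can multiply kernel elements by arbitrary $h\in R$.

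For part (1), note that $f\in L(v(f))$ implies $\Phi(f)\in C(\Phi,v(f))$, which gives the inequality $\rho(\Phi(f))\le\dim C(\Phi,v(f))$ directly from the definition of $\rho$ recalled just before the lemma. For the equality when $v(f)=m_k\in M$, I would argue by contradiction: since the chain $C(\Phi,m_{k-1})=C(\Phi,m_{k-1}+1)=\dots=C(\Phi,m_k-1)$ consists of equal codes (by definition of $M$), if $\Phi(f)\in C(\Phi,m_k-1)$ then adjoining $\langle\Phi(f)\rangle$ would not enlarge $C(\Phi,m_k-1)$, i.e. $C(\Phi,m_k)=C(\Phi,m_k-1)$, contradicting $m_k\in M$. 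Hence the smallest index $r$ with $\Phi(f)\in\cod_r$ is exactly $k=\dim C(\Phi,m_k)=\dim C(\Phi,v(f))$.

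For part (2), I would first upgrade the observation \emph{$m\notin M\Rightarrow\Phi(f)\in C(\Phi,m-1)$ for every $f$ with $v(f)=m$} into a usable form. This is immediate from (W.5): if $f'$ is another element with $v(f')=m$, property (W.5) provides $\lambda\in\fq^*$ with $v(f-\lambda f')<m$, so $\Phi(f)$ and $\lambda\Phi(f')$ differ by an element of $C(\Phi,m-1)$, and $\Phi(f)\in C(\Phi,m-1)$ iff $\Phi(f')\in C(\Phi,m-1)$. So $v(f)=m\notin M$ yields $g\in L(m-1)$ with $\Phi(f)=\Phi(g)$; in particular $f-g\in\ker(\Phi)$, and by Remark~\ref{remark}(b) $v(f-g)=v(f)=m$ (since $v(g)\le m-1<v(f)$).

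The key step is then the kernel being an ideal: for any $h\in R^*$, $(f-g)h\in\ker(\Phi)$, i.e. $\Phi(fh)=\Phi(gh)$. Using (W.4), $v(gh)=v(g)+v(h)<v(f)+v(h)=v(fh)$, so $gh\in L(v(fh)-1)$ and therefore $\Phi(fh)\in C(\Phi,v(fh)-1)$. By the upgraded characterization above applied to the element $fh$, this forces $v(fh)\notin M$. I do not expect any serious obstacle; the only subtle point is making precise the \lq\lq for every $f$ with $v(f)=m$" reformulation of $m\notin M$, which is where (W.5) plays its role.
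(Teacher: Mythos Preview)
Your proof is correct and follows essentially the same route as the paper's. For part~(1) both arguments are identical; for part~(2) both find $g\in L(v(f)-1)$ with $\Phi(f)=\Phi(g)$ and use that $\Phi$ is a ring morphism to get $\Phi(fh)=\Phi(gh)$ with $v(gh)<v(fh)$---the paper then derives a contradiction via part~(1), whereas you conclude directly that $\Phi(fh)\in C(\Phi,v(fh)-1)$, which is marginally cleaner but not a different idea. Your digression on (W.5) is unnecessary, since the equivalence ``$m\in M\iff\Phi(f)\notin C(\Phi,m-1)$ for any $f$ with $v(f)=m$'' is already recorded in the text just before the lemma.
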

\begin{proof}
(1) The first statement is clear since $f\in L (v(f))$ and hence $\Phi(f) \in C(\Phi,v(f))$. If $v(f)\in M$ then $\Phi(f) \in C(\Phi,v(f))\setminus C(\Phi,(v(f)-1))$ and  $\rho(\Phi(f))= \dim C(\Phi,v(f))$.
(2) If $v(f)\not\in M$ then $\Phi(f) \in C(\Phi,v(f)-1)$ hence there exists $\psi\in L(v(f)-1)$ such that $\Phi(f)=\Phi(\psi)$. If $v(fh)\in M$ then $\dim C(\Phi,v(fh))=\rho(\Phi(fh))=\rho(\Phi(\psi h))\le \dim C(\Phi,v(\psi h))$. Since $v(fh)>v(\psi h)$ we get the equality $C(\Phi,v(fh))= C(\Phi,v(\psi h))$,  contradicting our assumption $v(fh)\in M$.
\end{proof} 

The equality $\rho(\Phi(f))= \dim C(\Phi,v(f))$  is not true in general. Let $\bar{H}=H\setminus M$. Lemma \ref{nle:3.1}(2) implies $\bar{H}+H\subseteq \bar{H}$, or equivalently $M\subseteq H\setminus (\bar{H}+H)$.

\begin{corollary}\label{LemaLewittesGeil}
$M\subseteq H\setminus (qH^*+H)$.
\end{corollary}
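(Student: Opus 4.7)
The plan is to use Lemma \ref{nle:3.1}(2) in the form already recorded just before the corollary statement, namely $\bar{H}+H\subseteq \bar{H}$ for $\bar{H}=H\setminus M$. Given that, it will be enough to prove that $qH^*\subseteq \bar{H}$, because then
\begin{equation*}
qH^*+H\subseteq \bar{H}+H\subseteq \bar{H}=H\setminus M,
\end{equation*}
and taking complements inside $H$ yields $M\subseteq H\setminus (qH^*+H)$.

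The core of the argument is therefore to show that $qa\not\in M$ for every $a\in H^*$. I would fix $a\in H^*$ and choose $f\in R$ with $v(f)=a$; by property (W.4), $v(f^q)=qa$, so in particular $qa\in H$ and $qa\in M$ would be equivalent to $\Phi(f^q)\notin C(\Phi,qa-1)$. The key point is that $\Phi$ is a morphism of $\fq$-algebras and that the multiplication on $\fq^n$ is coordinatewise, so
\begin{equation*}
\Phi(f^q)=\Phi(f)*\cdots *\Phi(f),
\end{equation*}
whose $i$-th coordinate is $\Phi(f)_i^{\,q}=\Phi(f)_i$ because every element of $\fq$ satisfies $x^q=x$. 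Hence $\Phi(f^q)=\Phi(f)\in C(\Phi,a)$. Since $a\ge 1$ (as $a\in H^*$) and $q\ge 2$, we have $a\le qa-1$, so $C(\Phi,a)\subseteq C(\Phi,qa-1)$. This forces $C(\Phi,qa)=C(\Phi,qa-1)$, i.e.\ $qa\in \bar{H}$, completing the argument.

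The only nontrivial ingredient is recognizing that the Frobenius relation $x^q=x$ on $\fq$ collapses $\Phi(f^q)$ to $\Phi(f)$; once this is in hand, everything else is a bookkeeping application of Lemma \ref{nle:3.1}(2) and the definition of $M$, so I do not anticipate a real obstacle.
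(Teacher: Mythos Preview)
Your proof is correct and follows essentially the same route as the paper: both arguments exploit the Frobenius identity $\Phi(f^q)=\Phi(f)$ to deduce $qH^*\subseteq\bar H$, then combine this with the previously recorded inclusion $\bar H+H\subseteq\bar H$ (equivalently $M\subseteq H\setminus(\bar H+H)$) to conclude. Your write-up merely makes a few bookkeeping steps (such as $a\le qa-1$) more explicit than the paper does.
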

\begin{proof}
Let $m\in H$, $m\neq 0$, and let $f\in R$ be such that $v(f)=m$. Then $v(f^q)=qv(f)>v(f)$. Since $\Phi$ is a morphism, we have $\Phi(f^q)=\Phi(f)*\cdots*\Phi(f) \mbox{(q times)}=\Phi(f)$. Thus $qm\not\in M$. This proves $qH^*\subseteq\bar{H}$, so $qH^*+H\subseteq\bar{H}+H$ and $M\subseteq H\setminus (\bar{H}+H)\subseteq H\setminus qH^*+H)$.
\end{proof}
  
For $i=1,\dots,n$, let $\phi_i\in R$ be such that $v(\phi_i)=m_i$. The set ${\mathcal B}=\{ \Phi(\phi_1),\dots$, $\Phi(\phi_n) \}$ is a basis of ${\mathbb F}_q^n$ and the sequence of codes $(\cod_k)$ is given by
\begin{equation*}
\cod_k=\langle \Phi(\phi_1),\dots,\Phi(\phi_k) \rangle=C(\Phi,m_k), \; k=1,\dots,n.
\end{equation*}

\begin{proposition}\label{nwbp}
If $v_r+v_s=m_t\in M$ then $v_r,v_s\in M$ and  $(\Phi(f_r),\Phi(f_s))$ is a well-behaving pair with  $\rho(\Phi(f_r)*\Phi(f_s))=t$.
\end{proposition}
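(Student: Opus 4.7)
My plan is to unpack the statement into three claims and dispatch each one using Lemma \ref{nle:3.1}.

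First, I would show $v_r, v_s \in M$ by contradiction. Suppose $v_r \notin M$. Apply Lemma \ref{nle:3.1}(2) with $f = f_r$ and $h = f_s$: this gives $v(f_r f_s) = v_r + v_s \notin M$. But $v_r + v_s = m_t \in M$ by hypothesis, a contradiction. Symmetrically $v_s \in M$. So we may write $v_r = m_a$ and $v_s = m_b$ for some indices $a, b \in \{1,\dots,n\}$.

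Next, I would compute $\rho(\Phi(f_r) \ast \Phi(f_s))$. Since $\Phi$ is a morphism of $\fq$-algebras, $\Phi(f_r) \ast \Phi(f_s) = \Phi(f_r f_s)$, and $v(f_r f_s) = v_r + v_s = m_t \in M$. Lemma \ref{nle:3.1}(1) then yields $\rho(\Phi(f_r f_s)) = \dim C(\Phi, m_t) = t$, which gives the $\rho$-value claimed.

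Finally, I would verify the well-behaving property. By Lemma \ref{nle:3.1}(1) applied to $f_r$ and $f_s$ (whose weights lie in $M$), we have $\rho(\Phi(f_r)) = a$ and $\rho(\Phi(f_s)) = b$. So I must show that for every pair $(i,j) \in \nn^2$ with $(i,j) \prec (a,b)$,
\begin{equation*}
\rho\bigl(\Phi(\phi_i) \ast \Phi(\phi_j)\bigr) < t.
\end{equation*}
Since $\Phi$ is an algebra morphism, $\Phi(\phi_i)\ast\Phi(\phi_j)=\Phi(\phi_i\phi_j)$ with weight $v(\phi_i\phi_j) = m_i + m_j$. By Lemma \ref{nle:3.1}(1), $\rho(\Phi(\phi_i\phi_j)) \leq \dim C(\Phi, m_i+m_j)$. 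Now $(i,j) \prec (a,b)$ forces either $i < a$ (so $m_i < m_a$) or $j < b$ (so $m_j < m_b$), while in either case the other coordinate is at most $m_b$ or $m_a$ respectively; hence $m_i+m_j < m_a+m_b = m_t$. Therefore $C(\Phi, m_i+m_j) \subseteq C(\Phi, m_t - 1)$ and
\begin{equation*}
\dim C(\Phi, m_i+m_j) \leq \dim C(\Phi, m_t - 1) = t-1 < t,
\end{equation*}
which is exactly the required inequality.

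The argument is mostly bookkeeping — the only subtle step is recognizing that the basis $\mathcal B$ of $\fq^n$ is indexed by $M$ (via the $\phi_i$'s), while the basis $\mathcal{F}$ of $R$ is indexed by $H$ (via the $f_i$'s), so the well-behaving condition must be checked against products of the $\phi_i$'s even though the pair in the statement is phrased in terms of the $f_r, f_s$. Lemma \ref{nle:3.1}(1) is precisely the bridge that reconciles the two indexings, so I expect no real obstacle.
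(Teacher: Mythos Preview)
Your proposal is correct and follows essentially the same route as the paper's proof: both use Lemma~\ref{nle:3.1}(2) to get $v_r,v_s\in M$, then identify $f_r,f_s$ with the appropriate $\phi_a,\phi_b$, and finish by observing that $(i,j)\prec(a,b)$ forces $m_i+m_j<m_t$ so that $\rho(\Phi(\phi_i\phi_j))\le\dim C(\Phi,m_i+m_j)<t$. Your write-up is simply more explicit than the paper's about invoking the inequality half of Lemma~\ref{nle:3.1}(1) and about the distinction between the $H$-indexing and the $M$-indexing, which is a good thing.
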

\begin{proof}
If $v_r+v_s\in M$, Lemma \ref{nle:3.1}(2) implies $v_r,v_s\in M$. Write $v_r=m_i, v_s=m_j$, so $\phi_i=f_r$ and $\phi_j=f_s$. We have
\begin{equation*}
\rho(\Phi(\phi_i)*\Phi(\phi_j))=\rho(\Phi(\phi_i\phi_j))=\dim C(\Phi,v(\phi_i\phi_j))=\dim C(\Phi,m_i+m_j).
\end{equation*}
If $(a,b)\prec (i,j)$ then $v(\phi_a\phi_b)<v(\phi_i\phi_j)$ and hence $\rho(\Phi(\phi_a)*\Phi(\phi_b))=\rho(\Phi(\phi_a\phi_b))< \dim C(\Phi,m_i+m_j)=\rho(\Phi(\phi_i)*\Phi(\phi_j)) $.
\end{proof}

From Proposition \ref{nwbp} we can derive a new version of the order bounds on the minimum distance of $C(\Phi,m)$ and $C(\Phi,m)^{\perp}$ as follows. For $r=1,\dots,n$, $s=0,\dots,n-1$, consider the sets 
\begin{equation*}
\Lambda^*_r =\{ (r,j) :  m_r+m_j \in M  \} \; , \;
N^*_s=\{ (i,j) :  m_i+m_j=m_{s+1}  \}
\end{equation*}
Define
\begin{eqnarray*}
d_{ORD}(k)&=&\min \{ \# \Lambda^*_r : r=1,\dots,k  \} \\
d^{\perp}_{ORD}(k)&=&\min \{ \# N^*_s : s=k, \dots,n-1  \}.
\end{eqnarray*} 
By applying the bounds of Theorems \ref{AndGeil-primary} and \ref{AndGeil-dual} with respect to the basis 
$\{\Phi(\phi_1),\dots$, $\Phi(\phi_n)\}$, we get the following result.

\begin{theorem} \label{nasterisk}
For a non-negative integer $m$, we have 
\begin{eqnarray*}
d(C(\Phi,m)) &\geq& d_{ORD}(\dim (C(\Phi,m))) \\
d(C(\Phi,m)^{\perp}) &\geq& d_{ORD}^{\perp}(\dim (C(\Phi,m))).
\end{eqnarray*}
\end{theorem}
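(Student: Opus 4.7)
The plan is to realize Theorems \ref{AndGeil-primary} and \ref{AndGeil-dual} in the present setting by taking $\mathcal{B}=\{\Phi(\phi_1),\dots,\Phi(\phi_n)\}$ as the chosen basis of $\fq^n$. First I would check that $\mathcal{B}$ is indeed a basis and that the associated chain agrees with the one we already have: since $v(\phi_r)=m_r\in M$, the element $\Phi(\phi_r)$ lies in $C(\Phi,m_r)\setminus C(\Phi,m_r-1)$, and by Lemma \ref{nle:3.1}(1) we get $\rho(\Phi(\phi_r))=\dim C(\Phi,m_r)=r$. Thus the sorting map $\rho_{\mathcal{B}}$ coincides with the map $\rho$ defined in this section, and $\cod_k=\langle\Phi(\phi_1),\dots,\Phi(\phi_k)\rangle=C(\Phi,m_k)$.

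Next I would translate the sets $\Lambda_r$ and $N_s$ of Section \ref{cotaAndersen-Geil} into $\Lambda_r^*$ and $N_s^*$. Given $r$ and $j$ with $m_r+m_j\in M$, Proposition \ref{nwbp} says exactly that the pair $(\Phi(\phi_r),\Phi(\phi_j))$ is well-behaving, hence $\Phi(\phi_j)\in\Lambda_r$; this yields the inclusion (after obvious identification) $\Lambda_r^*\subseteq\Lambda_r$, so $\#\Lambda_r\ge\#\Lambda_r^*$. Analogously, if $(i,j)\in N_s^*$, that is $m_i+m_j=m_{s+1}\in M$, Proposition \ref{nwbp} gives both that $(\Phi(\phi_i),\Phi(\phi_j))$ is well-behaving and that $\rho(\Phi(\phi_i)*\Phi(\phi_j))=s+1$, whence $(i,j)\in N_s$ and $\#N_s\ge\#N_s^*$.

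With these inclusions, the theorem follows. For any non-negative integer $m$ set $k=\dim C(\Phi,m)$, so $C(\Phi,m)=C(\Phi,m_k)=\cod_k$. Applying Theorem \ref{AndGeil-primary} to $\mathcal{B}$ gives
\begin{equation*}
d(C(\Phi,m))=d(\cod_k)\ge\min\{\#\Lambda_r : r=1,\dots,k\}\ge\min\{\#\Lambda_r^* : r=1,\dots,k\}=d_{ORD}(k),
\end{equation*}
and Theorem \ref{AndGeil-dual} gives the parallel inequality
\begin{equation*}
d(C(\Phi,m)^\perp)=d(\cod_k^\perp)\ge\min\{\#N_s : s=k,\dots,n-1\}\ge\min\{\#N_s^* : s=k,\dots,n-1\}=d_{ORD}^\perp(k).
\end{equation*}

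The proof is essentially a dictionary translation, so the only substantive step is the verification of the well-behaving property, and that is already done in Proposition \ref{nwbp}; I expect the main obstacle to be purely notational, namely keeping straight the three indices in play (the pole order $m_r$, its position $r$ in the dimension set $M$, and the index $s$ used for the dual chain) and ensuring that $\rho$ and $\rho_{\mathcal{B}}$ really coincide on the basis elements so that the abstract order bounds apply verbatim.
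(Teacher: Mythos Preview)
Your proposal is correct and follows exactly the approach the paper takes: the paper's entire ``proof'' is the sentence immediately preceding the theorem, namely ``By applying the bounds of Theorems \ref{AndGeil-primary} and \ref{AndGeil-dual} with respect to the basis $\{\Phi(\phi_1),\dots,\Phi(\phi_n)\}$, we get the following result,'' relying implicitly on Proposition \ref{nwbp} for the well-behaving property. You have simply unpacked this sentence carefully, making explicit the inclusions $\Lambda_r^*\subseteq\Lambda_r$ and $N_s^*\subseteq N_s$ and the identification $\rho=\rho_{\mathcal B}$, which is exactly what is intended.
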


The inequalities stated in this theorem are the {\em order} (or {\em Feng-Rao}) bounds on the minimum distances of the primary code $C(\Phi,m)$ and its dual $C(\Phi,m)^{\perp}$, respectively. They do not depend on the basis $\mathcal{B}$ but only on the dimension set $M$.

\subsection{Bibliographical notes}

Order domains and evaluation codes were introduced and studied by  T. H\o holdt, J.H. van Lint and R. Pellikaan, \cite{HLP}. 
The purpose was to simplify the theory of algebraic geometry codes and to formulate the order bound on the minimum distance in this language. This bound was first suggested by G.L. Feng and T.N.T. Rao in  \cite{FengRao} for the duals of one-point algebraic geometry codes. 
At the same time, R. Matsumoto and S. Miura independently developed many of the
same ideas for duals of one-point codes. They also formulated the Feng-Rao bound for any linear code defined by means of its parity check matrix, \cite{Miura}.
Another generalization to all linear codes
described by means of generator matrices, was given by Andersen and Geil, \cite{AG}.
That paper is primarily devoted to linear codes, but also the cases of codes from
order domains and affine variety codes
are treated. This is the bound we have stated in Theorem \ref{AndGeil-primary}. 
Many works have been devoted to study the relations between these bounds and to generalize them, see \cite{f-rdecoding} and the references therein.

Our presentation of order domains follows closely  \cite{HLP}.  In our exposition we have limited ourselves to consider weights $v$ whose semigroup $H(v)$ is a sub-semigroup of $\mathbb{N}_0$. 
If more general semigroups are allowed (for example,  subsemigroups of ${\mathbb N}_0^r$ for some $r$),  then the family of obtained codes is very enlarged. See \cite{affine,f-rdecoding}.

\section{Codes from Algebraic Geometry}

Some of the most interesting examples of evaluation codes are obtained from algebraic curves. This section is devoted to developing a basic introduction to  algebraic geometry codes.

\subsection{Algebraic curves}

It is not our intention here to explain the theory of algebraic curves, which can be found in many excellent books (eg.  \cite{fulton,HLP, StichtenothLibro}). Therefore we assume a certain familiarity of the reader with algebraic geometry and  we simply recall the basic ingredients we need to cook our codes.

An {\em algebraic curve} $\calx$ over $\fq$ is an absolutely irreducible algebraic variety of dimension one over $\fq$.
The set of rational points of $\calx$ is denoted $\calx(\fq)$. Algebraic geometry codes will be obtained through evaluation of rational functions of $\calx$ at (some)  points in $\calx(\fq)$, so we always refer to curves with $\calx(\fq)\neq\emptyset$.
Let $\fq(\calx)$ be the field of rational functions of $\calx$. Among all curves having $\fq(\calx)$ as a function field, there is (up to isomorphism) one nonsingular projective curve. We shall use this one for our code construction. Thus, in what follows, the word {\em curve} means an algebraic, projective, absolutely irreducible, nonsingular curve (although we eventually use singular plane models of such a curve for our computations).  

Points on $\calx$ correspond to valuation rings in its function field. Given a function $f\neq 0$, the {\em order} of $f$ at a point $P$ of $\calx$ is the integer $v_P(f)$, where $v_P$ is the discrete valuation corresponding to the valuation ring of $P$. If $v_P(f)<0$ then $P$ is a {\em pole} and if  $v_P(f)>0$ then $P$ is a {\em zero} of $f$. The divisor of $f$ is $\divi(f)=\sum_{P\in\calx} v_P(f)P$.

Given a rational divisor $G$ of $\calx$, we consider the vector space of functions having zeros and poles specified by $G$
\begin{equation*}
\call(G)=\{  f\in\fq(\calx) : \divi(f)+G\ge 0  \} \cup \{0\}.
\end{equation*}
The dimension of this space is denoted by $\ell(G)$.  Riemann-Roch theorem states that there is a constant $g$ (the {\em genus} of $\calx$) such that $\ell(G)=\deg(G)+1-g+\ell(W-G)$, where $W$ is a canonical divisor. Since  canonical divisors have degree $2g-2$,  it holds that $\ell(G)=\deg(G)+1-g$ when $\deg(G)>2g-2$.

Two divisors $G$ and $G'$ are {\em linearly equivalent}, denoted $G\sim G'$, if there is  rational function $\phi$ with $\mbox{div}(\phi)=G-G'$. In this case $\call(G)$ and $\call(G')$ are isomorphic via the map $f\mapsto \phi f$.

The gonality of the curve $\calx$ over $\fq$ is the smallest degree $\gamma$ of a non-constant morphism from $\calx$ to the projective line. Equivalently  $\gamma$ is the smallest degree of a
rational divisor $G$ such that $\ell(G) > 1$. More generally, the gonality sequence of $\calx$, $GS({\mathcal{X}})=\{\gamma_i : i=1,2,\dots\}$,  is defined by
\begin{equation*}
\gamma_i=\min\{ \deg(G) : \ell(G)\ge i \} .
\end{equation*}
Then $\gamma_1=0$ and $\gamma_2$ is the usual gonality. Since $\ell(G)\le \deg(G)+1$ when $\deg(G)\ge 0$, we have $\gamma_i\ge i-1$.  Conversely, from Riemann-Roch theorem it follows that  $\gamma_i\le i-1+g$ with equality for $i>g$.
The gonality sequence $GS({\mathcal{X}})$ verifies a symmetry property (similar to the symmetry property for semigroups):  for  every integer $r$, it holds that $r\in GS({\mathcal{X}})$ if and only if $2g-1-r\not\in GS({\mathcal{X}})$, cf. \cite{MTtrellis}. In general, computing $GS({\mathcal{X}})$ is a difficult task but for plane curves this sequence is entirely known and  depends only on the degree of $\calx$, see \cite{gonalityplane}.

\subsection{Algebraic geometry codes}

Let $\calx$ be a curve of genus $g$ over $\fq$ and let $\calp=\{P_1,\dots,P_n\}$ be a set of $n$ distinct rational points on $\calx$. Let $G$ be a rational divisor of nonnegative degree and support disjoint from $D=P_1+\dots+P_n$. The {\em algebraic geometry} code (or AG code) $C(\calx,D,G)$ is the image of the evaluation map
\begin{equation*}
ev_\calp :\call(G)\rightarrow\fq^n  \hspace*{1cm} ev_\calp(f)=(f(P_1),\dots,f(P_n)).
\end{equation*} 
$ev_\calp$ is a linear map whose kernel is $\call(G-D)$. The dimension of this kernel $a=\ell(G-D)$ is the {\em abundance} of $C(\calx,D,G)$.  In particular, if $\deg (G)<n$ then $a=0$ and hence $C(\calx,D,G)\cong \call(G)$. The parameters of this code are as follows.

\begin{theorem}\label{parametrosag}
The code $C(\calx,D,G)$ has dimension $k=\ell(G)-\ell(G-D)$ and minimum distance $d\ge n-\deg(G)+\gamma_{a+1}$. In particular, when $2g-2<\deg(G)<n$, then $k=\deg(G)+1-g$ and $d\ge n-\deg(G)$.
\end{theorem}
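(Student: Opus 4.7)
The plan is to prove the dimension formula by rank--nullity, and then obtain the minimum distance bound by relating the zero divisor of a codeword's defining function to the gonality sequence $GS(\calx)$.

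For the dimension, I would begin with the observation already made before the theorem: $ev_{\calp} : \call(G) \to \fq^n$ is $\fq$-linear with kernel $\call(G-D)$. So $k = \dim_{\fq} \mbox{Im}(ev_{\calp}) = \ell(G) - \ell(G-D)$ follows immediately. The \emph{in particular} statement then drops out: if $\deg(G) < n$ then $\deg(G-D) < 0$, which forces $\ell(G-D) = 0$ (no nonzero function can have effective divisor), so $a = 0$ and $k = \ell(G)$; and if $\deg(G) > 2g-2$ as well, Riemann--Roch gives $\ell(G) = \deg(G) + 1 - g$.

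For the minimum distance, let $\bc = ev_{\calp}(f)$ be a nonzero codeword of weight $w$, with $f \in \call(G)$. Let $D' = \sum_{i : f(P_i) = 0} P_i$, a sum of $n-w$ distinct points from the support of $D$. Since $f$ vanishes at each point of $D'$, the inequality $\divi(f) + G \geq D'$ holds, which means $f \in \call(G - D')$. Crucially, $f \notin \call(G-D)$, because otherwise $\bc = ev_{\calp}(f) = \bcero$. So $\call(G-D')$ strictly contains the $a$-dimensional space $\call(G-D)$ (note $\call(G-D) \subseteq \call(G-D')$ since $D' \leq D$), giving $\ell(G - D') \geq a + 1$. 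By the definition of the gonality sequence, $\deg(G - D') \geq \gamma_{a+1}$, so $\deg(G) - (n - w) \geq \gamma_{a+1}$, yielding $w \geq n - \deg(G) + \gamma_{a+1}$, as required. The classical Goppa bound in the \emph{in particular} case is the special instance $\gamma_1 = 0$.

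The main conceptual step (rather than an obstacle, since each piece is short) is recognizing that the strict inclusion $\call(G-D) \subsetneq \call(G-D')$ is the bridge between the weight of a codeword and the gonality sequence; once this inclusion is in place, the rest is bookkeeping with degrees. The only routine care needed is checking that $\call(G-D) \subseteq \call(G-D')$ (immediate from $D' \leq D$, since $\divi(h) + G - D \geq 0$ implies $\divi(h) + G - D' \geq D - D' \geq 0$) and the standard fact $\deg(H) \geq \gamma_i$ whenever $\ell(H) \geq i$, which is the definition of $\gamma_i$.
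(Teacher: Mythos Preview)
Your proof is correct and follows essentially the same approach as the paper: both compute the dimension via rank--nullity with kernel $\call(G-D)$, and both bound the minimum distance by taking a codeword $\bc=ev_\calp(f)$ of minimum weight, forming the divisor $D'\le D$ of its zero coordinates, observing $f\in\call(G-D')\setminus\call(G-D)$ so that $\ell(G-D')\ge a+1$, and invoking the definition of $\gamma_{a+1}$. Your write-up is in fact slightly more explicit than the paper's in checking the inclusion $\call(G-D)\subseteq\call(G-D')$ and justifying $\divi(f)+G\ge D'$.
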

\begin{proof}
The statements about the dimension follow from the definition of $C(\calx,D,G)$ and the Riemann-Roch theorem. To see the bound on the minimum distance, let $\bc$ be a codeword of weight $d>0$.  Let $D'\le D$ be the divisor obtained as the sum of points in $\calp$ corresponding to the $n-d$ zero coordinates of $\bc$. There exist a function $f\in\call(G-D')\setminus \call(G-D)$ such that $\bc=ev_\calp (f)$. Then $\ell(G-D')\ge \ell(G-D)+1= a+1$ hence, by definition of gonality sequence, $\gamma_{a+1}\le \deg(G-D')= \deg(G)- (n-d)$.
\end{proof}

The weaker bound $d\ge d_G(C(\calx,D,G))=n-\deg (G)$ is often called the {\em Goppa bound} on the minimum distance.
Note that it is similar to the bound on the minimum distance of Reed-Solomon codes seen in Example \ref{RSevaluation} and the Goppa bound for codes coming from order domains.  The bound on $d$ stated in Theorem \ref{parametrosag}, $d\ge n-\deg(G)+\gamma_{a+1}$, is sometimes referred as the {\em improved Goppa bound}. 

\begin{proposition}\label{Goppaigual} 
$d(C({\mathcal{X}},D, G))=n-\deg(G)$  if an only if there exists a divisor $D', 0\le D'\le D$ such that $G\sim D'$. 
\end{proposition}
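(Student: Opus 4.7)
The plan is to prove the two implications separately, in each case using the Goppa bound $d \geq n - \deg(G)$ from Theorem \ref{parametrosag} as the floor and producing (or recognizing) a codeword that attains it.

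For the $(\Leftarrow)$ direction, assume $G \sim D'$ with $0 \leq D' \leq D$, and pick $\phi \in \fq(\calx)^*$ with $\divi(\phi) = D' - G$. Then $\divi(\phi) + G = D' \geq 0$, so $\phi \in \call(G)$. Because the support of $G$ is disjoint from $\calp$, for each $P_i \in \calp$ we have $v_{P_i}(\phi) = v_{P_i}(D') \in \{0,1\}$, so $\phi(P_i) = 0$ precisely for the $P_i$ appearing in $D'$. Thus the codeword $ev_\calp(\phi)$ has exactly $\deg(D') = \deg(G)$ zero coordinates, hence weight $n - \deg(G)$. Combined with the Goppa bound this forces equality.

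For the $(\Rightarrow)$ direction, assume $d(C(\calx,D,G)) = n - \deg(G)$ and pick a nonzero $f \in \call(G)$ with $wt(ev_\calp(f)) = n - \deg(G)$. Set $D' = \sum_{P_i \in \calp,\, f(P_i) = 0} P_i$, which satisfies $0 \leq D' \leq D$ and $\deg(D') = \deg(G)$ by counting zero coordinates. The goal is to show $\divi(f) = D' - G$, which gives $G \sim D'$. First verify $f \in \call(G - D')$: at each $P_i$ in the support of $D'$, $v_{P_i}(f) \geq 1 = v_{P_i}(D')$; at points in the support of $G$, $v_P(D') = 0$ and the condition $\divi(f) + G \geq 0$ already suffices; elsewhere both $G$ and $D'$ have valuation zero. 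So $\divi(f) + G - D' \geq 0$, and since this nonnegative divisor has degree $0 + \deg(G) - \deg(D') = 0$, it must be the zero divisor, yielding $\divi(f) = D' - G$.

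The only real bookkeeping obstacle is the valuation accounting at points of $\calp$ versus the support of $G$: in both directions we exploit the hypothesis that these supports are disjoint, so that the evaluation behavior of the chosen function is governed entirely by $D'$. Once this is set up, the degree-zero trick ($A \geq 0$ and $\deg(A) = 0$ imply $A = 0$) closes the argument cleanly.
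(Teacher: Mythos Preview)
Your proof is correct and follows essentially the same route as the paper's. The paper argues more tersely by referring back to the proof of Theorem~\ref{parametrosag}: a minimum-weight codeword gives a divisor $D'$ with $0\le D'\le D$, $\deg(D')=\deg(G)$, and $\ell(G-D')>0$, and then the degree-zero observation forces $G\sim D'$; your version unpacks exactly this argument explicitly in both directions.
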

     
\begin{proof} 
As in the proof of Theorem \ref{parametrosag}, $d=n-\deg(G)$ if an only if there exists a divisor $D', 0\le D'\le D$ such  that     
$\ell(G-D')>0$. Since $G$ and $D'$ have the same degree, this happens if and only if $G\sim D'$.
\end{proof}

From Theorem \ref{parametrosag}, the parameters of $C(\calx,D,G)$ verify $k+d\ge \ell(G)-\deg(G)+n$. According to Riemann-Roch theorem,  a simple computation shows that this inequality implies
\begin{equation}\label{Singdefect}
n+1-g\le k+d \le n+1
\end{equation}
where the right-hand inequality is the  Singleton bound. The number $n+1-k-d$ is the {\em Singleton defect} of $C(\calx,D,G)$. Recall that $n+1-k-d\le g$ and that codes of Singleton defect 0 are MDS.

\begin{example}
Take $\calx=\pp^1$ the projective line over $\fq$. Let $Q$ be the point at infinity and $\calp$ the set of $n=q$ affine points. Then $C(\pp^1,D,mQ)$, $1\le m\le q$, is precisely the Reed-Solomon code of dimension $k=m+1$. Since $g=0$, it is a MDS code.  
\end{example}

Thus AG codes can be seen as generalizations of RS codes: instead of the projective line $\pp^1$, consider an arbitrary curve $\calx$ over $\fq$. Note that
Reed-Solomon codes have excellent parameters $k$ and $d$,  but too small length (consider the case $ q = 2 $).
According to the Hasse-Weil bound, cf. \cite{StichtenothLibro}, we have
\begin{equation*}
|\# \calx(\fq)- (q+1)| \le 2g \sqrt{q}
\end{equation*}
hence longer codes can be obtained by using curves of higher genus, although then the Singleton defect increases.  From equation \ref{Singdefect}, the  relative parameters verify
\begin{equation*}
\frac{k}{n}+\frac{d}{n}\ge 1-\frac{g}{n}
\end{equation*}
so one way to get better codes from curves of high genus is to take $n$ large with respect to $g$. This strategy requires curves with many points respect to its  genus. 

\begin{example}[Codes on the Klein Quartic]\label{quartic}
Let us consider the curve $\calx$ defined over $\mathbb{F}_8$ by the projective equation $X^3Y+Y^3Z+Z^3X=0$. $\calx$ is called the {\em Klein quartic}. It is a nonsingular plane curve, hence its genus is 3 by  Pl\"ucker's formula. A direct inspection shows that $\calx$ has 24 rational points, which is the maximum possible number allowed by the Serre's improvement on the Hasse-Weil bound,
\begin{equation*}
|\# \calx(\fq)- (q+1)| \le g \lfloor 2\sqrt{q}\rfloor.
\end{equation*}
Consider the points $Q_0=(1:0:0), Q_1=(0:1:0), Q_2=(0:0:1)\in\calx(\mathbb{F}_8)$ and the divisor $G=m(Q_0+Q_1+Q_2)$, for $m=2,\dots,6$. Let $\calp$ be the set of 21 rational points different from $Q_1,Q_2,Q_3$ and let $D$ be the sum of all these points. The algebraic geometry code $C(\calx,D,G)$ was first studied in \cite{klein}. According to Theorem \ref{parametrosag} it has dimension $k=3m-2$ and minimum distance $d\ge 21-3m$. Note that for other values of $m$ the parameters of the obtained codes are much more difficult to estimate (try it!). For $m=3,4$, no codes are known improving these parameters, see \cite{mint}. Take, for example, $m=4$. Then $\ell(4(Q_0+Q_1+Q_2))=10$. The following ten functions
\begin{equation*}
\frac{X^3}{T},\frac{X^2Y}{T},\frac{X^2Z}{T},\frac{XY^2}{T},\frac{XYZ}{T}, 
\frac{XZ^2}{T},\frac{Y^3}{T},\frac{Y^2Z}{T},\frac{YZ^2}{T},\frac{Z^3}{T},
\end{equation*}
where $T=XYZ$,
belong to $\call(4(Q_0+Q_1+Q_2))$ and are linearly independent, hence they form a basis of $\call(4(Q_0+Q_1+Q_2))$. A generator matrix of $C(\calx,D,4(Q_0+Q_1+Q_2))$ is obtained by evaluating these functions at all points of $\calp$. 
\end{example}

\subsection{Isometric codes}

An {\em isometry} of $\fq^n$ is a linear map $l:\fq^n\rightarrow\fq^n$ leaving the Hamming metric invariant, $d(\bu,\bv)=d(l(\bu),l(\bv))$. Thus an isometry is an isomorphism. Two codes $\cod,\cod'$ of length $n$ are {\em isometric} if there is an isometry $l$ such that $l(\cod)=\cod'$. Clearly isometric codes have equal parameters $n,k,d$ and similar properties.

Let $\bx=(x_1,\dots,x_n)$ be a $n$-tuple of nonzero elements of $\fq^n$ and $\sigma\in{\mathcal S}_n$, the symmetric group on $n$ elements. The maps $\bx:\bv\mapsto\bx*\bv$ and $\sigma:\bv\mapsto (v_{\sigma(1)},\dots,v_{\sigma(n)})$ are isometries. Conversely,  it can be proved (and it is left as an exercise to the reader) that any isometry $l$ can be written as $l=\bx\circ\sigma$, where $\bx\in(\fq^*)^n$ and $\sigma\in {\mathcal S}_n$.

\begin{proposition}\label{isometriccodes}
Let $\sigma\in {\mathcal S}_n$ and  $D_{\sigma}=P_{\sigma(1)}+\dots+P_{\sigma(n)}$.
Let $G,G'$ be two rational divisors such that $\sop(G)\cap\calp=\sop(G')\cap\calp=\emptyset$. If $G\sim G'$ then the codes $C(\calx,D,G)$ and $C(\calx,D_{\sigma},G')$ are isometric.
\end{proposition}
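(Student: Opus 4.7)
The plan is to build the isometry explicitly by combining the permutation $\sigma$ with componentwise multiplication by the values at $P_1,\dots,P_n$ of a function realizing the linear equivalence $G\sim G'$.

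First, I would unpack the equivalence $G\sim G'$: there is a rational function $\phi\in\fq(\calx)^*$ with $\divi(\phi)=G-G'$, and the map $f\mapsto \phi f$ gives a linear isomorphism $\call(G)\to\call(G')$. Since $\sop(G)\cap\calp=\sop(G')\cap\calp=\emptyset$, the divisor $G-G'=\divi(\phi)$ has no contribution at any $P_i$, so $v_{P_i}(\phi)=0$ for $i=1,\dots,n$. Therefore $\phi(P_i)\in\fq^*$ is a well-defined nonzero scalar for each $i$.

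Next I would define the candidate isometry. Set $\bx=(\phi(P_{\sigma(1)}),\dots,\phi(P_{\sigma(n)}))\in(\fq^*)^n$ and let $l:\fq^n\to\fq^n$ be $l(\bv)=\bx*(v_{\sigma(1)},\dots,v_{\sigma(n)})$. Permuting coordinates preserves the Hamming weight, and componentwise multiplication by a vector with all entries nonzero preserves the support, so $l$ is an $\fq$-linear isometry. Checking $l(C(\calx,D,G))=C(\calx,D_\sigma,G')$ is then a direct calculation: for $f\in\call(G)$,
\begin{equation*}
l(ev_\calp(f))=\bigl(\phi(P_{\sigma(i)})\,f(P_{\sigma(i)})\bigr)_{i=1}^n=\bigl((\phi f)(P_{\sigma(i)})\bigr)_{i=1}^n=ev_{D_\sigma}(\phi f),
\end{equation*}
and $\phi f\in\call(G')$ since $\divi(\phi f)+G'=\divi(f)+G\ge 0$. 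As $f\mapsto \phi f$ is a bijection $\call(G)\to\call(G')$, this shows $l$ carries $C(\calx,D,G)$ onto $C(\calx,D_\sigma,G')$.

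There is no real obstacle here; the only thing to be careful about is the bookkeeping of where $\sigma$ acts. The key conceptual point — that the disjointness of $\sop(G)$ and $\sop(G')$ from $\calp$ forces $\phi$ to be regular and nonzero at each $P_i$, so that multiplication by $\phi(P_i)$ is a legitimate coordinatewise scaling — is what makes the construction work, and I would emphasize this when writing up.
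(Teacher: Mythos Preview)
Your proof is correct and follows essentially the same approach as the paper: realize the isometry as a permutation composed with coordinatewise multiplication by the values of a function $\phi$ with $\divi(\phi)=G-G'$. You are more explicit than the paper in checking that $\phi(P_i)\in\fq^*$ and in tracking the direction of the bijection $f\mapsto\phi f$ between $\call(G)$ and $\call(G')$.
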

\begin{proof}
If $G\sim G'$ then there exists a rational function $\phi$ such that $G-G'=\mbox{div}(\phi)$ and $\call(G)=\{\phi f : f\in \call(G')\}$. Thus  $C(\calx,D,G)=ev_{\calp}(\phi)*C(\calx,D,G')=ev_{\calp}(\phi)*\sigma^{-1}(C(\calx,D_\sigma,G'))$.
\end{proof}

A converse of Proposition \ref{isometriccodes} is also true under some supplementary conditions on $n$, see \cite{MunPel}.

\subsection{Duality}

The dual of an algebraic geometry code is again an AG code.

\begin{theorem}\label{dualcode}
There exists a differential form $\omega$ with simple poles and residue 1 at every point $P_i\in\calp$. If $W$ is the divisor of $\omega$, then
\begin{equation*}
C(\calx,D,G)^{\perp}=C(\calx,D,D+W-G).
\end{equation*}
\end{theorem}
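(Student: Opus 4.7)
The plan is to split the proof into three parts: first, establish the existence of the differential $\omega$; second, prove the inclusion $C(\calx,D,D+W-G) \subseteq C(\calx,D,G)^{\perp}$ by means of the residue theorem; and third, promote the inclusion to equality by a Riemann--Roch dimension count.

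For the existence, at each point $P_i$ I would pick a local parameter $t_i$ so that the local differential $dt_i/t_i$ has a simple pole with residue $1$ at $P_i$. Strong approximation for rational differentials on $\calx$ then produces a single global rational differential $\omega$ realizing the prescribed local behavior at $P_1,\dots,P_n$ simultaneously, i.e.\ $v_{P_i}(\omega)=-1$ and $\mathrm{Res}_{P_i}(\omega)=1$ for every $i$. Note that by the residue theorem $\omega$ must have additional poles off $\calp$ whenever $n\not\equiv 0\pmod{\mathrm{char}(\fq)}$, but their location will not matter in what follows, since only $v_{P_i}(W)$ enters the argument.

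For the inclusion I would take $f\in\call(G)$ and $h\in\call(D+W-G)$ and analyze the differential $fh\omega$. Using $\sop(G)\cap\calp=\emptyset$ together with $\divi(f)\ge -G$, $\divi(h)\ge G-D-W$ and $\divi(\omega)=W$, a term-by-term check yields $\divi(fh\omega)\ge -D$. Hence $fh\omega$ is regular outside $\calp$ and has at worst a simple pole at each $P_i$. Since $f$ and $h$ are both regular at $P_i$, the residue there is $f(P_i)h(P_i)\mathrm{Res}_{P_i}(\omega)=f(P_i)h(P_i)$, so the residue theorem yields
\begin{equation*}
ev_\calp(f)\cdot ev_\calp(h)=\sum_{i=1}^n f(P_i)h(P_i)=0,
\end{equation*}
which is exactly the containment claimed.

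For equality, Theorem \ref{parametrosag} gives $\dim C(\calx,D,G)^{\perp}=n-\ell(G)+\ell(G-D)$ and similarly $\dim C(\calx,D,D+W-G)=\ell(D+W-G)-\ell(W-G)$. Since $\deg(W)=2g-2$, applying Riemann--Roch separately to $\ell(D+W-G)$ and to $\ell(W-G)$ makes both the genus $g$ and $\deg(G)$ cancel, leaving $n-\ell(G)+\ell(G-D)$ on the nose. Hence the dimensions agree and the two codes coincide. The only genuinely nontrivial step is the production of $\omega$; once that is in hand, the residue theorem and Riemann--Roch do the rest almost mechanically.
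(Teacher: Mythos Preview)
Your proof is correct and follows essentially the same route as the paper's sketch: existence of $\omega$ (the paper invokes independence of valuations, you invoke strong approximation---these are the same device), orthogonality via the residue theorem applied to $fh\omega$, and equality by a Riemann--Roch dimension count. One small quibble: the aside that ``only $v_{P_i}(W)$ enters the argument'' is misleading, since the full divisor $W$ appears both in the definition of $\call(D+W-G)$ and in the inequality $\divi(h)\ge G-D-W$ that you use to conclude $\divi(fh\omega)\ge -D$; what is true (and presumably what you mean) is that the argument goes through for any $\omega$ with the prescribed residues at the $P_i$, wherever its remaining poles happen to land.
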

\begin{proof}
(Sketch) The existence of such form $\omega$ is guaranteed by the independence of valuations, see \cite{StichtenothLibro}, Chapter I. The map $\call(D+W-G)\rightarrow \Omega(G-D)$, $\phi\mapsto \phi\omega$ is a  well defined isomorphism of vector spaces. Furthermore 
\begin{equation*}
\phi(P_i)=\phi(P_i) \mbox{res}_{P_i}(\omega)=\mbox{res}_{P_i}(\phi \omega)
\end{equation*}
where $\mbox{res}_{P}(\eta)$ denotes the residue at $P$ of the differential form $\eta$. Let $\bu\in C(\calx,D,G)$, $\bv\in C(\calx,D,D+W-G)$ and write $\bu=ev_{\calp}(f), \bv=ev_{\calp}(\phi)$. Then
\begin{equation*}
\bu\cdot\bv=\sum_{i=1}^n f(P_i) \phi(P_i)=\sum_{i=1}^n f(P_i) \mbox{res}_{P_i}(\phi \omega)=
\sum_{i=1}^n \mbox{res}_{P_i}(f\phi \omega).
\end{equation*}
Since $\mbox{div}(f)\ge -G$ and $\mbox{div}(\phi \omega)\ge G-D$, we have $\mbox{div}(f\phi\omega)\ge -D$, so $f\phi\omega$ has no poles outside $\mbox{sop}(D)$. Then
\begin{equation*}
\sum_{i=1}^n \mbox{res}_{P_i}(f\phi \omega)=\sum_{P\in\calx} \mbox{res}_{P}(f\phi \omega)=0
\end{equation*}
where the right-hand equality follows from the Residue theorem (\cite{StichtenothLibro}, Corollary IV.3.3). Finally, since 
$\mbox{dim}(C(\calx,D,G)) +\mbox{dim}(C(\calx,D,D+W-G))=n$, we get the result.
\end{proof}

\subsection{One-point codes and Weierstrass semigroups}

If $G$ is a multiple of a single rational point $Q$ of $\calx$ and $\calp$ is the set of rational points on $\calx$ different from $Q$,  then the code $C(\calx,D,mQ)$ is called {\em one-point}. These codes are, in general, 
easier to study than the others.

The space $\call(mQ)$ is the set of rational functions with poles only at $Q$ of order at most $m$.
The set of rational functions with poles only at $Q$
\begin{equation*}
\call(\infty Q)=\bigcup_{m=0}^{\infty} \call (mQ)
\end{equation*}
is an $\fq$-algebra. The evaluation map $ev_\calp$ is thus a morphism of $\fq$-algebras.   As the dimension of $C(\calx,D,(n+2g-1)Q)$ is $k=l((n+2g-1)Q)-l((n+2g-1)Q-D)=n$, we have $C(\calx,D,(n+2g-1)Q)=\fq^n$ and $ev_\calp$ is surjective. On the other hand, from the properties of valuations it follows that $-v_Q$ is a weight function on $\call(\infty Q)$ and  this algebra becomes an order domain. So the theory developed in Section \ref{weights} can be applied. In particular, the chain of codes stated in equation \ref{chain}, becomes 
\begin{equation*}
(\bcero)\subseteq  C(\calx,D,0)\subseteq C(\calx,D,Q)\subseteq C(\calx,D,2Q)\subseteq\cdots\subseteq C(\calx,D,mQ)\subseteq\cdots
\end{equation*}
For simplicity we shall write $v$ instead $-v_Q$ and $ev$ instead $ev_\calp$ whenever the point $Q$ and the set $\calp$ are fixed. Also in order to simplify the exposition
\begin{center}
{\em from now on we shall assume $n\ge 2g$}
\end{center}
(otherwise we must distinguish several cases, which makes the exposition very cumbersome).
The  semigroup associated to the weight $v$,
\begin{equation*}
H(v)=\{ v(f) : f\in \call(\infty Q), f\neq 0 \}
\end{equation*}
is now denoted $H(Q)$ and called the {\em Weierstrass semigroup} of $Q$. As it happens for general  weight functions, $m\in H(Q)$ iff $l(mQ)\neq l((m-1)Q)$ (and thus $l(mQ)=l((m-1)Q)+1$). Then, when $m$ is a gap we have $C(\calx,D,mQ)=C(\calx,D,(m-1)Q)$.
From Riemann-Roch theorem it holds that $l(2gQ)=g+1$ hence $H(Q)$ has the same genus $g$ as the curve $\calx$. Since $l((2g-1)Q)=g$, then $H(Q)$ is symmetric when $l((2g-2)Q)=g$, that is when $(2g-2)Q$ is a canonical divisor.

\begin{example}[Hermitian curves] \label{hermitianex1} 
Consider the curve $\calh$ defined over the field $\fqd$ by the affine equation
\begin{equation*}
y^q+y=x^{q+1}.
\end{equation*}
$\calh$ is called the {\em Hermitian curve}. Codes arising from this curve are the most studied among all AG codes. $\calh$ is a nonsingular plane curve, hence its genus is $g=q(q-1)/2$. Let us compute its rational points. $\calh$ has exactly one point at infinity $Q=(0:1:0)$, which is the common pole of $x$ and $y$. The map $\beta\mapsto \beta^q+\beta$ is the trace map from $\fqd$ to $\fq$ and hence it is $\fq$-linear and surjective. Let $\alpha\in\fqd$. Since $\alpha^{q+1}\in\fq$, we deduce that the polynomial $T^q+T=\alpha^{q+1}$ has $q$ different roots $\beta$ in $\fqd$. Then the line $x=\alpha$ intersects $\calh$ at $q$ different affine points, which are rational over $\fqd$. In terms of divisors
\begin{equation*}
\divi(x-\alpha)=\sum_{\beta\in\fqd,  \beta^q+\beta=\alpha^{q+1}} P_{\alpha,\beta} -qQ
\end{equation*}
where $P_{\alpha,\beta}=(\alpha:\beta:1)$.
A similar reasoning proves that when $\beta^q+\beta\neq 0$, we have
\begin{equation*}
\divi(y-\beta)=\sum_{\alpha\in\fqd, \alpha^{q+1}=\beta^q+\beta} P_{\alpha,\beta} -(q+1)Q.
\end{equation*}
In particular, from the first equality and since we have $q^2$ choices for $\alpha$, we deduce that $\calh$ has $q^3$ rational affine points, that is $q^3+1$ rational points in total. Then $\calh$ has the maximum possible number of rational points according to its genus as it achieves the Hasse-Weil upper bound. It is a {\em maximal} curve.

Let us compute the Weierstrass semigroup $H(Q)$. Once the divisors $\divi(x-\alpha)$ and $\divi(y-\beta)$ are known, we deduce that $q$ and $q+1$ are pole numbers, hence $\langle q,q+1\rangle\subseteq H(Q)$. According to Example \ref{semigrupos2}, the semigroup  $\langle q,q+1\rangle$ has genus $g=q(q-1)/2=g(\calh)$. Then we get equality  $H(Q)=\langle q,q+1\rangle$. In particular this semigroup is symmetric.
\end{example}

\begin{example}[Hermitian codes]\label{ExHermitian} 
One-point codes over $\fqd$ coming from Hermitian curves are called {\em Hermitian codes}. Let $Q$ be the point at infinity and $\calp$ be  the set of all $n=q^3$ affine points on $\calh$. Hermitian codes are the AG codes
\begin{equation*}
C(\calh,D,mQ)=ev (\call(mQ))
\end{equation*} 
$m=0,1,2,\dots$.
To describe these codes explicitly we must determine the spaces of rational functions $\call(mQ)$ and $\call(\infty Q)$. The Weierstrass semigroup can be a useful tool to accomplish this task. Write $H(Q)=\{v_1=0,v_2\dots\}$ as an increasing enumeration of its elements. A basis of $\call(\infty Q)$ is a set of functions $\{ f_i : i\in\nn\}$ such that $v(f_i)=v_i$, see Proposition \ref{baseorden}. If $m\in H(Q)$ then $m$ can be written as a linear combination $m=\lambda q+\mu (q+1)$, where $\lambda$ and $\mu$ are nonnegative integers and $\mu<q$. Then $v(x^\lambda y^\mu)=m$. It follows that a basis of $\call(\infty Q)$ is
\begin{equation*}
\{ x^\lambda y^\mu : 0\le \lambda, 0\le\mu<q \}
\end{equation*}
and  a basis of $\call(mQ)$ is
\begin{equation*}
\{ x^\lambda y^\mu : 0\le \lambda, 0\le\mu<q, \lambda q+\mu (q+1)\le m \}.
\end{equation*}
The parameters of these codes can be estimated from the arithmetic of $\calh$. For example, let us show that for small values of $m\in H$, the minimum distance of $C(\calh,D,mQ)$ attains the Goppa bound. Let $\alpha\in \fqd^*$ and let $\alpha_1,\dots\alpha_{q+1}$ be the roots of $T^{q+1}=\alpha^{q+1}$. These roots belong to $\fqd$ and are pairwise distinct, so we can write $\fqd=\{\alpha_1,\dots,\alpha_{q+1},\alpha_{q+2},\dots,\alpha_{q^2}\}$. Let $\beta_1,\dots,\beta_q$  be the roots of $T^q+T=\alpha^{q+1}$. Then for $i>q+1$, the affine points $(\alpha_i,\beta_j)$ are not in $\calh(\fqd)$. Let $\lambda,\mu$ be two integers such that $0\le \lambda<q^2-q, 0\le\mu<q$ and let $m=\lambda q+\mu(q+1)$. Then $m\in H, m<n$ and the function
\begin{equation*}
f=\prod_{i=1}^{\lambda} (x-\alpha_{q+1+i})\prod_{j=1}^{\mu}(y-\beta_{j})
\end{equation*}
verifies $\divi(f)=D'-mQ$, with $0\le D'\le D$. Then, according to Proposition \ref{Goppaigual}, the code $C(\calh,D,mQ)$ attains the Goppa bound, $d(C(\calh,D,mQ))=n-m$. Since all poles $m\in H$ such that $m<n-q^2$ can be written in the form $m=\lambda q+\mu(q+1)$ with $0\le \lambda, 0\le\mu<q$, we deduce that all Hermitian codes $C(\calh,D,mQ)$ attain the Goppa bound for $m<n-q^2$. The same happens when $m<n$ is a multiple of $q$, $m=\lambda q$. To see that it is enough to consider the function
\begin{equation*}
f=\prod_{i=1}^{\lambda} (x-\alpha_{i}).
\end{equation*}
We shall compute the minimum distances of all nonabundant Hermitian codes later, seeing them as particular cases of Castle codes. 
\end{example}

The same reasoning as in the above example shows that for an arbitrary curve $\calx$ the ring $\call(\infty Q)$ is a finitely generated 
$\fq$-algebra. Take a generator set $\{a_1,\dots,a_r\}$ of $H(Q)$ and functions $\psi_1,\dots,\psi_r$ such that $v(\psi_i)=a_i$ for $i=1,\dots,r$. Then every element in $H(Q)$ is a combination of $a_1,\dots,a_r$ with nonnegative integer coefficients, hence $\call(\infty Q)=\fq[\psi_1,\dots,\psi_r]$.

\subsection{The dimension set and the order bound on the minimum distance}

Keeping the notation of previous sections, let ${\mathcal X}$ be a curve of genus $g$ defined over the finite field ${\mathbb F}_q$ and let $\calx(\fq)=\{Q,P_1,\dots,P_n\}$ be the rational points in ${\mathcal X}$. Let $\calp=\{P_1,\dots,P_n\}$.  Consider the chain of one-point codes  $(\bcero )\subseteq C(\calx,D,0)\subseteq\cdots\subseteq C(\calx,D,(n+2g-1)Q)=\fq^n$. 

The dimensions of these codes can be obtained from the dimension set $M=\{m_1,\dots,m_n\}$. 
Let $H=H(Q)= \{ v_1=0<v_2<\dots\}$ be the Weierstrass semigroup of $Q$ and  let $\gaps(H)=\{l_1,\dots,l_g\}$ be the set of gaps of $H$. Let us remember that
\begin{equation*}
M=\{ m\in{\mathbb N}_0 : C(\calx,D,mQ)\neq C(\calx,D,(m-1)Q) \}.
\end{equation*}

\begin{proposition}\label{ell}
$M= \{ m\in H : \ell(mQ-D)= \ell((m-1)Q-D)\}$.
\end{proposition}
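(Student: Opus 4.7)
The plan is to unwind the definition of $M$ through the dimension formula for $C(\calx,D,mQ)$ and compare the two possible ``jumps'' as $m$ increases by one.

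First I would recall that since $ev_\calp:\call(mQ)\to\fq^n$ has kernel $\call(mQ-D)$, we have
\begin{equation*}
\dim C(\calx,D,mQ)=\ell(mQ)-\ell(mQ-D),
\end{equation*}
and that $C(\calx,D,(m-1)Q)\subseteq C(\calx,D,mQ)$ follows from $\call((m-1)Q)\subseteq\call(mQ)$. Hence $m\in M$ is equivalent to the strict inequality
\begin{equation*}
\ell(mQ)-\ell(mQ-D) \;>\; \ell((m-1)Q)-\ell((m-1)Q-D).
\end{equation*}

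Next I would invoke the basic fact that adding a single point to an effective divisor raises $\ell$ by at most one, so both differences
\begin{equation*}
a := \ell(mQ)-\ell((m-1)Q) \quad\text{and}\quad b := \ell(mQ-D)-\ell((m-1)Q-D)
\end{equation*}
lie in $\{0,1\}$. Moreover $a=1$ precisely when $m\in H(Q)=H$ (this is the characterization of pole numbers that was already noted right after the definition of $H(Q)$). The strict inequality above then reads $a>b$, which forces $a=1$ and $b=0$; equivalently, $m\in H$ and $\ell(mQ-D)=\ell((m-1)Q-D)$. Conversely, if $m\in H$ and $\ell(mQ-D)=\ell((m-1)Q-D)$, then $a=1$ and $b=0$ and so $m\in M$.

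I don't expect any real obstacle; the only subtle point is citing (or briefly justifying) that $\ell(E+P)-\ell(E)\in\{0,1\}$ for a point $P$, which is immediate from the short exact sequence $0\to\call(E)\to\call(E+P)\to\fq$ (evaluation of the leading coefficient at $P$), or equivalently from Riemann-Roch. Everything else is bookkeeping with the dimension formula.
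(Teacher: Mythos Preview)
Your argument is correct. The key identity $\dim C(\calx,D,mQ)=\ell(mQ)-\ell(mQ-D)$ together with the elementary fact that $\ell(E+Q)-\ell(E)\in\{0,1\}$ for any divisor $E$ and rational point $Q$ reduces the question to the trivial observation that $a>b$ with $a,b\in\{0,1\}$ forces $a=1$, $b=0$.

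The paper proceeds differently: it splits into the cases $m<n$ and $m\ge n$. For $m<n$ the kernel $\call(mQ-D)$ vanishes (negative degree), so the evaluation is injective and $m\in M\Leftrightarrow m\in H$; for $m\ge n$ the standing assumption $n\ge 2g$ guarantees that both $m$ and $m-1$ lie in $H$, whence $\ell(mQ)=\ell((m-1)Q)+1$ and the code jumps exactly when the kernel does not. Your uniform increment argument avoids this case split entirely and, in particular, does not invoke the hypothesis $n\ge 2g$. The paper's route, on the other hand, makes the geometry of the two regimes (nonabundant versus abundant) more visible, which is useful for the subsequent discussion of the last $g$ elements of $M$.
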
 
\begin{proof}
If $m\in M$ then $\ell(mQ)\neq \ell ((m-1)Q)$ and $m\in H$.  The kernel of the evaluation map restricted to $\mathcal L (mQ)$ is ${\mathcal L}(mQ-D)$, so 
when  $m<n$ this evaluation is injective and hence $m\in M$ if and only if $m\in H$.  When $m\geq n$  then $m-1,m\in H$ which implies $\ell(mQ)= \ell ((m-1)Q)+1$. Thus $C(\calx,D,mQ)\neq C(\calx,D,(m-1)Q)$ if and only if both kernels are equal.
\end{proof}

Thus, for all nonnegative integers $m<n$ we have  $m\in M$ if and only if $m\in H$. Then,  once $H$ is known, the problem of calculating $M$ is reduced to determine its last $ g $ elements.
Since $C(\calx,D,(n+2g-1)Q)=\fq^n$  we deduce that $g$ elements of  $\{ n,\dots, n+2g-1 \}$ belong to $M$ while the other $g$ elements do not.

 
  \begin{proposition}\label{sim}
If the divisors $D$ and $nQ$ are linearly equivalent, $D\sim nQ$, then  $M\cap\{ n,\dots,n+2g-1\}=\{ n+l_1,\dots,n+l_g \}$.
  \end{proposition}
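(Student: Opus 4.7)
The plan is to combine Proposition \ref{ell} with the linear equivalence $D \sim nQ$ to translate the condition defining $M$ into a statement purely about the Weierstrass semigroup $H = H(Q)$.

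First, I would observe that if $D \sim nQ$, then for every integer $m$ we have $\ell(mQ - D) = \ell((m-n)Q)$, since linearly equivalent divisors have the same dimension. Substituting this into the characterization given by Proposition \ref{ell} yields
\begin{equation*}
M = \{\, m \in H : \ell((m-n)Q) = \ell((m-n-1)Q)\,\}.
\end{equation*}

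Next I would restrict attention to the range $m \in \{n,\dots,n+2g-1\}$ and set $k = m - n \in \{0,1,\dots,2g-1\}$. Since we have assumed $n \ge 2g$ and the conductor of $H$ is at most $2g$, every integer $m \ge n$ lies in $H$; so the condition $m \in H$ is automatic in this range. Thus $m \in M$ reduces to $\ell(kQ) = \ell((k-1)Q)$. For $k \ge 1$ this equality is exactly the definition of $k$ being a gap of $H$; for $k = 0$ we have $\ell(0) = 1$ while $\ell(-Q) = 0$, so the equality fails (consistent with the fact that $0$ is a pole, not a gap). Therefore $m = n + k$ belongs to $M$ precisely when $k$ is one of the gaps $l_1,\dots,l_g$, all of which lie in $\{1,\dots,2g-1\}$. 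This gives the claimed equality $M \cap \{n,\dots,n+2g-1\} = \{n+l_1,\dots,n+l_g\}$.

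There is essentially no obstacle here: the real content is the identification $\ell(mQ - D) = \ell((m-n)Q)$ coming from $D \sim nQ$, after which everything reduces to the elementary fact that the gaps of a numerical semigroup of genus $g$ are precisely the positive integers $k$ for which $\ell(kQ) = \ell((k-1)Q)$. The only mild care needed is to handle the boundary value $k=0$ correctly and to invoke the assumption $n \ge 2g$ to ensure $m \in H$ for free in the relevant range.
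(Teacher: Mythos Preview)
Your proof is correct. The approach differs from the paper's: rather than computing $\ell(mQ-D)$ via the equivalence $D\sim nQ$ and reading off the gap condition directly, the paper first observes that $D\sim nQ$ forces $n\notin M$, then invokes the closure property $\bar H + H\subseteq \bar H$ (the remark after Lemma~\ref{nle:3.1}) to conclude that $n+v_1,\dots,n+v_g\notin M$, and finishes by the cardinality count that exactly $g$ of the $2g$ integers in $\{n,\dots,n+2g-1\}$ lie in $M$. Your route is more self-contained---it needs only Proposition~\ref{ell} and the basic fact that linearly equivalent divisors have equal $\ell$---and it identifies both inclusions at once, whereas the paper's argument exhibits the complement and appeals to counting. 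The paper's version, on the other hand, illustrates a structural principle (that $\bar H$ is an $H$-ideal) which is reused elsewhere in the chapter.
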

  \begin{proof}
If $D\sim nQ$ then  $n\not\in M$ and  $n+v_1,\dots,n+v_g\not\in M$ by the remark after Lemma \ref{nle:3.1}. The statement follows by cardinality reasons.
  \end{proof}

\begin{example}[Hermitian codes]
As seen in Example \ref{DualExHermitian}, we have $D\sim nQ$. Then  Proposition \ref{sim} gives $M$.
\end{example}

We can obtain estimates on the minimum distance of one-point codes by using the order bound stated in Theorem \ref{nasterisk}:  
\begin{equation*}
 d(C(\calx,D,mQ))\geq d_{ORD}(\dim (C(\calx,D,mQ))).
\end{equation*} 
This bound improves the classical Goppa bound $d(C(\calx,D,mQ))\ge d_G(C(\calx,D$, $mQ))=n-m$  as the next result shows. Let $\pi$ be the smallest element in $\bar{H}=H\setminus M$.  Note that $\pi\ge n$. The sets $\Lambda^*_i$ can be rewritten as $\Lambda^*_i=\{m_j\in M : m_i+m_j\in M \}$ or, since $\bar{H}+H\subseteq \bar{H}$  as noted after Lemma \ref{nle:3.1}, as $\Lambda^*_i=\{m\in M : m-m_i\in H \}=(m_i+H)\cap M$.

\begin{proposition}\label{goppa}
For all $i=1,\dots,n$, we have $d_{ORD}(\dim (C(\calx,D,m_iQ)))\ge d_G(C(\calx,D,$ $m_iQ))$. If $m_i<\pi-l_g$ then  equality holds.
\end{proposition}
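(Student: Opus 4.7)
The plan is to reduce everything to a counting statement about the set $\Lambda^*_r=(m_r+H)\cap M$, and then exploit Lemma \ref{Ssetminusa} together with the definition of $\pi$.

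First I would establish the general inequality $d_{ORD}(\dim(C(\calx,D,m_iQ)))\ge n-m_i$. Since $\dim(C(\calx,D,m_iQ))=i$, by definition $d_{ORD}(i)=\min_{r\le i}\#\Lambda^*_r$, so it suffices to show $\#\Lambda^*_r\ge n-m_r$ for every $r$ (this is stronger than needed since $n-m_r\ge n-m_i$ for $r\le i$). Using the reformulation $\Lambda^*_r=(m_r+H)\cap M$ noted just before the proposition, one has $M\setminus\Lambda^*_r=M\cap(H\setminus(m_r+H))$ because $M\subseteq H$. Hence $\#(M\setminus\Lambda^*_r)\le \#(H\setminus(m_r+H))=m_r$ by Lemma \ref{Ssetminusa}, giving $\#\Lambda^*_r\ge |M|-m_r=n-m_r$.

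For the equality statement, assume $m_i<\pi-l_g$. The key observation is that every element $h$ of $H\setminus(m_i+H)$ satisfies $h\le m_i+l_g$: indeed, either $h<m_i$, or $h\ge m_i$ and $h-m_i$ is a gap of $H$, hence $h-m_i\le l_g$. The hypothesis $m_i+l_g<\pi$ therefore forces every such $h$ to lie strictly below $\pi$, and since by definition of $\pi$ every element of $H$ below $\pi$ lies in $M$, we obtain $H\setminus(m_i+H)\subseteq M$. Combined with the already-used inclusion $M\setminus(m_i+H)\subseteq H\setminus(m_i+H)$, this yields the equality $M\setminus\Lambda^*_i=H\setminus(m_i+H)$, whose cardinality is exactly $m_i$ by Lemma \ref{Ssetminusa}. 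Therefore $\#\Lambda^*_i=n-m_i$, and since $d_{ORD}(i)\le\#\Lambda^*_i$ while the previous paragraph gives $d_{ORD}(i)\ge n-m_i$, we get equality.

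The only subtlety I foresee is justifying the bound $h\le m_i+l_g$ uniformly for $h\in H\setminus(m_i+H)$: one has to handle both the case $h<m_i$ and the case where $h-m_i\in\gaps(H)$, and check that the hypothesis $m_i<\pi-l_g$ covers both at once. Everything else is bookkeeping with $|M|=n$ and the set-theoretic identity $M\setminus\Lambda^*_r=M\cap(H\setminus(m_r+H))$.
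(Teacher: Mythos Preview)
Your argument is correct and follows the same route as the paper: reduce to showing $\#(M\setminus\Lambda^*_r)\le m_r$ via the inclusion $M\setminus\Lambda^*_r\subseteq H\setminus(m_r+H)$ and Lemma~\ref{Ssetminusa}, and for the equality case observe that $m_i+l_g<\pi$ forces $H\setminus(m_i+H)\subseteq M$. You are in fact slightly more explicit than the paper in justifying why every $h\in H\setminus(m_i+H)$ satisfies $h\le m_i+l_g$ (splitting into $h<m_i$ and $h-m_i\in\gaps(H)$), a step the paper asserts without comment.
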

\begin{proof}
For the first statement it suffices to show  that $\#(M\setminus \Lambda_i^*)\le m_i$ for all $i$. Since $\Lambda^*_i=(m_i+H)\cap M$, we have $M\setminus \Lambda_i^*\subseteq H\setminus (m_i+H)$ and this follows from the fact that $\#(H\setminus (m_i+H))=m_i$, stated in Lemma \ref{Ssetminusa}.  If $m_i+l_g<\pi$, then all elements in $H\setminus (m_i+H)$ are smaller than $\pi$ and hence $M\setminus \Lambda_i^*= H\setminus (m_i+H)$. 
\end{proof}

\begin{example}[Codes on the Suzuki curve]\label{ExSuzuki}
The Suzuki curve $\cals$ is characterized as being the unique curve over $\fq$,
with $q = 2q_0^2$, and $q_0 = 2^r \ge 2$, of genus $g = q_0(q-1)$ having $q^2 + 1$ $\fq$-rational points, see \cite{Suz1}.
Without going into details, which would lead us too long,
a plane singular model of $\cals$ is given by the equation $y^q-y=x^{q_0}(x^q-1)$. Thus, there is just one point $Q$
over $x =\infty$  which is $\fq$-rational. The Weierstrass semigroup of $Q$ is known to be $H(Q) =\langle q, q + q_0, q + 2q_0, q + 2q_0 + 1 \rangle$  (see \cite{Suz2, Mathews}). 

Let us consider the particular case $q=8$. In this case the Suzuki curve  has genus $g=14$ and 65 rational points. A plane model of $\mathcal S$ is given by the equation  $y^8z^2-yz^{9} = x^2(x^8-xz^7)$.  This model is non-singular except at the point $(0 : 1 : 0)$. Being this singularity unibranched, the unique point $Q$ lying over $(0 : 1 : 0)$ is rational. Let us consider the codes $C(\cals,D,mQ)$, where $D$ is the sum of all 64 rational points of $\mathcal S$ except $Q$. The Weierstrass semigroup at $Q$ is 
\begin{eqnarray*}
H &=& \langle 8,10,12,13\rangle \\
   &=& \{ 0,8,10,12,13,16,18,20,21,22,23,24,25,26,28,\rightarrow\}.
\end{eqnarray*}
Then
\begin{eqnarray*}
qH^*+H &=& \{qv_i+v_j : v_i,v_j\in H, v_i\neq 0\}  \\
   &=& \{ 64,72,74,76,77,80, 82, 84,85,86,87,88,89,90,92,\rightarrow\}.
\end{eqnarray*}
By Corollary \ref{LemaLewittesGeil}, $M\subseteq H\setminus (qH^*+H)$, so we obtain
\begin{eqnarray*}
\lefteqn{M\subseteq\{ 0,8,10,\dots \mbox{(same as $H$)} \dots, 63, } \\
  & &  65,66,67,68,69,70,71,73,75,78,79,81,83,91 \}.
\end{eqnarray*}
Since both sets have cardinality $n=64$ we conclude that they are equal.
An straightforward computation gives the sequence $(\#\Lambda^*_i$, $1\le i\le 64)$: 
(64, 56, 54, 52, 51, 48, 46, 44, 43, 42, 41, 40, 39, 38, 36, 35, 34, 33, 32, 31, 30, 
29, 28, 28, 26, 25, 24, 23, 22, 21, 20, 21, 18, 19, 16, 17, 16, 13, 12, 14, 10, 13, 
8, 12, 10, 9, 8, 8, 6, 8, 7, 4, 5, 4, 4, 4, 5, 4, 3, 2, 2, 2, 2, 1). We find 14 non\-abundant codes $(m<64)$ for which the Goppa bound is improved (plus all the abundant ones). Specifically those corresponding to the values $m_i\in \{37, 45, 47, 49, 50, 53$,  $55, 57, 58, 59, 60, 61, 62, 63\}$. In particular we find four codes $[64,37,\ge 16],[64,58,\ge 4],[64,62,\ge 2]$ and $[64,63,\ge 2]$ achieving the best known parameters, see \cite{mint}.
\end{example}

\subsection{Duals of one-point codes}

The dual of an one-point code is not one-point in general. According to Proposition \ref{dualcode} we have $C(\calx,D,mQ)^{\perp}=C(\calx,D$, $D+W-mQ)$, where $W$ is the divisor of a differential form $\omega$ with simple poles and residue 1 at all points $P_i\in\calp$. Then we have the following result.

\begin{proposition}\label{dualll}
If there exist a differential form $\omega$ with simple poles and residue 1 at all points $P_i\in\calp$, such that $\mbox{\rm div}(\omega)=(n+2g-2)Q-D$ then 
$C(\calx,D,mQ)^{\perp}=C(\calx,D,(n+2g-2-m)Q)$.
\end{proposition}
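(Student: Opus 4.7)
The plan is to invoke Theorem \ref{dualcode} directly with $G = mQ$ and with the specific differential form $\omega$ supplied by the hypothesis. Theorem \ref{dualcode} only requires a differential form with simple poles and residue $1$ at each $P_i$ in order to conclude
\begin{equation*}
C(\calx,D,G)^{\perp} = C(\calx,D,D+W-G),
\end{equation*}
where $W = \mbox{div}(\omega)$, and the $\omega$ provided by hypothesis satisfies exactly that requirement.

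First I would substitute $G = mQ$ and $W = (n+2g-2)Q - D$ into the right-hand side and compute
\begin{equation*}
D + W - G \;=\; D + \bigl((n+2g-2)Q - D\bigr) - mQ \;=\; (n+2g-2-m)Q,
\end{equation*}
which immediately gives $C(\calx,D,mQ)^{\perp} = C(\calx,D,(n+2g-2-m)Q)$.

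As a sanity check I would verify consistency of degrees: any differential form has divisor of degree $2g-2$, and indeed $\deg\bigl((n+2g-2)Q - D\bigr) = (n+2g-2) - n = 2g-2$, so the hypothesis on $\mbox{div}(\omega)$ is at least numerically feasible. No obstacle is expected here since the heavy lifting (existence of a suitable $\omega$ in general, and the residue-theoretic pairing computation) has already been carried out in Theorem \ref{dualcode}; the present statement is a formal specialization to one-point codes under the additional assumption that one can choose $\omega$ so that all of its zeros are concentrated at $Q$.
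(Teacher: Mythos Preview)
Your proposal is correct and follows exactly the approach the paper takes: the proposition is stated immediately after the sentence ``According to Proposition \ref{dualcode} we have $C(\calx,D,mQ)^{\perp}=C(\calx,D,D+W-mQ)$'', and the paper gives no further proof beyond this direct substitution. Your computation $D+W-mQ=(n+2g-2-m)Q$ is precisely the intended one-line argument.
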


In this case, the dual of an one-point code $C(\calx,D,mQ)$ is again an one-point code, $C(\calx,D,mQ)^{\perp}=C(\calx,D,(n+2g-2-m)Q)$. Thus we get two order bounds on the minimum distance of this code, namely
$d_{ORD}(\dim C(\calx,D,mQ))$ and $d^{\perp}_{ORD}(\dim C(\calx,D,(n+2g-2-m)Q))$. Both bounds give the same result.

\begin{proposition}
If there exist a differential form $\omega$ with simple poles and residue 1 at all points $P_i\in\calp$, such that $\mbox{\rm div}(\omega)=(n+2g-2)Q-D$, then
$d_{ORD}(\dim C(\calx,D,mQ))=d^{\perp}_{ORD}(\dim C(\calx,D,(n+2g-2-m)Q))$.
\end{proposition}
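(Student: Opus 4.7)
The plan is to show that the two bounds collapse to the same number by first establishing a symmetry of the dimension set $M$ induced by the duality hypothesis, and then constructing an explicit bijection $\Lambda_r^*\to N_{n-r}^*$.

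First, I would extract a symmetry of $M$ from the hypothesis. Since $C(\calx,D,mQ)^{\perp}=C(\calx,D,(n+2g-2-m)Q)$ for every integer $m\ge 0$, dimensions satisfy $\dim C(\calx,D,mQ)+\dim C(\calx,D,(n+2g-2-m)Q)=n$. The condition $m\in M$ says the dimension of $C(\calx,D,mQ)$ jumps at $m$; translated through the duality, this becomes a jump between $\dim C(\calx,D,(n+2g-2-m)Q)$ and $\dim C(\calx,D,(n+2g-1-m)Q)$, that is $n+2g-1-m\in M$. Hence $m\in M$ if and only if $n+2g-1-m\in M$, and writing $M=\{m_1<\dots<m_n\}$ this order-reversing involution gives the key identity
\begin{equation*}
m_i+m_{n+1-i}=n+2g-1,\qquad i=1,\dots,n.
\end{equation*}

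Next, for each $r\in\{1,\dots,n\}$, I would exhibit an explicit bijection $\Lambda_r^*\to N_{n-r}^*$. Given $(r,j)\in\Lambda_r^*$, write $m_r+m_j=m_t$ for the unique index $t$ so determined, and send $(r,j)\mapsto (n+1-t,j)$. The symmetry gives $m_{n+1-t}+m_j=(n+2g-1-m_t)+m_j=n+2g-1-m_r=m_{n+1-r}$, so the image indeed lies in $N_{n-r}^*$. The inverse sends $(i,j)\in N_{n-r}^*$ (for which $m_i+m_j=m_{n+1-r}$) back to $(r,j)$, which lies in $\Lambda_r^*$ because $m_r+m_j=n+2g-1-m_i=m_{n+1-i}\in M$. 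A direct check that these two maps are mutually inverse gives $\#\Lambda_r^*=\#N_{n-r}^*$.

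Finally, setting $k=\dim C(\calx,D,mQ)$, as $r$ ranges over $\{1,\dots,k\}$ the index $n-r$ ranges over $\{n-k,\dots,n-1\}$, so
\begin{equation*}
d_{ORD}(k)=\min_{1\le r\le k}\#\Lambda_r^*=\min_{n-k\le s\le n-1}\#N_s^*=d_{ORD}^{\perp}(n-k).
\end{equation*}
Since $n-k=n-\dim C(\calx,D,mQ)=\dim C(\calx,D,mQ)^{\perp}=\dim C(\calx,D,(n+2g-2-m)Q)$ by the hypothesis, this is exactly the claimed equality. I expect the only delicate step to be the clean derivation of the symmetry $m_i+m_{n+1-i}=n+2g-1$; once that is in hand, the matching $(r,j)\mapsto(n+1-t,j)$ is essentially forced by the very shape of the definitions of $\Lambda_r^*$ and $N_s^*$.
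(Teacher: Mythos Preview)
Your argument is correct. The paper itself does not give a proof of this proposition, deferring instead to \cite{GMRT}; however, the paper does prove the identical combinatorial identity $\#\Lambda_r^*=\#N_{n-r}^*$ later in Proposition~\ref{ult} (for Castle codes), using exactly the same two ingredients you isolate: the symmetry $m_i+m_{n+1-i}=n+2g-1$ (there obtained via Riemann--Roch in Lemma~\ref{simetria}, here correctly derived from the duality $C(\calx,D,mQ)^{\perp}=C(\calx,D,(n+2g-2-m)Q)$), and the bijection $(i,j)\leftrightarrow(r,j)$ between $N_{n-r}^*$ and $\Lambda_r^*$. Your proof is essentially the same as that one, transplanted to the more general hypothesis.
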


The proof of this result can be found in \cite{GMRT}.

\begin{example}[Duals of Hermitian and Suzuki codes]\label{DualExHermitian}
Consider the Hermitian curve $\calh$ over $\fqd$. 
The function
\begin{equation*}
f=\prod_{\alpha\in\fqd} (x-\alpha)
\end{equation*}
has divisor $\divi(f)=D-q^3Q$, where $D$ is the sum of all $n=q^3$ rational affine points on $\calh$. Then $\mbox{\rm div}(f)=D-nQ$. It can be proved (see \cite{StichHermitian}) that $\mbox{div}(df/f)=(n+2g-2)Q-D$. Thus $C(\calh,D,mQ)^{\perp}=C(\calh,D,(n+2g-2-m)Q)$. 
Analogously, for the Suzuki curve ${\mathcal S}$ over $\fq$, the function
\begin{equation*}
f=\prod_{\alpha\in\fq} (x-\alpha)
\end{equation*} 
verifies $\mbox{\rm div}(f)=D-nQ$ and  $\mbox{div}(df/f)=(n+2g-2)Q-D$.  Then the dual of an one-point Suzuki code is one-point too.
\end{example}

\subsection{Improved codes}\label{Sect_improved}

By choosing suitable functions $ f $ to be evaluated, in some cases we can slightly change one-point codes improving their parameters.
Let $\delta$ be an integer, $0<\delta\le n$. Let $\calx,\calp, Q$ as in the previous sections. Given functions $ \phi_1,\dots,\phi_n$ such that $\phi_i\in {\mathcal L}(\infty Q)$ and $v(\phi_i)=m_i$, we define the {\em improved code}
\begin{equation*}
C(D,Q,\delta)=\langle \{ ev(\phi_i) : \#\Lambda^*_i\ge \delta  \}  \rangle .
\end{equation*}
From Proposition \ref{agbound} it is clear that the mi\-nimum distance of $C(D,Q,\delta)$ is at least $\delta$. 
The sequence $(\Lambda^*_i)$ is said to be {\em monotone} for $\delta$ if for every $i,j$ such that $\#\Lambda^*_i\ge \delta$ and $\#\Lambda^*_j<\delta$ we have that $i<j$. If $(\Lambda^*_i)$ is  monotone for $\delta$ then $C(D,Q,\delta)$ is an usual one-point code, so improved codes only improve one-point codes for those $\delta$ for which the sequence is not monotone. In this case the code $C(D,Q,\delta)$ depends on the choice of $ \phi_1,\dots,\phi_n$. In fact, if $\#\Lambda^*_i=\delta$ and $\#\Lambda^*_j<\delta$ for some $j<i$, then $v(\phi_i+\phi_j)=v(\phi_i)$ but in general $ev(\phi_j)\not\in C(D,Q,\delta)$, hence $ev(\phi_i+\phi_j)\not\in C(D,Q,\delta)$. Thus we have a collection of improved codes with designed distance $\delta$, depending on the collection of sets $\{ \phi_1,\dots,\phi_n \}$.
 
\begin{example}[Improved Suzuki codes]
Let us consider the Suzuki curve $\mathcal S$ over ${\mathbb F}_{8}$ of Example \ref{ExSuzuki}. In that example we computed the sequence $(\#\Lambda^*_i)$. 
This sequence is monotone for $\delta=3,5,6,9,13,14,18,20$, $21$. For example the one-point code $C(\cals,D,70Q)$ has dimension 55 and distance at least 4 (that is $d_{ORD}(55)=4$), whereas $C(D,Q,4)$ has the same distance and dimension 57.
\end{example}

\subsection{Bibliographical notes}

Algebraic geometry  codes (also called geometric Goppa codes) were introduced by V.D. Goppa in in the seventies, \cite{goppa1,goppa2}, as a generalization  of another family of codes previously invented by himself, that of classical Goppa codes.
AG codes became famous when M. Tsfasman, S.G. Vladuts and T. Zink showed in the early eighties, that there exist infinite families of these codes exceeding the Gilbert-Varshamov bound, \cite{TVZ}. 
The enormous interest aroused by these codes  has encouraged the study of the theoretical tools supporting them, mainly algebraic geometry over finite fields.

Codes coming from many interesting curves have been studied in detail. For what it is referring to the two main examples discussed in this chapter,
Hermitian codes were first studied by Stichtenoth, \cite{StichHermitian}, and later by many authors. Their minimum distances  were computed in \cite{YK} and their complete weight hierarchies in \cite{BM}. Suzuki codes were introduced by
J. P. Hansen and H. Stichtenoth, \cite{Suz2}. The true minimum distances of codes on this curve are known in many cases, but  not always.

Besides one-point codes,  which are the ones mainly discussed in this chapter, codes over more than one point (two, three or more) have been also studied, \cite{A1,A3,A2}. The interested reader can find multiple-point codes on the Hermitian curve \cite{MathewsH}, the Suzuki curve \cite{Mathews},  or  the Norm-Trace curve  \cite{NormTrace2points}.

Many works have been devoted to the study of the order bound for AG codes. In its original formulation this bound applies to the duals of  one-point codes. A nice generalization to
arbitrary AG codes was given by P. Beelen \cite{beelen} and later improved by I. Duursma,  R. Kirov and S. Park in a sequence of articles \cite{Du1,Du2,Du3}.
The application of Andersen-Geil bound to one-point
codes treated in this chapter is due to O. Geil, C. Munuera, D. Ruano and F. Torres, \cite{GMRT}.

\section{Castle curves and Castle codes}

As seen above, curves with many points with respect to  its genus provide codes with good parameters. This observation has led in recent years to an intensive research in order to determine good bounds on the number of rational points of a curve and to find curves with many points. For our purposes in this chapter is relevant one of these bounds, due to Lewittes. This bound has the particularity of being proved by using one-point codes.  It  links the number of points on the curve to the Weierstrass semigroup of one of them. This fact makes the bound  particularly interesting for coding theory because the properties of this semigroup  strongly affect the parameters of the obtained codes.

\subsection{The Lewittes bound on the number of rational points of an algebraic curve}\label{SeccionLewittes}

Let $\calx$ be a curve over $\fq$ and write $\calx(\fq)=\{ Q,P_1,\dots,P_n \}$, $\calp=\{P_1,\dots,P_n\}$. Consider the one-point codes $C(\calx,D,mQ)$.  Let $H=\{v_1=0,v_2,\dots \}$ be the Weierstrass semigroup of $Q$ and $v_2$ its multiplicity. 
 
\begin{theorem}[Lewittes-Geil-Matsumoto bound]\label{LewittesGeil}
Let $\calx$ be a curve over $\fq$, $Q$ a rational point and $H$ be the Weierstrass semigroup of $Q$. Then
 \begin{equation*}
\#{\mathcal{X}}({\mathbb{F}}_q)\leq \# (H\setminus (qH^*+H))+1\le  qv_2+1 
  \end{equation*} 
where $v_2$ is the multiplicity of $H$. 
\end{theorem}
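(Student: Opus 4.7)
The plan is to piece together two facts already established in the exposition: Corollary \ref{LemaLewittesGeil} (which bounds the dimension set $M$) and Lemma \ref{Ssetminusa} (which counts $\#(S\setminus(a+S))$ in a numerical semigroup).

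For the first inequality, I would work in the setup of one-point codes attached to $Q$, i.e., take $\calp=\calx(\fq)\setminus\{Q\}$ so that $n=\#\calx(\fq)-1$. The chain $C(\calx,D,mQ)$, $m=-1,0,1,\dots$, exhausts $\fq^n$ (using $\Phi=ev_\calp$ surjective on $\call(\infty Q)$), and by construction the dimension set $M$ satisfies $\#M=n$. Corollary \ref{LemaLewittesGeil} applied to the weight $v=-v_Q$ with semigroup $H=H(Q)$ gives $M\subseteq H\setminus(qH^*+H)$. Therefore
\begin{equation*}
\#\calx(\fq)=n+1=\#M+1\le \#(H\setminus(qH^*+H))+1,
\end{equation*}
which is the first inequality.

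For the second inequality, the key observation is that $v_2\in H^*$ implies $qv_2\in qH^*$, and hence $qv_2+H\subseteq qH^*+H$. Passing to complements in $H$ gives the inclusion $H\setminus(qH^*+H)\subseteq H\setminus(qv_2+H)$. Since $qv_2\in H$ (the semigroup is closed under addition), Lemma \ref{Ssetminusa} applies with $a=qv_2$ and yields $\#(H\setminus(qv_2+H))=qv_2$, which produces the second inequality.

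I do not expect a real obstacle here: the bulk of the work was already done in Corollary \ref{LemaLewittesGeil}, whose proof used the Frobenius identity $\Phi(f^q)=\Phi(f)$ to show $qH^*\subseteq \bar H$, and in Lemma \ref{Ssetminusa}. The only point to watch is the standing hypothesis $n\ge 2g$ made in Section 4 for the one-point codes setup; if one wanted the bound without this hypothesis, one would note that for curves with few rational points the inequality $\#\calx(\fq)\le qv_2+1$ is either trivial or can be obtained by enlarging the chain with redundant zero terms, since $M\subseteq H\setminus(qH^*+H)$ depends only on the formal properties of the weight function and the evaluation morphism.
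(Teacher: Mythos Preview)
Your proof is correct and follows essentially the same route as the paper: set up the one-point codes at $Q$, use $\#\calx(\fq)=n+1=\#M+1$ together with Corollary~\ref{LemaLewittesGeil} for the first inequality, and then the inclusion $qv_2+H\subseteq qH^*+H$ combined with Lemma~\ref{Ssetminusa} for the second. Your remark about the standing hypothesis $n\ge 2g$ is a fair caveat, though note that Corollary~\ref{LemaLewittesGeil} itself is stated in the general order-domain setting of Section~3 and does not rely on it.
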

\begin{proof}
Let $\calx(\fq)=\{ Q,P_1,\dots,P_n \}$, $\calp=\{P_1,\dots,P_n\}$, and consider the one-point codes $C(\calx,D,mQ)$.
Then $\#\calx(\fq)=n=\# M$. By Corollary \ref{LemaLewittesGeil}, $M\subseteq H\setminus (qH^*+H)$. Taking cardinalities we obtain the first inequality. To see the second one, note that $qv_2+H\subseteq qH^*+H$ and according to Lemma \ref{Ssetminusa} we have $\#(H\setminus (qv_2+H))=qv_2$.
\end{proof}

The bound $\#{\mathcal{X}}({\mathbb{F}}_q)\leq \# (H\setminus (qH^*+H))+1$ was stated by Geil and Matsumoto, \cite{GeilMatsumoto}, improving the previous result $\# {\mathcal{X}} ({\mathbb{F}}_q)\leq qv_2+1$ obtained by Lewittes, \cite{Lewittes}.

\subsection{Castle curves}

Let ${\mathcal{X}}$ be a curve over ${\mathbb{F}}_q$. $\calx$ is called {\em Castle} is there exists a rational point $Q\in\calx(\fq)$ such that: 
\begin{enumerate}
\item the Weierstrass semigroup of $Q$, $H(Q)$ is symmetric; and
\item the number of rational points on $\calx$ reaches the Lewittes bound $\#{\mathcal{X}}({\mathbb{F}}_q)= qv_2(Q)+1$
\end{enumerate}
where $v_2(Q)$ is the multiplicity of $H(Q)$. 

\begin{example}
Some of the curves previously discussed in this chapter are Castle. \newline
(1) A rational curve is clearly a Castle curve.  \newline
(2) The Hermitian curve $\calh$ over $\fqd$ is a Castle curve. Let $Q$ be the point at infinity. The Weierstrass semigroup  $H=\langle q, q+1\rangle$ is symmetric of multiplicity $v_2=q$ and $\# \calx(\fqd)=q^3+1$. \newline
(3) The Suzuki curve $\cals$ is Castle. Let $Q$ be the point over  $x =\infty$. The Weierstrass semigroup of $Q$,  $H(Q) =\langle q, q + q_0, q + 2q_0, q + 2q_0 + 1 \rangle$ is telescopic (see \cite{HLP}), hence  symmetric of multiplicity $v_2=q$. Since $\cals$ has $q^2 + 1$ rational points, it is a Castle curve.
\end{example}

Many of the most interesting curves for  Coding Theory purposes are Castle. Let us see other examples.

\begin{example}
Let $\calx$ be a hyperelliptic curve  and $Q$ a hyperelliptic rational point. $\calx$ is Castle if and only if $Q$ is the only rational hyperelliptic point on $\calx$ and $\calx$ attains equality in the hyperelliptic bound $\# \{ \mbox{rational nonhyperelliptic points}\} + 2\# \{ \mbox{rational}$  $\mbox{hyperelliptic points}\}$ $\le 2q+2$.
\end{example}

 \begin{example}[The Norm-Trace curve]\label{normtrace} 
Let us consider the curve defined over  ${\mathbb F}_{q^r}$ by the affine equation
  \begin{equation*}
x^{(q^r-1)/(q-1)}=y^{q^{r-1}}+y^{q^{r-2}}+\ldots+y
  \end{equation*}
or equivalently by $N_{{\mathbb F}_{q^r}\mid{\mathbb{F}}_q}(x)=T_{{\mathbb F}_{q^r}\mid {\mathbb{F}}_q}(y)$, where the maps $N$ 
and $T$ are respectively the norm and trace from ${\mathbb F}_{q^r}$ to ${\mathbb{F}}_q$. This curve has $2^{2r-1}+1$ 
rational points and the Weierstrass semigroup at the unique pole $Q$ of $x$ is given by
  \begin{equation*}
H(Q)=\langle q^{r-1}, (q^r-1)/(q-1)\rangle\, .
  \end{equation*} 
Since every semigroup generated by two elements is symmetric, this is a Castle curve. Codes 
on these curves have been studied by Geil in \cite{NormTrace}, where the reader can find proofs and details. 
\end{example}
  
\begin{example}[Generalized Hermitian curves]\label{example2.3} 
 For $r\geq 2$ let us 
consider the curve ${\mathcal{X}}_r$ over ${\mathbb F}_{q^r}$ defined by the affine 
equation 
 \begin{equation*}
   y^{q^{r-1}}+\ldots+y^q+y=x^{1+q}+\ldots +x^{q^{r-2}+q^{r-1}}\, 
 \end{equation*}
or equivalently by $s_{r,1}(y,y^q,\ldots,y^{q^{r-1}})=s_{r,2}(x,x^q,\ldots,x^{q^{r-1}})$, where 
$s_{r,1}$ and $s_{r,2}$ are respectively the first and second symmetric polynomials in $r$ variables. 
Note that ${\mathcal{X}}_2$ is the Hermitian curve. These curves were introduced by Garcia and Stichtenoth in 
\cite{GS}. They have $q^{2r-1}+1$ rational points. Let $Q$ be the only pole of $x$. Then 
$
H(Q)=\langle q^{r-1}, q^{r-1}+q^{r-2}, q^r+1\rangle
$. 
This semigroup is telescopic and hence symmetric (see e.g. \cite{HLP}). Therefore, ${\mathcal{X}}_r$ is 
a Castle curve.  AG-codes based on these curves were studied in 
\cite{Bulygin}  (binary case) and \cite{Sepulveda} (general case). 
\end{example}

The next proposition states a fundamental property of Castle curves.

\begin{proposition}\label{curveproperties1} 
Let ${\mathcal{X}}$ be a Castle curve with 
respect to a point $Q\in{\mathcal{X}}({\mathbb{F}}_q)$. Write $\calx(\fq)=\{Q,P_1,\dots,P_n\}$ and let $D=P_1+\dots+P_n$.
\begin{enumerate}
\item Let $f\in\call(\infty Q)$ be such that $v(f)=v_2$. For every $a\in\fq$ we have $\mbox{\rm div}(f-a)=D_a-v_2Q$ with $0\le D_a\le D$.
\item $D\sim nQ$.  
\end{enumerate}
\end{proposition}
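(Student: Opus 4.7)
My plan is to handle (1) first and deduce (2) as a short corollary.

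For (1), I would start by observing that $f\in\call(\infty Q)$ with $v(f)=v_2$ means $Q$ is the unique pole of $f$, of order $v_2$. Consequently, for any $a\in\fq$ the function $f-a$ has the same pole divisor $v_2Q$, so $\divi(f-a)=D_a-v_2Q$ with $D_a$ an effective divisor of degree $v_2$. The remaining task is to show $D_a\le D$, i.e.\ that every zero of $f-a$ (counted with multiplicity) is one of the rational points $P_i$.

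The crucial step will be a global degree count. Since each $P_i\in\{P_1,\dots,P_n\}$ is $\fq$-rational and $f$ is defined over $\fq$, the value $f(P_i)$ lies in $\fq$, so $P_i$ is a zero of $f-a$ for $a=f(P_i)$, contributing multiplicity at least $1$ to $D_{f(P_i)}$. Summing these inequalities over $a\in\fq$ gives
\begin{equation*}
D \;\le\; \sum_{a\in\fq} D_a.
\end{equation*}
Both sides are effective divisors. Their degrees are $n$ and $qv_2$ respectively; here I invoke the defining Castle equality $n=qv_2$, and this is exactly where the Castle hypothesis is used essentially. Matching degrees forces the inequality to be an equality $D=\sum_{a\in\fq}D_a$, and hence $D_a\le D$ for every $a\in\fq$, which proves (1).

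For (2), I will use the polynomial identity $\prod_{a\in\fq}(T-a)=T^q-T$, which yields $f^q-f=\prod_{a\in\fq}(f-a)$. Taking divisors and using (1),
\begin{equation*}
\divi(f^q-f)=\sum_{a\in\fq}\divi(f-a)=\sum_{a\in\fq}(D_a-v_2Q)=D-qv_2Q=D-nQ,
\end{equation*}
so $D-nQ$ is principal and $D\sim nQ$. The only real difficulty in the whole argument is the degree-matching step in (1); once the Castle equality $n=qv_2$ is plugged in, the rest is routine divisor arithmetic.
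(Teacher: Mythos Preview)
Your proof is correct and follows essentially the same approach as the paper: both arguments use the Castle equality $n=qv_2$ to force each fiber $f^{-1}(a)$, $a\in\fq$, to consist of exactly $v_2$ distinct rational points (the paper phrases this as a point count, you phrase it as the divisor inequality $D\le\sum_a D_a$ becoming an equality by degree comparison), and both prove (2) via the function $\phi=f^q-f$. Your treatment of (2) is slightly more explicit in computing the full divisor via the factorization $\prod_{a\in\fq}(f-a)$, whereas the paper simply checks $\phi\in\call(nQ-D)$, but the underlying idea is identical.
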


\begin{proof}
(1) The morphism $f:\calx\rightarrow {\mathbb P}^1$ has degree $v_2$ hence $\# f^{-1}(a)\le v_2$ for all $a\in \fq$. Since $\# \calx(\fq)=qv_2$ we conclude that $\# f^{-1}(a)= v_2$. Then there exist exactly $v_2$ points $P\in \calx(\fq)$ such that $f(P)=a$. (2) Consider the one-point codes $C(\calx,D,mQ)$ and the function $\phi=f^q-f$. $v(\phi)=qv_2=n$ and $\phi(P_i)=0$ for all $P_i$. Then $\phi\in\call(nQ-D)$ hence $D\sim nQ$.  
\end{proof}

\begin{corollary}\label{dualcastle}
Let ${\mathcal{X}}$ be a Castle curve of genus $g$ with 
respect to a point $Q\in{\mathcal{X}}({\mathbb{F}}_q)$. Let $\calx(\fq)=\{Q,P_1,\dots,P_n\}$ and $D=P_1+\dots+P_n$. Then 
$(n+2g-2)Q-D$ is a canonical divisor. 
\end{corollary}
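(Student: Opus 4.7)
The plan is to reduce everything to the observation, made just before Example \ref{hermitianex1}, that symmetry of $H(Q)$ is equivalent to $(2g-2)Q$ being a canonical divisor. Since ``canonical divisor'' designates a linear equivalence class, it suffices to produce any one representative and transport it.

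First I would invoke Proposition \ref{curveproperties1}(2), which is available because $\calx$ is Castle: it gives $D \sim nQ$. Consequently
\begin{equation*}
(n+2g-2)Q - D \;\sim\; (n+2g-2)Q - nQ \;=\; (2g-2)Q,
\end{equation*}
so the problem is now to show that $(2g-2)Q$ is canonical.

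Next I would use the second defining property of a Castle curve, namely that $H(Q)$ is symmetric. As explained in the paragraph preceding Example \ref{hermitianex1}, symmetry of $H(Q)$ is equivalent to $\ell((2g-2)Q) = g$. To upgrade this dimension statement to the conclusion that $(2g-2)Q$ lies in the canonical class, I would apply Riemann--Roch to $G = (2g-2)Q$ against a canonical divisor $W$: since $\deg G = 2g-2$,
\begin{equation*}
\ell(G) - \ell(W-G) = \deg(G) + 1 - g = g - 1,
\end{equation*}
so $\ell(W-G) = 1$. Because $\deg(W-G) = 0$, the only way $W-G$ can carry a nonzero global section is if $W - G \sim 0$, i.e.\ $(2g-2)Q \sim W$. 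Combining this with the equivalence of the first step yields $(n+2g-2)Q - D \sim W$, so $(n+2g-2)Q - D$ is canonical.

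There is no real obstacle here: both inputs (the symmetry of $H(Q)$ and the relation $D \sim nQ$) have already been secured in Proposition \ref{curveproperties1} and the earlier discussion of symmetric semigroups, and the only nontrivial glue is the standard Riemann--Roch argument that a degree $2g-2$ divisor with $\ell = g$ is canonical. The mildly subtle point to be careful about is that ``canonical divisor'' should be read as ``representative of the canonical class'', which is what makes the chain of linear equivalences into a valid proof.
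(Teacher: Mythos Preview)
Your proposal is correct and follows essentially the same route as the paper's own proof: reduce to $(n+2g-2)Q-D\sim(2g-2)Q$ via Proposition~\ref{curveproperties1}(2), then use symmetry of $H(Q)$ to conclude that $(2g-2)Q$ is canonical. The only difference is that the paper simply cites the earlier remark (before Example~\ref{hermitianex1}) for this last implication, whereas you additionally spell out the Riemann--Roch argument that a degree-$(2g-2)$ divisor with $\ell=g$ must lie in the canonical class; this extra detail is sound and makes the step self-contained.
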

\begin{proof}
$(n+2g-2)Q-D\sim (2g-2)Q$. Since $H$ is symmetric this is a canonical divisor. 
\end{proof}

\begin{remark}
Let $\phi$ be the function defined in the proof of Proposition \ref{curveproperties1}. It can be proved that the differential form $\omega=d\phi/\phi$ has simple poles and residue 1 at all points $P_i$. So $\omega$ is the differential form for which we asked in Proposition \ref{dualcode}.
\end{remark}

Let us remember that by
$\gamma_r$ we denote the $r$-th gonality of ${\mathcal{X}}$ over ${\mathbb{F}}_q$. 

\begin{proposition}\label{curveproperties2} Let ${\mathcal{X}}$ be a Castle curve with 
respect to a point $Q\in{\mathcal{X}}({\mathbb{F}}_q)$ with Weierstrass semigroup $H=\{ v_1=0,v_2,\dots\}$. If the multiplicity at $Q$ satisfies $v_2\leq q+1$, then 
\begin{enumerate}
\item $\gamma_i\le v_i$ for all $i=1,2,\dots$. 
\item $\gamma_2=v_2$. 
\item $\gamma_i=v_i$ for $i\geq g-\gamma_2+2$. 
\end{enumerate}
\end{proposition}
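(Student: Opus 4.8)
The plan is to exploit the fact that on a Castle curve the function $f$ with $v(f)=v_2$ gives a degree-$v_2$ morphism $\calx\to\pp^1$, together with the linear equivalence $D\sim nQ$ from Proposition \ref{curveproperties1}. For part (1), I would build divisors realizing the gonality bound directly: since $v_i\in H$, there exists $h_i\in\call(\infty Q)$ with $v(h_i)=v_i$, so $\ell(v_iQ)\ge i$ (the functions $h_1,\dots,h_i$ corresponding to $v_1,\dots,v_i$ are linearly independent by Proposition \ref{baseorden}). By the definition of the gonality sequence, $\gamma_i=\min\{\deg G:\ell(G)\ge i\}\le \deg(v_iQ)=v_i$. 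This establishes (1) with essentially no work beyond unwinding definitions; the hypothesis $v_2\le q+1$ is not even needed here.

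For part (2) I would prove the reverse inequality $\gamma_2\ge v_2$. Suppose $G$ is a rational divisor with $\ell(G)\ge 2$ and $\deg(G)=\gamma_2$. Then there is a nonconstant $\psi\in\call(G)$, which defines a morphism $\psi:\calx\to\pp^1$ of degree $\deg(\psi)_\infty\le \deg(G)=\gamma_2$. On the other hand, a nonconstant morphism to $\pp^1$ has some fiber $\psi^{-1}(a)$ consisting of at most $\deg\psi$ points, but every fiber over an $\fq$-rational value (or over $\infty$) is a union of closed points; the key input is the point count. Since $\calx$ has $qv_2+1$ rational points and the fibers over the $q$ finite rational values of $\psi$ together with the fiber over $\infty$ partition (a subset of) $\calx(\fq)$ into $q+1$ fibers, if $\deg\psi=\gamma_2\le v_2$ with $\gamma_2<v_2$, the total number of rational points is at most... here I must be careful: the clean argument is that a degree-$d$ morphism $\psi:\calx\to\pp^1$ satisfies $\#\calx(\fq)\le d\cdot\#\pp^1(\fq)=d(q+1)$ only when all fibers are rational, which need not hold. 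The hypothesis $v_2\le q+1$ enters precisely to make this count work: one shows $\#\calx(\fq)=qv_2\le q\gamma_2+ (\text{something} < q)$ forces $\gamma_2\ge v_2$, using that a fiber of size $\le \gamma_2\le q+1$ contributes at most $\gamma_2$ rational points and there are $q+1$ relevant fibers over $\pp^1(\fq)$, giving $qv_2+1\le \gamma_2(q+1)$, whence $\gamma_2\ge v_2 q/(q+1)+\text{correction}$; combined with $\gamma_2\le v_2$ and integrality (and $v_2\le q+1$), this pins down $\gamma_2=v_2$. This counting step is the main obstacle, and I expect it is where the hypothesis $v_2\le q+1$ is genuinely used.

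For part (3) I would invoke the symmetry properties available on both sides. The semigroup $H$ is symmetric (Castle hypothesis), so its gap set and pole set are interchanged under $r\mapsto 2g-1-r$; the gonality sequence satisfies the analogous symmetry $r\in GS(\calx)\iff 2g-1-r\notin GS(\calx)$, quoted in the excerpt from \cite{MTtrellis}. For $i\ge g-\gamma_2+2=g-v_2+2$ we have $v_i\ge v_{g-v_2+2}$; I would check that this threshold is exactly where $v_i$ becomes large enough that $2g-1-v_i$ is small enough to be controlled by parts (1) and (2). Concretely: $\gamma_i\le v_i$ always (part 1), and for the reverse one uses that if $\gamma_i<v_i$ then $\gamma_i$ is a ``pole number'' of the gonality sequence strictly below $v_i$, i.e. there are at least $i$ values of the gonality sequence $\le \gamma_i<v_i$, contradicting that $H$ has exactly $i-1$ pole numbers $<v_i$ once $v_i$ is past the range where the two symmetric structures can diverge — and that range has width controlled by $\gamma_2=v_2$. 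The bookkeeping comparing $\gaps(H)$ to the complement of $GS(\calx)$ in $[0,2g-1]$, both of size $g$, and using part (2) to align them above the threshold, is routine once set up but should be done carefully. I would close by noting $\gamma_2=v_2$ makes the threshold $g-\gamma_2+2$ meaningful and that the equality propagates upward by the symmetry dichotomy.
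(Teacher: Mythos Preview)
Your plan matches the paper's proof in all three parts: (1) is just the definition of $\gamma_i$ applied to $v_iQ$; (2) uses the inequality $\#\calx(\fq)\le \gamma_2(q+1)$ together with $v_2\le q+1$ and integrality; (3) appeals to the common symmetry $t\in S\iff 2g-1-t\notin S$ satisfied by both $H$ and $GS(\calx)$. One clarification on your hesitation in (2): the bound $\#\calx(\fq)\le d\,(q+1)$ for a degree-$d$ morphism $\psi:\calx\to\pp^1$ defined over $\fq$ holds unconditionally, since every $\fq$-rational point of $\calx$ maps to an $\fq$-rational point of $\pp^1$ and each fiber has at most $d$ points; there is no need for the fibers to be rational, so your worry there is unfounded and the paper simply asserts the inequality.
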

\begin{proof} 
(1) Follows from the definition of gonality.
(2) There is a non-constant morphism of  degree $\gamma_2$ from $\calx$ to the projective line. Then $qv_2+1=\#\calx(\fq)\le \gamma_2 (q+1)$, so 
 $(qv_2+1)/(q+1)=v_2-(v_2-1)/(q+1)\leq \gamma\leq v_2$. By our hypothesis $v_2\le q+1$, it holds that $(v_2-1)/(q+1)<1$ and we get the equality. 
(3) The statement about the gonalities of high order follows from the fact that both, the semigroup 
$H$ and the set of gonalities $GS({\mathcal{X}})=(\gamma_r)_{r\geq 1}$ verify the same symmetry property: for 
every integer $t$, it holds that $t\in H$ (resp. $t\in GS({\mathcal{X}})$) if and only if $2g-1-t\not\in H$ 
(resp. $2g-1-t\not\in GS({\mathcal{X}})$).
\end{proof}

\subsection{Codes on Castle curves}\label{s3}

Let ${\mathcal{X}}$ be a Castle curve of genus $g$ over ${\mathbb{F}}_q$ with $(n+1)$ 
${\mathbb{F}}_q$-rational points, ${\mathcal{X}}({\mathbb{F}}_q)=\{Q,P_1\ldots,P_n\}$. A {\em Castle code} is a one-point code  $C({\mathcal{X}},D, mQ)$ constructed from $\calx$ and  $\calp=\{P_1,\dots,P_n\}$. 
Let $H=H(Q)=\{0= v_1<v_2<\ldots \}$  be the Weierstrass semigroup of $Q$. The dimension set $M$ can be easily obtained:  by Propositions \ref{sim} and \ref{curveproperties1}, $M=\{ m\in H : m<n\}\cup\{n+l_1,\dots,n+l_g\}= H\setminus (n+H)$.  
Define the function $\iota=\iota_Q: \nn_0\rightarrow\nn$ by $\iota(m)=\max\{ i :v_i\le m \}$.
Note that $\iota(m)=\ell (mQ)$.

\begin{proposition}
Let $m$ be a nonnegative integer. The Castle code  $C({\mathcal{X}},D$,  $mQ)$ has
dimension $k=\iota(m)-\iota(m-n)$ and abundance $\iota(m-n)$.
\end{proposition}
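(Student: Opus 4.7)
The plan is to reduce everything to the basic identities for AG codes combined with the Castle property $D\sim nQ$.

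First I would recall the general dimension/abundance formulas for an AG code: from Theorem~\ref{parametrosag} and the definition given just before it, the code $C(\mathcal{X},D,mQ)$ has dimension $k=\ell(mQ)-\ell(mQ-D)$ and abundance $a=\ell(mQ-D)$. Since the hypothesis of being one-point ensures $\ell(mQ)=\iota(m)$ by the very definition $\iota(m)=\max\{i:v_i\le m\}=\dim\mathcal{L}(mQ)$, the whole statement reduces to verifying the single identity
\begin{equation*}
\ell(mQ-D)=\iota(m-n).
\end{equation*}

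The key input is Proposition~\ref{curveproperties1}(2), which asserts that on a Castle curve one has $D\sim nQ$. Consequently $mQ-D\sim (m-n)Q$, and since linearly equivalent divisors have isomorphic Riemann--Roch spaces (via multiplication by the function whose divisor realises the equivalence, as recalled right after the definition of $\mathcal{L}(G)$ in the section on algebraic curves), we obtain
\begin{equation*}
\ell(mQ-D)=\ell((m-n)Q)=\iota(m-n).
\end{equation*}
Substituting this into the dimension formula yields $k=\iota(m)-\iota(m-n)$ and the abundance $a=\iota(m-n)$.

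The only minor point that needs attention is the case $m<n$, where $(m-n)Q$ is not an effective divisor; here one uses the convention $\iota(m-n)=0$ when $m-n<0$, which is consistent with $\ell((m-n)Q)=0$ for a divisor of negative degree. There is really no technical obstacle: the whole argument is a one-line linear equivalence plus the fact, already stated in Proposition~\ref{curveproperties1}, that Castle curves satisfy $D\sim nQ$; all the actual work has been done in earlier sections.
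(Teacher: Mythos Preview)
Your proof is correct. In fact the paper states this proposition without proof, treating it as an immediate consequence of the preceding remarks: the identity $\iota(m)=\ell(mQ)$ is noted just before the proposition, and Proposition~\ref{curveproperties1}(2) supplies $D\sim nQ$. Your argument is precisely the computation the paper is tacitly invoking, including the handling of the case $m<n$.
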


We now turn to the minimum distance. 

\begin{proposition}\label{dparan} 
Let $C({\mathcal{X}},D, mQ)$ be a Castle code. Then 
\begin{enumerate}
\item for $1\le m<n$, $C({\mathcal{X}},D, mQ)$ reaches Goppa bound if and only if $C({\mathcal{X}},D$, $(n-m)Q)$ does.
\item For $1\le r\le q-1$, $d(C({\mathcal{X}},D,rv_2Q)=n-rv_2$. 
\item For $n-v_2\le m\le n$, $d(C({\mathcal{X}},D,mQ)=v_2$.
\end{enumerate}
\end{proposition}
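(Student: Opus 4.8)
The plan is to derive all three parts from the structural results on Castle curves already established, principally Proposition \ref{Goppaigual} (which characterizes equality in the Goppa bound via $G\sim D'$ with $0\le D'\le D$), Proposition \ref{curveproperties1} (the morphism $f$ of degree $v_2$ with $\divi(f-a)=D_a-v_2Q$ and the linear equivalence $D\sim nQ$), and the improved Goppa bound of Theorem \ref{parametrosag}. For part (1), I would use that $D\sim nQ$ by Proposition \ref{curveproperties1}(2). By Proposition \ref{Goppaigual}, $C(\calx,D,mQ)$ meets the Goppa bound exactly when there is $D'$ with $0\le D'\le D$ and $mQ\sim D'$. Then $(n-m)Q\sim nQ-mQ\sim D-D'$, and $D-D'$ satisfies $0\le D-D'\le D$, so the same criterion gives that $C(\calx,D,(n-m)Q)$ meets its Goppa bound; the argument is symmetric, giving the "if and only if".

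For part (2), take $f$ as in Proposition \ref{curveproperties1}(1), so $v(f)=v_2$ and for each $a\in\fq$ we have $\divi(f-a)=D_a-v_2Q$ with $0\le D_a\le D$, where each $D_a$ is a sum of exactly $v_2$ distinct points of $\calp$, and the $D_a$ for distinct $a$ are disjoint. Then for $1\le r\le q-1$ pick $r$ distinct values $a_1,\dots,a_r\in\fq$ and set $\psi=\prod_{j=1}^r(f-a_j)$. We get $\divi(\psi)=D'-rv_2Q$ with $D'=D_{a_1}+\dots+D_{a_r}$, and since the $D_{a_j}$ are disjoint we have $0\le D'\le D$ with $\deg D'=rv_2$. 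Note $rv_2<qv_2=n$ ensures $D'\neq D$, and $rv_2\in H$ so the dimension count is as expected. By Proposition \ref{Goppaigual}, $d(C(\calx,D,rv_2Q))=n-rv_2$. (One must also check $\psi\in\call(rv_2Q)$, which is immediate from its divisor, and that $rv_2Q$ is in range, i.e.\ $rv_2\le n$, which holds.)

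For part (3), the lower bound $d\ge v_2$ should come from the improved Goppa bound of Theorem \ref{parametrosag}, $d\ge n-\deg(G)+\gamma_{a+1}$, together with Proposition \ref{curveproperties2}(2), which gives $\gamma_2=v_2$ (using the Castle hypothesis; if $v_2>q+1$ one argues instead that $\gamma_2\ge v_2$ cannot fail to give at least $v_2$ by the inequality $\gamma_2\ge\#\calx(\fq)/(q+1)-\dots$, but in the cases of interest $v_2\le q+1$). Concretely, when $n-v_2\le m\le n$, for $m\ge n$ the abundance is $a=\iota(m-n)\ge 1$, so $d\ge n-m+\gamma_{a+1}\ge n-m+\gamma_2=n-m+v_2\ge v_2$; for $n-v_2\le m<n$ the abundance is $0$ and $d\ge n-m\ge v_2$. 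Hmm — the case $m<n$ needs more care, since $n-m$ could be strictly between $0$ and $v_2$; there one should instead observe that any nonzero $f\in\call(mQ)$ with $m<n$ has $\divi(f)+mQ\ge0$ hence at most $m$ zeros among the $P_i$ counted with multiplicity, but actually the number of zeros is at most $\deg$ of the positive part, and the relevant bound is $n-\deg(\text{zero part among }\calp)$; I'd want $\ell(mQ-D')>0$ forcing $mQ\sim D'$ of degree $m\ge n-v_2$, and then use that the morphism $f$ realizing the gonality has degree $v_2$, so $D'$ (being linearly equivalent to $mQ$ of degree $\ge n-v_2 = n-\gamma_2$) cannot be supported on fewer than... this is where the symmetry of $GS(\calx)$ from Proposition \ref{curveproperties2}(3) enters. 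The upper bound $d\le v_2$: exhibit a codeword of weight exactly $v_2$ using the function $f$ from Proposition \ref{curveproperties1}(1) — for $m=n$, $\divi(f^{q-1}\cdot(\text{something}))$ or more simply a product $\prod_{j=1}^{q-1}(f-a_j)\in\call((q-1)v_2Q)\subseteq\call(nQ)\subseteq\call(mQ)$ which vanishes at all but $v_2$ of the $P_i$; for general $m$ in the range one adjusts. The main obstacle I anticipate is the lower bound $d\ge v_2$ in the subrange $n-v_2\le m<n$, where the naive Goppa bound gives only $n-m$ which may be less than $v_2$, so one genuinely needs the gonality term $\gamma_{a+1}$ together with $\gamma_2=v_2$, and must handle the bookkeeping of when the abundance is positive versus zero and invoke Proposition \ref{curveproperties2} correctly.
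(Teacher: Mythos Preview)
Your arguments for parts (1) and (2) are correct and match the paper's proof essentially verbatim: the paper too uses $D\sim nQ$ with $D''=D-D'$ for (1), and invokes Propositions \ref{Goppaigual} and \ref{curveproperties1}(1) for (2) (your explicit product $\prod(f-a_j)$ is exactly what underlies that citation).

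The gap is in part (3), and you have correctly located it yourself: for $n-v_2<m<n$ the abundance is $0$, so the improved Goppa bound collapses to $d\ge n-m<v_2$, and your attempts to rescue this via the gonality sequence or the structure of $D'$ do not lead anywhere concrete. The paper avoids this difficulty entirely by a sandwich argument using monotonicity along the chain. Since $C(\calx,D,(n-v_2)Q)\subseteq C(\calx,D,mQ)\subseteq C(\calx,D,nQ)$, one has
\[
d(C(\calx,D,(n-v_2)Q))\ \ge\ d(C(\calx,D,mQ))\ \ge\ d(C(\calx,D,nQ)).
\]
The left end equals $v_2$ by part (2) with $r=q-1$ (note $(q-1)v_2=n-v_2$). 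For the right end, one applies the improved Goppa bound \emph{only at $m=n$}: there the abundance is $a=\ell(nQ-D)=\ell(0)=1$ because $D\sim nQ$ (Proposition \ref{curveproperties1}(2)), so $d(C(\calx,D,nQ))\ge n-n+\gamma_2=v_2$ via Proposition \ref{curveproperties2}(2). This squeezes all intermediate $m$ without ever needing to analyse their abundance individually. Your upper-bound idea (exhibit a weight-$v_2$ codeword via $\prod_{j=1}^{q-1}(f-a_j)$) is correct but is subsumed by the left end of the sandwich.
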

     
\begin{proof} 
(1) As seen in Proposition \ref{Goppaigual}, 
$C({\mathcal{X}},D, mQ)$ reaches equality in the Goppa bound if and only if then there exists $D', 0\le D'\le D$ such that $mQ\sim D'$. 
Let $D''=D-D'$. Thus $mQ\sim D-D''\sim nQ-D''$, hence $(n-m)Q\sim D''$ and the code $C({\mathcal{X}},D, (n-m)Q)$ also 
reaches equality in the Goppa bound.
(2) Follows from  Propositions \ref{Goppaigual} and \ref{curveproperties1}(1).
(3) $v_2=d(C({\mathcal{X}},D,(n-v_2)Q)\ge d(C({\mathcal{X}},D,mQ)\ge d(C({\mathcal{X}},D,nQ)\ge v_2$. The first equality comes from item (2) of this proposition and the last inequality is the improved Goppa bound on the minimum distance. 
\end{proof}

\begin{example}
The bound $d_{ORD}$ was computed for codes on the Suzuki curve over $\mathbb{F}_8$ in Example \ref{ExSuzuki}. In particular we found the result $d(C({\mathcal{S}},D, 62Q))\ge d(C({\mathcal{S}},D, 63Q))\ge 6$. By using Proposition \ref{dparan} we get now $d(C({\mathcal{S}},D, 62Q))= d(C({\mathcal{S}},D, 63Q))= 8$. So this last one is a $[64,50, 8]$ code and again we the get a code with the best known parameters according to \cite{mint}. Furthermore this fact  shows that the bound $d_{ORD}$ does not always improve on the improved Goppa bound $d(C({\mathcal{X}},D, mQ))\ge n-\deg(G)+\gamma_{a+1}$.
\end{example}

The cardinalities $\# \Lambda^*$ can be now computed in a simple way. 

\begin{lemma}\label{simetria}
For Castle codes it holds that $M=\{m\in H : n+2g-1-m\in H\}$.
As a consequence, $m_{n-r+1}=n+2g-1-m_r$ for $r=1,\dots,n$.
\end{lemma}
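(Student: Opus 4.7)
The plan is to combine the identity $M = H\setminus(n+H)$, which was essentially established earlier, with the symmetry of the Weierstrass semigroup that Castle curves enjoy.

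First I would recall that in the Castle setting we have $D\sim nQ$ by Proposition \ref{curveproperties1}(2). Together with Propositions \ref{ell} and \ref{sim} (in the formulation given just before the lemma), this yields
\begin{equation*}
M = \{m\in H : m<n\}\cup\{n+l_1,\dots,n+l_g\} = H\setminus(n+H),
\end{equation*}
since for $m\ge n$ the condition $m\notin n+H$ is equivalent to $m-n\notin H$, and for $m<n$ it holds automatically (as $H\subseteq\nn_0$).

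Next I would invoke the symmetry of $H$, which is available because $\calx$ is Castle: for every integer $t$ one has $t\in H$ if and only if $2g-1-t\notin H$. Applying this with $t=m-n$ gives
\begin{equation*}
m-n\notin H \iff 2g-1-(m-n)\in H \iff n+2g-1-m\in H.
\end{equation*}
Combining with the identity for $M$, for every $m\in H$ we obtain $m\in M$ iff $n+2g-1-m\in H$, which is exactly the claimed description of $M$.

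For the consequence, I would note that the map $\tau:m\mapsto n+2g-1-m$ is an involution of $\nn_0$ (or more precisely, of the relevant integers) that sends $M$ to $M$: if $m\in M$ then $m\in H$ and $n+2g-1-m\in H$ by the characterization just proved, and the roles of these two elements are symmetric, so $\tau(m)\in M$ as well. Since $\tau$ reverses the natural order on $\nn_0$, it sends the $r$-th smallest element of $M$ to the $r$-th largest, i.e.\ to the $(n-r+1)$-th smallest. Hence $m_{n-r+1}=\tau(m_r)=n+2g-1-m_r$ for $r=1,\dots,n$. There is no real obstacle beyond carefully unwinding the symmetry of a symmetric numerical semigroup; the main conceptual content is that the Castle hypotheses (symmetry of $H$ and $D\sim nQ$) combine to make $M$ itself a symmetric object inside $\{0,1,\dots,n+2g-1\}$.
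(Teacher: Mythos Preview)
Your argument is correct. It differs from the paper's route, so a short comparison is in order.

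The paper works directly from Proposition~\ref{ell}, applying Riemann--Roch to the divisor $mQ-D$ (using that $D\sim nQ$ and that $(2g-2)Q$ is canonical because $H$ is symmetric) to obtain $\ell(mQ-D)=m-n+1-g+\ell((n+2g-2-m)Q)$; subtracting the analogous identity for $m-1$ shows that $\ell(mQ-D)=\ell((m-1)Q-D)$ holds exactly when $n+2g-1-m\in H$. You instead take the combinatorial shortcut: the identity $M=H\setminus(n+H)$ has already been recorded just before the lemma, and you combine it with the defining symmetry $t\in H\Leftrightarrow 2g-1-t\notin H$ of a symmetric semigroup (applied to $t=m-n$). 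Both arguments use exactly the same two Castle ingredients, $D\sim nQ$ and the symmetry of $H$; the paper feeds them into Riemann--Roch, while you feed them into the pre-established description of $M$. Your version is a bit more elementary and makes the involution $m\mapsto n+2g-1-m$ on $M$ (and hence the index identity $m_{n-r+1}=n+2g-1-m_r$) completely transparent; the paper's version has the virtue of being self-contained and not relying on the earlier computation of $M$.
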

\begin{proof}
Let $m\in H$. From Riemann-Roch theorem, $\ell(mQ-D)=m-n+1-g+\ell((n+2g-2-m)Q)$, hence $\ell(mQ)=\ell((m-1)Q)$ if and only if $\ell((n+2g-2-m)Q)=\ell((n+2g-1-m)Q)$, that is if and only if $n+2g-1-m\in H$.
The conclusion $m_{n+1-r}=n+2g-1-m_r$ is clear.
\end{proof}

For  $i=1,\dots,n$, let $L_i=m_i+\gaps(H)=\{m_i+l_1,\dots,m_i+l_g\}$.

\begin{proposition}\label{final}
For Castle codes, $\#\Lambda^*_i=n-i+1-\# (L_i\cap M)$.
\end{proposition}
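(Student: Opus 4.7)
The proof will be a clean counting argument based on the alternative description
$\Lambda^*_i=(m_i+H)\cap M$ derived right after Lemma \ref{nle:3.1}, so the idea is to turn the identity into a disjoint-union statement on the slice $\{m\in M: m\ge m_i\}$ and then just count.

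The plan is as follows. First I would recall that $\#\Lambda^*_i$ counts the indices $j$ with $m_i+m_j\in M$, and that this equals $\#((m_i+H)\cap M)$: if $m\in M$ and $m-m_i\in H$, then $m-m_i$ cannot lie in $\bar H=H\setminus M$, since the property $\bar H+H\subseteq\bar H$ noted after Lemma \ref{nle:3.1} would then force $m=(m-m_i)+m_i\in\bar H$, contradicting $m\in M$. Hence $m-m_i\in M$, giving a bijection between the $j$'s counted by $\Lambda^*_i$ and the elements of $(m_i+H)\cap M$.

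Next, I would slice $M$ at $m_i$: since $M=\{m_1<m_2<\dots<m_n\}$, the set $\{m\in M:m\ge m_i\}$ has exactly $n-i+1$ elements. For any such $m$, the nonnegative integer $m-m_i$ lies either in $H$ or in $\gaps(H)$, so
\begin{equation*}
\{m\in M:m\ge m_i\}=\bigl((m_i+H)\cap M\bigr)\,\sqcup\,\bigl((m_i+\gaps(H))\cap M\bigr).
\end{equation*}
By definition the second piece is $L_i\cap M$. Taking cardinalities,
\begin{equation*}
n-i+1=\#\Lambda^*_i+\#(L_i\cap M),
\end{equation*}
which rearranges to the claimed formula.

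There is no real obstacle; the only thing to be careful about is that the Castle hypothesis is used only implicitly, through the fact (established earlier for all order-domain one-point codes) that $\bar H+H\subseteq\bar H$. The value of stating the identity in the Castle setting is computational: combined with Lemma \ref{simetria} and $M=H\setminus(n+H)$, both $n-i+1$ and $L_i\cap M$ are directly readable off the Weierstrass semigroup, so $\#\Lambda^*_i$ (and hence $d_{ORD}$) becomes effectively computable from $H$ alone.
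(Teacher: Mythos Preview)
Your proof is correct and is in fact simpler than the paper's. The paper decomposes the complement $M\setminus\Lambda^*_i$ into two pieces $U_i=\{m_j\in M:m_i+m_j<n+2g,\ m_i+m_j\notin M\}$ and $V_i=\{m_j\in M:m_i+m_j\ge n+2g\}$, and then computes $\#U_i=\#(L_i\cap M)$ and $\#V_i=i-1$ using the explicit Castle structure: the symmetric description $M=\{0,\dots,n+2g-1\}\setminus L$ and the reflection $m_{n+1-r}=n+2g-1-m_r$ from Lemma~\ref{simetria}. Your argument bypasses all of this by slicing $M$ at $m_i$ and using only the general identity $\Lambda^*_i=(m_i+H)\cap M$, which was already established for arbitrary one-point codes before the Castle section.

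One minor correction to your commentary: the inclusion $\bar H+H\subseteq\bar H$ is not a consequence of the Castle hypothesis, even implicitly; it follows from Lemma~\ref{nle:3.1}(2) alone and holds for any one-point code. So your argument actually proves the formula $\#\Lambda^*_i=n-i+1-\#(L_i\cap M)$ in that generality. The Castle hypothesis only enters, as you correctly note at the end, when one wants to read $L_i\cap M$ directly from $H$ via $M=H\setminus(n+H)$.
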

\begin{proof}
Since $M=\{ m\in H : m<n\}\cup\{n+l_1,\dots,n+l_g\}= H\setminus (n+H)$ and $H$  
is symmetric, we have $M=\{0,\dots,n+2g-1\}\setminus L$, where  $L=\{l_1,\dots,l_g, n+2g-l_g-1,\dots,n+2g-l_1-1\}$. For $i=1,\dots,n$, let
\begin{eqnarray*}
U_i&=&\{m_j\in M : m_i+m_j<n+2g, m_i+m_j \not\in M\}, \\
V_i&=&\{m_j\in M : m_i+m_j\ge n+2g\}. 
\end{eqnarray*}
Clearly $\#\Lambda^*_i=\#\{ m_j : m_i+m_j\in M\}=\#(M\setminus (U_i\cup V_i))=n-\# U_i-\# V_i$. Since $M\subset H$, we have $U_i=\{m_j\in M : m_i+m_j\in L\}=\{ n+2g-1-l_g-m_i,\dots,n+2g-1-l_1-m_i\}\cap M$. According to Lemma \ref{simetria}, $\# U_i=\# (L_i\cap M)$. Besides $\# V_i=i-1$. In fact, if $m_i+m_j\ge n+2g$, from Lemma \ref{simetria}, we can write $m_j=n+2g-1-m_t$ with $t=n-j+1$. Then $n+2g-1+m_i-m_t>n+2g-1$ if and only if $m_i>m_t$ and there exists $i-1$ such choices for $m_t$. 
\end{proof}  

Then for Castle codes we have 
\begin{equation*}
d(C(\calx,D,m_kQ))\ge d_{ORD}(k)=\min\{ n-r+1-\#(L_r\cap M) : r\le k\}.
\end{equation*}

\begin{example}[Hermitian codes]    
The minimum distances of Hermi\-tian codes $C(\calh,D,mQ)$ were computed in Example \ref{ExHermitian} for  $m$ in the range $0\le m\le n-q^2$. We shall study now the case $n-q^2<m<n$. Note that all $m$ in this range are pole numbers and  $n-m\le n-q^2$. Write $m=n-aq-b$ with $0\le a,b <q$. If $b\le a$ then $n-m\in H$ hence Proposition \ref{dparan}(1) and Example \ref{ExHermitian} ensure that $C(\calh,D,mQ)$ reaches the Goppa bound, $d(C(\calh,D,mQ))=d_G(C(\calh,D,mQ))=n-m=aq+b$. If $b>a$, then
\begin{eqnarray*}
d(C(\calh,D,(n-aq-a-1)Q))\le d(C(\calh,D,(n-aq-b)Q) \\ \le d(C(\calh,D,(n-(a+1)q)Q))=(a+1)q.
\end{eqnarray*}
A straightforward computation using Proposition \ref{final} shows that 
\begin{equation*}
d_{ORD}(C(\calh,D,(n-aq-a-1)Q))=(a+1)q
\end{equation*}
so we get equality,
$d(C(\calh,D,(n-aq-b)Q))=(a+1)q$.
\end{example} 

Finally we sate a duality property of Castle codes. As a consequence of Propositions \ref{isometriccodes}, \ref{dualll} and Corollary \ref{dualcastle}, we have the following.

\begin{proposition}\label{tralara}
For Castle codes, there exist $\bx\in(\fq^*)^n$ such that  $C(\calx,D$, $m_kQ)^{\perp}=\bx*C(\calx,D,(n+2g-2-m_k)Q)$ for all $k=1,\dots,n$.
\end{proposition}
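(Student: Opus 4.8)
The plan is to assemble the statement from three ingredients already proved in the chapter: the duality theorem for AG codes (Theorem~\ref{dualcode}), the specific description of a suitable differential form on a Castle curve, and the isometry result (Proposition~\ref{isometriccodes}). The key point is that for a Castle curve we have enough control over a canonical divisor to make the dual of a one-point code linearly equivalent (rather than equal) to another one-point code, and then an isometry absorbs the discrepancy.

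First I would invoke Theorem~\ref{dualcode}: there is a differential form $\omega$ with simple poles and residue $1$ at every $P_i\in\calp$, and if $W=\divi(\omega)$ then $C(\calx,D,m_kQ)^{\perp}=C(\calx,D,D+W-m_kQ)$. Next I would pin down $W$. By Corollary~\ref{dualcastle}, $(n+2g-2)Q-D$ is a canonical divisor on $\calx$; since any two canonical divisors are linearly equivalent, $W\sim (n+2g-2)Q-D$, so $D+W-m_kQ\sim (n+2g-2-m_k)Q$. (Alternatively one may use the Remark following Corollary~\ref{dualcastle}, which says $\omega=d\phi/\phi$ for $\phi=f^q-f$ has the required poles and residues and $\divi(\omega)=(n+2g-2)Q-D$ on the nose; either route gives the linear equivalence, and the second even gives equality of divisors.) Set $G=D+W-m_kQ$ and $G'=(n+2g-2-m_k)Q$, so $G\sim G'$ and both have support disjoint from $\calp$.

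Finally I would apply Proposition~\ref{isometriccodes} with $\sigma=\mathrm{id}$ (so $D_\sigma=D$): since $G\sim G'$, the codes $C(\calx,D,G)$ and $C(\calx,D,G')=C(\calx,D,(n+2g-2-m_k)Q)$ are isometric, and the proof of that proposition exhibits the isometry explicitly as componentwise multiplication by $ev_\calp(\psi)$ where $\psi$ is a rational function with $\divi(\psi)=G-G'$. Thus there is a vector $\bx_k=ev_\calp(\psi)\in(\fq^*)^n$ with $C(\calx,D,m_kQ)^{\perp}=\bx_k*C(\calx,D,(n+2g-2-m_k)Q)$. It remains to observe that the scaling vector can be chosen uniformly in $k$: the linear equivalence $D+W-m_kQ\sim(n+2g-2-m_k)Q$ comes from the single function $\psi_0$ with $\divi(\psi_0)=D+W-(n+2g-2)Q$ (independent of $k$), because $(D+W-m_kQ)-(n+2g-2-m_k)Q=D+W-(n+2g-2)Q$ does not involve $k$. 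Hence $\bx=ev_\calp(\psi_0)\in(\fq^*)^n$ works for all $k$ simultaneously, which is the claim.

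The only genuinely delicate point is making sure the isometry vector is independent of $k$; this is precisely why I would compute the divisor difference $(D+W-m_kQ)-(n+2g-2-m_k)Q$ and notice the $m_k$-terms cancel. Everything else is a direct citation of Theorem~\ref{dualcode}, Corollary~\ref{dualcastle}, and Proposition~\ref{isometriccodes}, so no substantial new computation is needed. One should also note that $ev_\calp(\psi_0)$ has all nonzero entries because $\psi_0$ has neither zeros nor poles at the $P_i$ (its divisor $D+W-(n+2g-2)Q$ is supported away from $\calp$, as $W$ has its poles exactly at the $P_i$ with the $D$-part cancelling them), so indeed $\bx\in(\fq^*)^n$.
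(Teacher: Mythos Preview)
Your argument is correct and is essentially the approach the paper indicates: it simply records the statement ``as a consequence of Propositions~\ref{isometriccodes}, \ref{dualll} and Corollary~\ref{dualcastle}'' without further detail, and you have unpacked precisely that chain (using Theorem~\ref{dualcode} in place of Proposition~\ref{dualll}, which is in fact the cleaner choice since Corollary~\ref{dualcastle} only gives $W\sim (n+2g-2)Q-D$, not equality). Your explicit verification that the divisor difference $(D+W-m_kQ)-(n+2g-2-m_k)Q=D+W-(n+2g-2)Q$ is independent of $k$, so that a single $\bx$ works for all $k$, is a point the paper leaves implicit; it is a genuine detail worth spelling out. One tiny wording issue: in the final parenthetical you say ``$W$ has its poles exactly at the $P_i$'', but all you need (and all you know) is that the coefficient of each $P_i$ in $W$ is $-1$; $\omega$ could in principle have further poles elsewhere, but that does not affect the cancellation $D+W$ at the $P_i$.
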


Codes verifying the duality relation of the above proposition are called {\em isometry dual}.
Let ${\mathcal B} =\{ {\mathbf b}_1, \dots, {\mathbf b}_n\}$ be a basis of ${\mathbb F}_q^n$ such that  $C(\calx,D, m_rQ)=\langle {\mathbf b}_1, \dots, {\mathbf b}_r\rangle$, $r=1,\dots,n$. A vector $\bx\in(\fq^*)^n$ providing the isometries stated in the proposition  can be explicitly obtained from the duality relations, which lead to the system of linear equations $({\mathbf b}_i* {\mathbf b}_j)\cdot \bx=0$, $i+j\le n$. 

Since isometric codes have equal minimum distance, we can obtain estimates on the minimum distance of Castle codes by using both the order and dual order bounds. It can be proved that both bounds give the same result. 

\begin{proposition}\label{ult}
For Castle codes we have $\# N^*_{n-r}=\# \Lambda^*_r$, $r=1,\dots,n$. As a consequence 
\begin{equation*}
d_{ORD}(C(\calx,D,m_kQ))=\min\{ \# N^*_r : r=n-k,\dots,n-1\}.
\end{equation*}
\end{proposition}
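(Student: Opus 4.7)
The plan is to exhibit an explicit bijection $\Lambda^*_r \to N^*_{n-r}$ and then derive the equality of the two bounds by rewriting the minimum.

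The symmetry of $M$ provided by Lemma \ref{simetria}, namely $m_{n-s+1}=n+2g-1-m_s$, is the key tool. Given $(r,j)\in \Lambda^*_r$, the sum $m_r+m_j$ lies in $M$, so we may write $m_r+m_j=m_s$ for a unique index $s\in\{1,\dots,n\}$. I would define
\begin{equation*}
\phi(r,j)=(j,\,n-s+1).
\end{equation*}
To see $\phi(r,j)\in N^*_{n-r}$, I compute
\begin{equation*}
m_j+m_{n-s+1}=m_j+(n+2g-1-m_s)=n+2g-1-m_r=m_{n-r+1},
\end{equation*}
which is exactly the defining condition for membership in $N^*_{n-r}$.

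For the inverse, take $(i,j)\in N^*_{n-r}$, i.e.\ $m_i+m_j=m_{n-r+1}=n+2g-1-m_r$. Then
\begin{equation*}
m_r+m_i=n+2g-1-m_j=m_{n-j+1}\in M,
\end{equation*}
so $(r,i)\in\Lambda^*_r$; sending $(i,j)\mapsto(r,i)$ and checking that $\phi(r,i)=(i,n-(n-j+1)+1)=(i,j)$ confirms this is the inverse of $\phi$. This gives $\#\Lambda^*_r=\#N^*_{n-r}$ for all $r=1,\dots,n$.

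For the \emph{consequence}, I apply this identity inside the definition $d_{ORD}(C(\calx,D,m_kQ))=\min\{\#\Lambda^*_r:r=1,\dots,k\}$. As $r$ runs over $\{1,\dots,k\}$, the index $s=n-r$ runs over $\{n-k,\dots,n-1\}$, and the bijection turns the minimum over $\#\Lambda^*_r$ into the minimum over $\#N^*_s$ on that range, yielding the stated equality.

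The only delicate point I expect is checking that the intermediate sum $m_r+m_i$ actually lands in $M$ (not merely in $H$), which is what allows $\phi$ and its inverse to be defined; this is precisely what the symmetry $m_r+m_i=m_{n-j+1}$ provides, so the apparent obstacle dissolves once the symmetry of Lemma \ref{simetria} is invoked. No well-behaving verification is needed separately, since Proposition \ref{nwbp} guarantees that any pair whose product-order lies in $M$ is automatically well-behaving, so the descriptions of $\Lambda^*_r$ and $N^*_s$ used here coincide with the original ones.
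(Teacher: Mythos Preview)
Your argument is correct and follows essentially the same route as the paper: both rely on the symmetry $m_{n-s+1}=n+2g-1-m_s$ from Lemma \ref{simetria} to set up a bijection between $\Lambda^*_r$ and $N^*_{n-r}$, and then reindex the minimum. The only cosmetic difference is that the paper's bijection sends $(i,j)\in N^*_{n-r}$ to $(r,j)\in\Lambda^*_r$ (retaining the second coordinate), whereas your $\phi$ sends $(r,j)$ to $(j,n-s+1)$; the two differ by the coordinate swap in the symmetric set $N^*_{n-r}$, so nothing is lost or gained.
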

\begin{proof}
According to Lemma \ref{simetria}, for Castle codes it holds that $m_{n+1-r}=n+2g-1-m_r$. Then
\begin{eqnarray*}
\# N^*_{n-r} &=& \#\{ (i,j) : m_i+m_j=m_{n-r+1}\} \\
          &=& \#\{ (i,j) : m_r+m_j=m_{n-i+1}\} \\
          &=& \#\{ (r,j) : m_r+m_j\in M\} \\
          &=& \# \Lambda^*_r.
\end{eqnarray*}
The conclusion is clear.
\end{proof}

\subsection{Bibliographical notes}

Castle curves and codes were introduced in \cite{castle1} and generalized in \cite{castle2}. The computation of  $d_{ORD}$ for some Castle codes (including all Hermitian and Suzuki codes) can be found in the article \cite{olaya}. For Hermitian codes this bound provides the true minimum distance of $C(\calh,D,mQ)$ for all $m$, see \cite{HLP}. Such distances were first computed by K. Yang and P.V. Kumar in \cite{YK} (without using order bounds).

\section{Feng-Rao decoding}

In this section we  show a very general decoding method for codes $\cod_k$ belonging to chains, as those treated in Section \ref{Sect1}. Keeping the notations used in that section, let ${\mathcal B}=\{{\mathbf b}_1,\dots,{\mathbf b}_n\}$ be a basis of $\fq^n$ and $\cod_r=\langle {\mathbf b}_1,\dots,{\mathbf b}_r\rangle$, $r=1,\dots,n$. By using the information given by the whole chain $\cod_0=(\bcero)\subset\cod_1\subset\cdots\subset\cod_n=\fq^n$ we can decode $\cod_k$.

If these codes are one-point AG codes, $\cod_r=C(\calx,D,m_rQ)$, then we take the basis vectors ${\mathbf b}_1=ev(\phi_1), \dots, {\mathbf b}_n=ev(\phi_n)$, where $v(\phi_r)=m_r$, as treated in previous sections.

\subsection{Preparation step}

Our decoding algorithm works for dual codes. hence we first
consider a dual basis $\cald=\{\bh_1,\dots,\bh_n\}$ of $\fq^n$ verifying  
\begin{equation*}
{\mathbf b}_i\cdot\bh_j=\left\{ \begin{array}{ll}
0         & \mbox{if $i+j<n+1$} \\
\neq 0    & \mbox{if $i+j=n+1$}
\end{array} \right.
\end{equation*}
where $\cdot$ stands for the usual inner product in $\fq^n$. These conditions imply the duality relations
\begin{eqnarray*}
\langle \bh_1,\dots,\bh_{n-r}\rangle=\cod_r^{\perp}=\langle {\mathbf b}_1,\dots,{\mathbf b}_r\rangle^{\perp}
\end{eqnarray*}
or equivalently
$\langle {\mathbf h}_1,\dots,{\mathbf h}_r\rangle^{\perp}=\cod_{n-r}$
for all $r=1,\dots,n$. If the chain $\cod_0=(\bcero)\subset\cod_1\subset\cdots\subset\cod_n=\fq^n$ verifies a duality relation $\cod_r^{\perp}=\cod_{n-r}$, $r=0,\dots,n$, then we take ${\mathbf h}_i={\mathbf b}_i$. If the chain verifies an isometry-dual relation $\cod_r^{\perp}=\bx*\cod_{n-r}$,  $r=0,\dots,n$ (the case of Castle codes), then we take
${\mathbf h}_i=\bx*{\mathbf b}_i$, $i=1,\dots,n$.

Once the basis $\cald$ has been fixed, we consider the dual chain 
\begin{eqnarray*}
\cod_n^{\perp}=(\bcero)\subset \cod_{n-1}^\perp \subset\cdots\subset \cod_{k+1}^{\perp}\subset\cod_k^{\perp}\subset\cdots\subset\cod_0^{\perp}=\fq^n
\end{eqnarray*}
and let $\rho_{\mathcal D}:{\mathbb F}_q^n \rightarrow \{0,\dots,n\}$ be the sorting map relative to the basis $\cald$,
defined by $\rho_{\mathcal D}({\mathbf v})=\min \{ i : {\mathbf v} \in \langle \bh_1,\dots,\bh_i\rangle\}$ if $\bv\neq \bcero$.  A pair of basis vectors $(\bh_r,\bh_s)$ is well-behaving with respect to $\cald$ if for all $(i,j)\prec (r,s)$ we have $\rho_{\cald}(\bh_i*\bh_j)<\rho_{\cald}(\bh_r*\bh_s)$. Remember that for $r=0,1,\dots,n-1$, we define the sets
\begin{equation*}
N_r=\{ (i,j) : \mbox{ $(\bh_i,\bh_j)$ is well-behaving with respect to $\cald$ and $\rho_{\cald}(\bh_i*\bh_j)=r+1$} \}.
\end{equation*} 
All these sets are precomputed in the preparation step. The dual order bound with respect to ${\mathcal D}$, stated in Theorem \ref{AndGeil-dual}, ensures that the minimum distance of $\cod_k=\langle {\mathbf h}_1,\dots,{\mathbf h}_{n-k} \rangle^{\perp}$ 
satisfies $d(\cod_k) \geq \delta= \min \{ \# N_r : r=n-k,\dots, n-1\}$.  
We can decode $\cod_k$  up to $(\delta-1)/2$ errors by using majority voting. 

When we consider one-point AG codes then we can manage the sets $N_r^*$ instead of $N_r$. If these codes are Castle, Proposition \ref{ult} implies that the Feng-Rao algorithm corrects errors of weight up to one half the order bound.

\subsection{Syndromes}

Let $\bu=\bc+\be$  be a received word, where $\bc\in \cod_k$ and $\be$ is  the error vector. Assume  $wt(\be)\le (\delta-1)/2$. To decode $\bu$ we shall compute the syndromes
\begin{equation*}
s_1=\bh_1\cdot \be, \dots, s_n=\bh_n\cdot \be.
\end{equation*} 
Consider the matrix $\bH$ whose rows are the vectors $\bh_1,\dots,\bh_n$. $\bH$ has full rank $n$ and $\bH\be^T=\bs^T$, where $\bs=(s_1,\dots,s_n)$. Once all  one-dimensional syndromes $s_i$ are known we can deduce the error vector by solving a system of linear equations. Note that $s_1,\dots,s_{n-k}$ can be derived from $\bu$: as $\cod_k^{\perp}=\langle {\mathbf h}_1,\dots,{\mathbf h}_{n-k}\rangle$, for $i=1,\dots,n-k$, we have
\begin{equation*}
\bh_i\cdot \bu =\bh_i\cdot(\bc+\be)=\bh_i\cdot\be=s_i.
\end{equation*}
In order to compute $s_{n-k+1},\dots,s_n$, we shall use two-dimensional syndromes
\begin{equation*}
s_{rt}=(\bh_r*\bh_t)\cdot\be, \; 1\le r,t\le n.
\end{equation*}
Let $\bS$ be the matrix $\bS=(s_{rt})$, $1\le r,t\le n$. As seen in Section \ref{ctdul},  this matrix can be written also as $\bS=\bH \bD(\be) \bH^T$, where $\bD(\be)$ is the diagonal matrix with $\be$ in its diagonal.  Since $\bH$ has full rank, we have $\rank (\bS)=\rank (\bD(\be))=wt(\be)$. For $1\le i,j\le n$ let us consider the submatrix of $\bS$
\begin{equation*}
\bS(i,j)=(s_{rt}), \; 1\le r\le i, 1\le t\le j.
\end{equation*}
An entry $(i,j)$ is a {\em discrepancy} of $\bS$ if $\rank(\bS(i-1,j-1))=\rank(\bS(i-1,j))=\rank(\bS(i,j-1))$ and
$\rank(\bS(i-1,j-1))\neq \rank(\bS(i,j))$. Clearly the total amount of discrepancies in $\bS$ is $\rank(\bS)=wt(\be)$.

\subsection{Computing unknown syndromes}

Assume that $s_1,\dots, s_l$ are known and $s_{l+1}$ is the smallest unknown syndrome. Let $(i,j)\in N_l$. The well-behaving property implies that for each $(r,t)\prec (i,j)$ we have $\rho_{\cald}(\bh_r*\bh_t)< \rho_{\cald}(\bh_i*\bh_j)=l+1$. Then there exist $\lambda_1,\dots,\lambda_l$ such that $\bh_r*\bh_t=\lambda_1\bh_1+\dots+\lambda_l\bh_l$ and $s_{rt}=\lambda_1s_1+\dots+\lambda_ls_l$. Thus the matrices $\bS(i-1,j-1), \bS(i-1,j)$ and $\bS(i-1,j-1)$ are known. If these three matrices have equal rank, then $(i,j)$ is called a {\em candidate}. Let $K$ be the number of discrepancies in the known part of $\bS$.  If $(r,t)$ is a known discrepancy, then all entries $(r,t')$ and $(r',t)$ with $r'>r, t'>t$ are noncandidates. Conversely, if $(i,j)\in N_l$ is not a candidate then there exists a known discrepancy in its same row or column.  Thus the number of pairs $(i,j)\in N_l$ which are not candidates is at most $2K$. 
If $wt(\be)\le (\#N_l-1)/2$, then
\begin{equation*}
\mbox{number of candidates} \ge \#N_l-2K \ge \# N_l-2wt(\be)>0
\end{equation*}
and there always exist candidates. Let $(i,j)$ be one of them. There is a unique value $s'_{ij}$ of entry $(i,j)$ such that $\rank(\bS(i-1,j-1))=\rank(\bS(i,j))$. The candidate $(i,j)$ is called {\em true} if $s'_{ij}=s_{ij}$  and {\em false} if $s'_{ij}\neq s_{ij}$. Since $s_{l+1}$ is unknown, then so is $s_{ij}$ and we cannot check in advance whether a candidate is true or false. However, a candidate $(i,j)$ is false if and only if it is a discrepancy, hence there are at most $wt(\be)$ false candidates in $\bS$. As $wt (\be)$ is 'small', most candidates will be true. Let us formalize this idea.

Let $T$ and $F$ be respectively the number of true and false candidates in $N_l$.  Since a false candidate is a discrepancy and  the total number of discrepancies is $wt(\be)$, we have $K+F\le wt(\be)\le (\# N_l-1)/2$. Combining this inequality with
\begin{equation*}
\# N_l=\# \mbox{candidates}+\# \mbox{noncandidates} \le (T+F)+2K
\end{equation*}
we obtain $F<T$ and the majority of candidates are true.

For each candidate $(i,j)$, compute $s'_{ij}$ and suppose $s_{ij}=s'_{ij}$. This assumption leads to a predicted value $s'_{l+1}$ of $s_{l+1}$ as above: since $\rho_{\mathcal D}(\bh_i*\bh_j)=l+1$, we can write $\bh_i*\bh_j=\lambda_1\bh_1+\cdots+\lambda_{l+1}\bh_{l+1}$ with $\lambda_{l+1}\neq 0$. Then $s_{ij}=\lambda_1s_1+\cdots+\lambda_{l+1} s_{l+1}$. Define the {\em vote} of  $(i,j)$ as $s'_{l+1}=\lambda_{l+1}^{-1}(s'_{ij}-\lambda_1s_1-\cdots-\lambda_{l} s_{l})$. 

Compute the votes of all candidates $(i,j)\in N_l$. Since the majority of candidates are true, we can derive the correct value of $s_{l+1}$ as the most voted among all candidates.

Once this value is known we proceed to the next unknown syndrome. If $wt(\be)\le (\delta-1)/2$ then $wt(\be)\le (\# N_l-1)/2$ for all $l=n-k,\dots,n-1$ and all syndromes $s_{n-k+1},\dots,s_{n}$ can be computed.  Assuming that all these sets $N_l$ have been precomputed, the complexity of this algorithm is that of solving a linear system of $n$ equations in $n$ unknowns, that is $O(n^3)$.

\subsection{Bibliographical notes}

The idea of using majority voting for unknown syndromes is due to G.L. Feng and T.N.T. Rao \cite{FR} and  I. Duursma, \cite{Duu1}.  The original algorithm was designed for duals of primary AG codes. A full and nice description for duals of codes coming from order domains can be found in \cite{HLP}. A generalization to a broad class of codes, including primary codes, was done in \cite{f-rdecoding}.  Our presentation is a mixture of these two works.

Decoding AG codes is a very active area of ​​research today. General AG codes $C(\calx,D,G)$ can be decoded by several methods.  Here  we just cite the nice report \cite{BH} by Beelen and H\o holdt, which is close to the ideas presented in this chapter.

\subsection{An example}

Let us consider the Hermitian curve $\calh:y^2+y=x^3$ defined over the field $\mathbb{F}_4=\{0,1,\alpha,\alpha^2\}$, where $1+\alpha=\alpha^2$. $\calh$ has genus 1 and nine rational points, namely $Q=(0:1:0)$ and the eight affine points
\begin{equation*}
\begin{array}{llll}
P_1=(0,0),& P_3=(1,\alpha),&   P_5=(\alpha,\alpha),& P_7=(\alpha^2,\alpha), \\ 
P_2=(0,1),& P_4=(1,\alpha^2),& P_6=(\alpha,\alpha^2),& P_8=(\alpha^2,\alpha^2).
\end{array}
\end{equation*}
Let $\calp=\{P_1,\dots,P_8\}$ and
consider the codes $C(\calh,D,mQ)$, $m=0,\dots,9$. The Weierstrass semigroup of $Q$ is $H=\langle 2,3\rangle=\{ 0,2,3,\rightarrow\}$, and the dimension set is $M=\{ 0,2,3,4,5,6,7,9\}$. Then,  a basis $\mathcal B$ of ${\mathbb F}_4^8$ is then given by the vectors
\begin{equation*}
\begin{array}{lccccccccccr}
{\mathbf b}_1 &=&ev_\calp(1)     &=&(1,&1,&1,     &1,       &1,       &1,       &1,       &1) \\
{\mathbf b}_2 &=&ev_\calp(x)     &=&(0,&0,&1,     &1,       &\alpha,  &\alpha,  &\alpha^2,&\alpha^2) \\
{\mathbf b}_3 &=&ev_\calp(y)     &=&(0,&1,&\alpha,&\alpha^2,&\alpha,  &\alpha^2,&\alpha,  &\alpha^2)\\
{\mathbf b}_4 &=&ev_\calp(x^2)   &=&(0,&0,&1,     &1,       &\alpha^2,&\alpha^2,&\alpha,  &\alpha) \\
{\mathbf b}_5 &=&ev_\calp(x y)   &=&(0,&0,&\alpha,&\alpha^2,&\alpha^2,&1,       &1,       &\alpha) \\
{\mathbf b}_6 &=&ev_\calp(x^3)   &=&(0,&0,&1,     &1,       &1,       &1,       &1,       &1) \\
{\mathbf b}_7 &=&ev_\calp(x^2 y) &=&(0,&0,&\alpha,&\alpha^2,&1,       &\alpha,  &\alpha^2,&1) \\
{\mathbf b}_8 &=&ev_\calp(x^3 y) &=&(0,&0,&\alpha,&\alpha^2,&\alpha,  &\alpha^2,&\alpha,  &\alpha^2)\\
\end{array}
\end{equation*}
In view of the duality property of Hermitian codes we can take ${\mathcal D}={\mathcal B}$. Consider the code $\cod=C(\calh,D,3Q)$ of dimension 3. A direct computation gives
\begin{eqnarray*}
\Lambda^*_1&=\{ (1, 1), (1, 2), (1, 3), (1, 4), (1, 5), (1, 6), (1, 7), (1, 8) \} \\
\Lambda^*_2&=\{ (2, 1), (2, 2), (2, 3), (2, 4), (2, 5), (2, 7) \} \\
\Lambda^*_3&=\{ (3, 1), (3, 2), (3, 3), (3, 4), (3, 6) \} \\
N^*_5&=\{      (1, 6), (2, 4), (3, 3), (4, 2), (6, 1) \} \\
N^*_6&=\{      (1, 7), (2, 5), (3, 4), (4, 3), (5, 2), (7, 1) \} \\
N^*_7&=\{      (1, 8), (2, 7), (3, 6), (4, 5), (5, 4), (6, 3), (7, 2), (8, 1) \} 
\end{eqnarray*}
hence both, the order and dual order bounds, ensure $d(\cod)\ge 5$, which is the true minimum distance of $\cod$ according to Example \ref{ExHermitian}. 
Then it can correct up to 2 errors.

Since $k=3$, the code $\cod$ allows us to encode $3$-tuples $\bz\in {\mathbb F}_4^3$ by $8$-tuples $\bc\in\cod$. Suppose we want to transmit the message $\bz=(1,1,1)$. It is encoded as $\bc=1 \bb_1+1\bb_2+1 \bb_3=(1,0,\alpha,\alpha^2,1,0,0,1)$. Suppose we receive the word 
$\bu=(0,0,\alpha,1,1,0,0,1)$ with error $\be=(1,0,0,\alpha,0,0,0,0)$.
To decode $\bc$
we first compute the known one-dimensional syndromes of $\be$
\begin{equation*}
s_1=\bb_1\cdot \be=\alpha^2, \;
s_2=\bb_2\cdot \be=\alpha , \;
s_3=\bb_3\cdot \be=1 , \;
s_4=\bb_4\cdot \be=\alpha , \;
s_5=\bb_5\cdot \be=1.
\end{equation*}
The smallest unknown syndrome is $s_6$. Using the information given by $s_1,\dots,s_5$ and $N^*_5$, the known part of $\bS$ is
\begin{equation*}
\bS=
\left[ 
\begin{array}{cccccccc}
\alpha^2 & \alpha & 1    & \alpha & 1        & * &    &   \\ 
\alpha   & \alpha & 1    & *      &          &   &    &  \\
1        & 1      & *    &        &          &   &    &  \\ 
\alpha   & *      &      &        &          &   &    &  \\
1        &        &      &        &          &   &    & \\
*        &        &      &        &          &   &    &  \\ 
         &        &      &        &          &   &    & \\
         &        &      &        &          &   &    & 
\end{array}
\right]
\end{equation*}
where the entries in $N^*_5$ are marked with $*$. Since $\rank (\bS(2,2))=2$ there is a unique candidate: $(3,3)$. As $s'_{3,3}=\alpha^2$ and $\bb_3*\bb_3=\bb_3+\bb_6$,  it votes for $s_6=s'_{3,3}-s_3=\alpha^2+1=\alpha$. 

Once this syndrome is known let us compute $s_7$. We first update the matrix
\begin{equation*}
\bS=
\left[ 
\begin{array}{cccccccc}
\alpha^2 & \alpha & 1        & \alpha & 1   & \alpha & * &  \\ 
\alpha   & \alpha & 1        & \alpha & *   &        &   &  \\
1        & 1      & \alpha^2 & *      &     &        &   &  \\ 
\alpha   & \alpha & *        &        &     &        &   &  \\
1        & *      &          &        &     &        &   &  \\
\alpha   &        &          &        &     &        &   &  \\ 
*        &        &          &        &     &        &   &  \\ 
         &        &          &        &     &        &   & 
\end{array}  
\right].
\end{equation*}
As above, the entries in $N^*_6$ are marked with $*$.
Candidates are $(3, 4)$ and  $(4, 3)$. A simple computation gives $s'_{3,4}=1, s'_{4,3}=1$, and both vote for $s_7=1$. 
Let us compute $s_8$. The  current form of $\bS$ is
\begin{equation*}
\bS=
\left[ 
\begin{array}{cccccccc}
\alpha^2 & \alpha & 1        & \alpha & 1        & \alpha & 1   & *  \\ 
\alpha   & \alpha & 1        & \alpha & 1        & \alpha & *   &    \\
1        & 1      & \alpha^2 & 1      & \alpha^2 & *      &     &    \\ 
\alpha   & \alpha & 1        & \alpha & *        &        &     &    \\
1        & 1      & \alpha^2 & *      &          &        &     &    \\
\alpha   & \alpha & *        &        &          &        &     &    \\ 
1        & *      &          &        &          &        &     &    \\ 
*        &        &          &        &          &        &     & 
\end{array}
\right].
\end{equation*}
Candidates are $ (3, 6), (4, 5), (5, 4)$ and $(6, 3)$. We get 
$ s'_{3, 6}=1, s'_{4, 5}=1, s'_{5, 4}=1, s'_{6, 3}=1$. All of them vote for $s_8=1$.

Once all one-dimensional syndromes are known, we deduce the error vector $\be$ by solving the system
$s_1=\bb_1\cdot \be, \dots, s_n=\bb_n\cdot \be$. In our case, as expected, $\be=(1,0,0,\alpha,0,0,0,0)$, hence 
$\bc=\bu-\be=(0,0,\alpha,1,1,0,0,1)-(1,0,0,\alpha,0,0,0,0)=(1,0,\alpha,\alpha^2,1,0,0,1)$. Finally we write $\bc$ as a linear combination of $\bb_1,\bb_2,\bb_3$, obtaining $\bc= \bb_1+\bb_2+ \bb_3$. The original message was $\bz=(1,1,1)$.

\bibliographystyle{amsplain}

\begin{thebibliography}{99}

\bibitem{AG} 
H. Andersen and O. Geil, 
\textit{Evaluation codes from order domain theory}, 
Finite Fields and their Applications \textbf{14} (2008), pp. 92--123.

\bibitem{BM}
A. Barbero, C. Munuera,
\textit{The Weight Hierarchy of Hermitian Codes},
SIAM Journal on Discrete Mathematics \textbf{13} (2000), pp. 79-104.

\bibitem{beelen}
P. Beelen,
\textit{The order bound for general algebraic geometric codes},
Finite Fields and Application {\bf 13} (2007), pp. 665--680.

\bibitem{BH}
P. Beelen and T. H\o holdt, 
\textit{The decoding of algebraic geometry codes}. 
In E. Mart\'{\i}nez-Moro, C. Munuera, D. Ruano (Eds.), Advances in algebraic geometry codes, 
World Scientific, Singapore, 2008, pp. 49--98.

\bibitem{npcompleto}
E.R. Berlekamp, R.J. McEliece and H. van Tilborg,
\textit{On the Inherent Intractability of Certain Coding Problems},
IEEE Transactions on Information Theory \textbf{24(3)} (1978), pp. 384--386.

\bibitem{Bulygin}
S. V. Bulygin,
\textit{Generalized Hermitian codes over $GF(2^r)$},
IEEE Transactions on Information Theory \textbf{52}  (2006), pp. 4664--4669.

\bibitem{A1}
C. Carvalho and Fernando Torres,
\textit{On Goppa codes and Weierstrass gaps at several points}, 
Designs, Codes and Cryptography \textbf{35}  (2005), pp. 211--225.

\bibitem{Duu1}
I.M. Duursma, 
\textit{Majority coset decoding}, 
IEEE Transactions on Information Theory \textbf{39}  (1993), pp. 1067--1071.

\bibitem{Du1} 
I. Duursma and R. Kirov
\textit{An extension of the order bound for AG codes}.
In Applied Algebra, Algebraic algorithms and error-correcting codes, pp. 11-22, LNCS 5527, Springer, 2009.

\bibitem{Du2} 
I. Duursma and S. Park,
\textit{\em Coset bounds for algebraic geometric codes},
Finite Fields Applications \textbf{16} (2010), pp. 36--55.

\bibitem{Du3}
I. Duursma, R. Kirov and S. Park,
\textit{Distance bounds for algebraic geometric codes},
J. Pure and Applied Algebra \textbf{215} (2011), pp. 1863–1878.
arXiv{1001.1374}, 2010.

\bibitem{FR}
G.L. Feng and T.R.N. Rao, 
\textit{Decoding of algebraic geometric codes up to the designed minimum distance}, 
IEEE Transactions on Information Theory \textbf{39}  (1993), pp. 37--45.

\bibitem{FengRao}
G.L. Feng and T.N.T. Rao, 
\textit{Improved geometric Goppa codes. Part I: Basic Theory}, 
IEEE Transactions on Information Theory \textbf{41}  (1995), pp. 1678--1693.

\bibitem{FL}
J. Fitzgerald and R.F. Lax,
\textit{Decoding affine variety codes using Gr\"obner basis},
Designs, Codes and Cryptography \textbf{13(2)} (1998), pp. 147--158.

\bibitem{fulton}
W. Fulton,
\textit{Algebraic Curves}.
Benjamin, New York, 1969.

\bibitem{Suz1}
R. Fuhrmann and F. Torres, 
\textit{On Weierstrass points and optimal curves}, 
Suplemento ai Rendiconti del Circolo Matematico di Palermo \textbf{51} (1998), pp. 25--46.

\bibitem{GS}
A. Garcia and H. Stichtenoth,
\textit{\em A class of polynomials over finite fields},
Finite Fields and  Applications \textbf{5} (1999), pp. 424--435.

\bibitem{NormTrace}
O. Geil,
\textit{On codes from norm-trace curves},
Finite Fields and Their Applications \textbf{9(3)} (2003), pp. 351--371.

\bibitem{affine}
O. Geil, 
\textit{Evaluation codes from an affine variety code perspective}. 
In E. Mart\'{\i}nez-Moro, C. Munuera, D. Ruano (Eds.), Advances in algebraic geometry codes, 
World Scientific, Singapore, 2008, pp. 153--180.

\bibitem{GeilMatsumoto}
O. Geil and R. Matsumoto,
\textit{Bounding the number of $\fq$-rational places in algebraic function fields using Weierstrass semigroups},
J. Pure and Applied Algebra \textbf{213(6)} (2009), pp. 1152--1156.

\bibitem{f-rdecoding}
O. Geil, R. Matsumoto and D. Ruano,
\textit{Feng-Rao decoding of primary codes},
Finite Fields and their Applications \textbf{23} (2013),  pp. 35--52. 

\bibitem{GMRT}
O. Geil, C. Munuera, D. Ruano and F. Torres,
\textit{On the order bounds for one-point AG codes},
Advances in Mathematics of Communications \textbf{3} (2011), pp. 489--504.

\bibitem{goppa1}
V.D. Goppa,
\textit{Codes Associated with Divisors}, 
Problemy Peredachi Informatsii  \textbf{13(1)} (1977), pp. 33--39.

\bibitem{goppa2}
V.D. Goppa,
\textit{Algebraico-Geometric Codes}, 
Mathematics of the USSR-Izvestiya \textbf{21} (1983), pp. 75--91.

\bibitem{klein}
J.P. Hansen,
\textit{Codes on the Klein Quartic, Ideals and Decoding},
IEEE Transactions on Information Theory \textbf{33(6)} (1987), pp. 923--925.

\bibitem{Suz2}
J. P. Hansen and H. Stichtenoth, 
\textit{Group codes on certain algebraic curves with many rational points}, 
Applicable Algebra in Engineering, Communication and Computing \textbf{1} (1990), pp. 67--77.

\bibitem{HLP}
T. H\o holdt, J.H. van Lint and R. Pellikaan, 
\textit{Algebraic-Geometry codes}. 
In V.S. Pless and W.C. Huffman (Eds.), Handbook of Coding Theory, vol. 1, Elsevier, Amsterdam, 1998.
Corrected version available online at http://www.tue.nl/$\sim$ruudp/paper/31.pdf

\bibitem{A3}
M. Homma and S. J. Kim, 
\textit{Goppa codes with Weierstrass pairs}, 
J. Pure and  Applied Algebra \textbf{162} (2001), pp. 273--290.

\bibitem{Lewittes}
J. Lewittes,
\textit{Places of degree one in function fields over finite fields}
J. Pure and Applied Algebra \textbf{69} (1990), pp. 177-183--1156.

\bibitem{MacSlo}
F.J. MacWilliams and N. Sloane,
\textit{The theory of error-correcting codes}.
North-Holland, Amsterdam, 1977.

\bibitem{MathewsH}
G.L. Matthews,
\textit{Weierstrass pairs and minimum distance of Goppa codes},
Designs, Codes and Cryptography \textbf{22} (2001), pp. 107--121.

\bibitem{Mathews}
G.L. Matthews,
\textit{Codes from the Suzuki Function Field},
IEEE Transactions on Information Theory \textbf{50(12)} (2004), pp. 3298--3302.

\bibitem{A2}
G.L. Matthews and T. Michel, 
\textit{One-point codes using places of higher degree}, 
IEEE Transactions on Information Theory \textbf{51} (2005),  pp. 1590--1593.

\bibitem{mint}
MinT. 
\textit{Online database for optimal parameters of $(t, m, s)$-nets, $(t, s)$-sequences, orthogonal arrays, linear codes, and OOAs.}
Available at http://mint.sbg.ac.at/

\bibitem{Miura}
R. Matsumoto and S. Miura, 
\textit{On the Feng-Rao bound for the L-construction of Algebraic-Geometry codes}. 
IEICE Transactions on Fundamentals, \textbf{5} (2000), pp. 923-927.

\bibitem{MunPel}
C. Munuera and R. Pellikaan, 
\textit{Equality of geometric Goppa codes and equivalence of divisors},
Journal of Pure and Applied Algebra \textbf{90} (1993), pp. 229--252.

\bibitem{MTtrellis}
C. Munuera and F. Torres, 
\textit{Bounding the trellis state complexity of algebraic geometric codes},
Applicable Algebra in  Engineering, Communication and Computing \textbf{15} (2004), pp. 81--100.

\bibitem{castle1}
C. Munuera,  A.  Sep\'ulveda and F. Torres, 
\textit{Algebraic Geometry codes from Castle curves},
In Coding Theory and Applications,  LNCS 5228, pp. 117--127, Springer-Verlag, Berlin, 2008. 

\bibitem{castle2}
C. Munuera,  A.  Sep\'ulveda and F. Torres, 
\textit{Castle curves and codes},
Advances in Mathematics of  Communication  \textbf{3}  (2009),  pp. 399--408.

\bibitem{Sepulveda}
C. Munuera, A. Sep\'ulveda and F. Torres,
\textit{Generalized Hermitian codes},
Designs, Codes and Cryptography \textbf{69} (2013), pp. 123--130.

\bibitem{NormTrace2points}
C. Munuera, G. Tizziotti and F. Torres, 
\textit{Two-point codes on Norm-Trace curves}.
In Coding Theory and Applications,  LNCS 5228, pp. 128--136, Springer-Verlag, Berlin, 2008. 

\bibitem{olaya}
W. Olaya-Le\'on and C. Munuera, 
\textit{On the minimum distance of Castle codes}, 
Finite Fields and  Applications  \textbf{20}  (2013), pp. 55--63.

\bibitem{gonalityplane}
R. Pellikaan, 
\textit{On special divisors and the two variable zeta function of algebraic curves over finite fields}. 
In Arithmetic, geometry and coding theory, pp. 175-184,  de Gruyter, Berlin, 1996.

\bibitem{StichHermitian}
H. Stichtenoth,
\textit{A note on Hermitian codes over $\mathbb{F}_{q^2}$},
IEEE Transactions on Information Theory \textbf{34(5)} (1988), pp. 1346--1348.

\bibitem{StichtenothLibro}
H. Stichtenoth,
\textit{Algebraic Functions Fields and Codes}.
Springer-Verlag, Berlin, 1993.

\bibitem{TVZ}
M.A. Tsfasman, S.G. Vladuts and T. Zink, 
\textit{Modular curves, Shimura curves and Goppa codes better than Varshamov-Gilbert bound}, 
Mathematische Nachrichten  \textbf{109} (1982), pp. 21--28.

\bibitem{YK}
K. Yang and P.V. Kumar,
\textit{On the true minimum distances of Hermitian codes}.
In Coding Theory and Algebraic Geometry, LNCS 1518, pp. 99-107,  Springer-Verlag, Berlin, 1992.


\end{thebibliography}

\end{document}